\newcounter{ag}
\newcommand{\ag}[1]{\textcolor{red}{AG-\arabic{ag}}\blankfootnote{\textcolor{red}{AG-\arabic{ag}: #1}}\addtocounter{ag}{1}}
\newcounter{ab}
\newcounter{sc}
\newcounter{ks}
\newcommand{\ks}[1]{\textcolor{teal}{KS-\arabic{ks}}\blankfootnote{\textcolor{teal}{KS-\arabic{ks}: #1}}\addtocounter{ks}{1}}
\newcounter{zk}
\newcommand{\data}{\ensuremath{\textbf{x}}\xspace}
\newcommand{\Data}{\ensuremath{\textbf{X}}\xspace}
\newcommand{\lap}{\ensuremath{\textsf{Lap}}\xspace}
\newcommand{\MW}{\ensuremath{\sf{MW_{\textup{stat}}}}\xspace}
\newcommand{\privMW}{\ensuremath{\sf{\widetilde{MW}}_{\textup{stat}}}\xspace}
\newcommand{\privMWP}{\ensuremath{\sf{\widetilde{MW}_{\textup{p}}}}\xspace}
\newcommand{\statabs}{\ensuremath{h_{abs}\xspace}}
\newcommand{\statsqu}{\ensuremath{h\xspace}}
\newcommand{\privsqu}{\ensuremath{\widetilde{h}\xspace}}
\newcommand{\KW}{\ensuremath{\sf{KW_{\textup{stat}}}}\xspace}
\newcommand{\KWA}{\ensuremath{\sf{KWabs_{\textup{stat}}}}\xspace}
\newcommand{\privKW}{\ensuremath{\sf{\widetilde{KW}_{\textup{stat}}}}\xspace}
\newcommand{\privKWA}{\ensuremath{\sf{\widetilde{KWabs}_{\textup{stat}}}}\xspace}
\newcommand{\privKWP}{\ensuremath{\sf{\widetilde{KW}_{\textup{p}}}}\xspace}
\newcommand{\privKWPA}{\ensuremath{\sf{\widetilde{KWabs}_{\textup{p}}}}\xspace}
\newcommand{\W}{\ensuremath{\sf{W_{\textup{stat}}}}\xspace}
\newcommand{\privW}{\ensuremath{\sf{\widetilde{W}_{\textup{stat}}}}\xspace}
\newcommand{\WP}{\ensuremath{\sf{WP}_{\textup{stat}}}\xspace}
\newcommand{\privWP}{\ensuremath{\sf{\widetilde{WP}_{\textup{stat}}}} \xspace}
\newcommand{\privWPP}{\ensuremath{\sf{\widetilde{WP}_{\textup{p}}}}\xspace}
\newcommand{\T}{\ensuremath{\sf{T_{\textup{stat}}}}\xspace}
\newcommand{\privT}{\ensuremath{\sf{\widetilde{T}_{\textup{stat}}}}\xspace}
\newcommand{\privTP}{\ensuremath{\sf{\widetilde{T}_{\textup{p}}}}\xspace}
\newcommand{\algrule}[1][.8pt]{\par\vskip.1\baselineskip\hrule height #1\par\vskip.1\baselineskip}
 \let\svthefootnote\thefootnote
\newcommand\blankfootnote[1]{%
  \let\thefootnote\relax\footnotetext{#1}%
  \let\thefootnote\svthefootnote%
}
\newtheorem{theorem}{Theorem}[section]
\newtheorem{lemma}[theorem]{Lemma}
\theoremstyle{definition}
\newtheorem{definition}[theorem]{Definition}
\newtheorem{corollary}[theorem]{Corollary}
\newtheorem*{theorem*}{Theorem}
\titleformat*{\section}{\LARGE\bfseries}
\titleformat*{\subsection}{\large\bfseries}
\titleformat*{\subsubsection}{\normalsize\bfseries}
\titleformat*{\paragraph}{\normalsize\bfseries}
\titleformat*{\subparagraph}{\normalsize\bfseries}
\begin{document}

\title{\LARGE \textbf{Differentially Private Nonparametric Hypothesis Testing}}

\author{\begin{tabular}{c c c c c}
Simon Couch & & Zeki Kazan & & Kaiyan Shi \\
\textit{\small couchs@reed.edu} & & \textit{\small kazanz@reed.edu} & & \textit{\small kaishi@reed.edu} \\
& Andrew Bray &  & Adam Groce & \\
& \textit{\small abray@reed.edu} & & \textit{\small agroce@reed.edu} &  \\[10pt]
\multicolumn{5}{c}{Reed College Mathematics Department} \\
\multicolumn{5}{c}{Portland, OR, USA} \\
\end{tabular}}

\date{}

\maketitle

\section*{Abstract}
Hypothesis tests are a crucial statistical tool for data mining and are the workhorse of scientific research in many fields.  Here we study differentially private tests of independence between a categorical and a continuous variable.  We take as our starting point traditional nonparametric tests, which require no distributional assumption (e.g., normality) about the data distribution.  We present private analogues of the Kruskal-Wallis, Mann-Whitney, and Wilcoxon signed-rank tests, as well as the parametric one-sample t-test.  These tests use novel test statistics developed specifically for the private setting.  We compare our tests to prior work, both on parametric and nonparametric tests.  We find that in all cases our new nonparametric tests achieve large improvements in statistical power, even when the assumptions of parametric tests are met.  

\section{Introduction}\label{sec:intro}

In 2011, researchers in Switzerland began an investigation into the link between methylation levels of a given gene and the occurance of schizophrenia and bipolar disorder\cite{carrard2011}. They recruited patients that suffered from psychosis as well as healthy controls and measured the level of methylation of the gene in each individual. They found that levels in the groups suffering from psychosis were higher than those in the healthy group. In order to rule out the possibility that this difference was due to sampling variability, they relied upon a suite of nonparameteric hypothesis tests to establish that this link likely exists in the general population.

While the results of these tests were published in an academic journal, the data itself is unavailable to preserve the privacy of the patients. This is a necessary consideration when working with sensitive data, but it hampers scientific reproducibility and the extension of this work by other researchers.

Our goal in this paper is to provide nonparametric hypothesis tests that satisfy differential privacy. The difficulty of developing a private test comes not just from the need to privately approximate a test statistic, but also from the need for an accurate reference distribution that will produce valid p-values. It is not sufficient to treat the approximate test statistic as equivalent to its non-private counterpart.

In this paper, we present several new hypothesis tests.  In all cases we are considering data sets with a categorical explanatory variable (e.g., membership in the schizophrenic, bipolar, or control group) and a continuous dependent variable (e.g., methylation level). Our goal when designing a hypothesis test is to maximize the \textit{statistical power} of the test, or equivalently to minimize the amount of data needed to detect a particular effect.

In the traditional public setting, there are two families of tests for these scenarios.  The more commonly used are \textit{parametric} tests that assume that within each group, the continuous variable follows a particular distribution (usually Gaussian).  An alternative to these tests are \textit{nonparametric} tests, which make no distributional assumption but, in exchange, have slightly lower power. Nonparametric tests generally rely upon substituting in, for each data point, the rank of the continuous variable relative to the rest of the sample.  The test statistic is then a function of these ranks rather than the original values.

The private hypothesis tests we propose all use rank-based test statistics. Our overarching argument in this paper, beyond the individual value of each of the tests we introduce, is that in the private setting these rank-based test statistics are \emph{more powerful} than the traditional parametric alternatives.  This is contrary to the public setting, where the parametric tests (when their assumptions are met) perform better.

Our second, broader point is that as a research community we need to support the development of hypothesis tests specifically tailored to the private setting.  Our private test statistics are not simply approximations of traditional test statistics from the public setting and as a result, we find that they can require an order of magnitude less data.  Current tests used by statisticians in the public setting have been refined through decades of incremental improvement, and the same sort of development needs to happen in the private setting.

\subsection{Our contributions}

We introduce several new private hypothesis tests that mirror the three most commonly used rank-based tests. In one setting, this is a private analog of the traditional public test statistic, but for the remaining two settings it is a new statistic developed specifically for the private setting.  The privacy of these statistics generally follows from non-trivial but reasonably straightforward applications of the Laplace mechanism.  Our main contribution is not the method of achieving privacy but the construction of novel private hypothesis tests with high statistical power.

There are two components to the construction of each hypothesis test. The first is the creation of a test statistic to capture the effect of interest while remaining provably private. The second is a method to learn the distribution of the statistic under the null hypothesis in order to compute p-values, which are the object of primary interest to researchers.

In particular, we develop tests for the following cases:

\paragraph{Three or more groups}  When the categorical variable divides the sample into three or more groups, the traditional public test is the one-way analysis of variance (ANOVA) in the parametric case and Kruskal-Wallace in the nonparametric case.  ANOVA has been previously studied by Campbell et al.~\cite{ANOVA} and then by Swanberg et al.~\cite{new_anova}, who improved the power by an order of magnitude.  We give the first private nonparametric test by modifying the rank-based Kruskal-Wallace test statistic for the private setting.  We provide experimental evidence that this statistic has dramatically higher power than a simple privatized version of the public Kruskal-Wallace statistic.  Moreover, we provide evidence that even in the parametric setting (i.e., when the data is normally distributed) the private rank-based statistic outperforms the private ANOVA test, in one representative case requiring only 23\% as much data to reach the same power.

\paragraph{Two groups}  In the two group setting, the most common public nonparametric test is the Mann-Whitney test.  We provide an algorithm to release an approximation of this statistic under differential privacy and a second algorithm to conduct the test and release a valid p-value.

In the public setting, the analogous parametric test is the two sample $t$-test which is equivalent to an ANOVA test when there are two groups.  We compare therefore to the private ANOVA test and show experimental evidence that our Mann-Whitney analog has significantly higher power.


\paragraph{Paired data}  We also consider the case where the two sets of data are in correspondence with each other (e.g., before-and-after measurements).  The nonparametric test in this case is the Wilcoxon signed-rank test, and it is the only nonparametric test that has previously been studied in the private setting\cite{DPWilcoxon}. We provide two improvements to the prior work.  First, we change the underlying statistic to the less well-known Pratt variant, which we find conforms more easily to the addition of noise.  Second, we show that our simulation method for computing reference distributions is more precise than the upper bounds used in the prior work (which contained an error which we identify and correct). The result is a significant improvement in the power of the test over the the prior work.

In parallel with the previous two scenarios, we then compare our private nonparametric test with a private version of the analogous parametric test, the paired $t$-test.  A direct private implementation of this test does not exist in the literature, so we propose one here.\footnote{In simultaneous work Gaboardi et al.~\cite{gaboardi2018locally} propose such a test, but our test is still higher power.  See Section \ref{sec:ttestexp} for more details.} In alignment with the previous results, we show experimental evidence that the rank-based test has superior power.

\medskip \noindent
For all our tests we give not just private test statistics but also precise methods of computing a reference distribution and a p-value, the final output practitioners actually need.  We also experimentally verify that the probability of Type 1 errors (incorrectly rejecting the null hypothesis) is acceptably low. We give careful power analyses and use these to compare tests to each other.  All our tests and experiments are implemented with publicly available code.\footnote{Our source code is available at: \textit{github.com/simonpcouch/non-pm-dpht}}

We find that rank-based statistics are very amenable to the private setting.  We also repeatedly find that what is optimal in the public setting is no longer optimal in the private setting.  We hope that this work contributes to the development of a standard set of powerful hypothesis tests that can be used by scientists to enable inferential analysis while protecting privacy.

\section{Background} \label{bkg}

In this section, we begin by discussing hypothesis testing in general and outlining the formalities of differential privacy. We then discuss the difficulties of hypothesis testing within the privacy framework and the previous work done in this area.  Each of our main results requires a more detailed discussion of prior work on that particular test or use case; we leave those discussions for Sections \ref{sec:kw}-\ref{sec:wc}.

\subsection{Hypothesis Testing}

The key inferential leap that is made in hypothesis testing is the claim that not only is the \emph{sample} of data incompatible with a particular scientific theory, but that it is evidence that this incompatible holds in a broader \emph{population}. In the study on psychotic disease, the researchers used this technique to generalize from their 165 subjects to the population of causasian-descended Swiss. The scientific theory that they refuted, that there is no link between methylation levels and psychosis, is called the \textit{null hypothesis} ($H_0$). 

To test whether or not the data is consistent with $H_0$, a researcher computes a \textit{test statistic}. The choice of a function $f$ to compute the test statistic largely determines the hypothesis test being used. For a random database $\Data$ drawn according to $H_0$, the distribution of the statistic $T = f(\Data)$ can be determined either analytically or through simulation. The researcher then computes a \textit{p-value}, the probability that the observed test statistic or more extreme would occur under $H_0$.

\begin{definition}
For a given test statistic $t=f(\data)$ and null hypothesis $H_0$, the \emph{p-value} is defined as
$$\Pr[T \geq t \mid T = f(\Data) \text{ and } \Data \leftarrow H_0] = p.$$
\end{definition}

If the function $f$ is well-chosen, then the more the underlying distribution of $\Data$ differs from the distribution under $H_0$, the more likely a low p-value becomes. Typically a threshold value $\alpha$ is chosen, such that we reject $H_0$ as a plausible explanation of the data when $p < \alpha$. The choice of $\alpha$ determines the \textit{type I error rate}, the probability of incorrectly rejecting a true null hypothesis. 

We define the \textit{critical value} $t^*$ to be the value of the test statistic $t$ where $p = \alpha$. We use this to define the \textit{statistical power}, a measure of how likely a hypothesis test is to pick up on a given effect (i.e. the chance of rejecting when the null hypothesis is false). The power is a function of how much the underlying distribution of $\Data$ differs from the distribution under $H_0$ as well as the size of the database.

\begin{definition}
For a given alternate data distribution $H_A$, the \emph{statistical power} of a hypothesis test is 
$$\Pr[T \geq t^* \mid T = f(\Data) \text{ and } \Data \leftarrow H_A].$$
\end{definition}

Statistical power is the accepted metric by which the statistics community judges the usefulness of a hypothesis test. It provides a common scale upon which to evaluate different tests for the same use case.

\subsection{Differential Privacy}

To persuade people to allow their personal data to be collected, data owners must protect information about specific individuals. Historically, ad-hoc database anonymization techniques have been used (i.e. changing names to numeric IDs, rounding geospatial coordinates to the nearest block, etc.), but these methods have repeatedly been shown to be ineffective \cite{anonyfail1, anonyfail2, anonyfail3}.

Differential privacy, proposed by Dwork et al. in 2006 \cite{DPDefinitions}, is a mathematically robust definition of privacy preservation, which guarantees that a query does not reveal anything about an individual as a consequence of their presence in the database. When the condition of differential privacy is satisfied, there is not \textit{much} difference between the output obtained from the original database, and that obtained from a database that differs by only one individual's data. Here we present $(\epsilon, \delta)$-differential privacy \cite{DPDefDelta}, which allows the closeness of output distributions to be measured with both a multiplicative and an additive factor.  However, for most of our hypothesis tests $\delta=0$.

\begin{definition}[Differential Privacy] \label{def:dp}
 A randomized algorithm $\tilde{f}$ on databases is $(\epsilon,\delta)$-differentially private if for all $\mathcal{S} \subseteq \textup{Range}(\tilde{f})$ and
for databases $\data, \data'$ that only differ in one row:
$$\Pr[\tilde{f}(\data) \in \mathcal{S}] \leq e^\epsilon \Pr[\tilde{f}(\data') \in \mathcal{S}]+\delta.$$
\end{definition}

We call databases $\data$ and $\data'$ \textit{neighboring} if they differ only in that a single row is altered (but not added or deleted), and we will use this notation in the following sections.\footnote{This is one of two roughly equivalent variants of differential privacy.  The key difference is that under this definition the size of the database is public knowledge.}



Differential privacy, like any acceptable privacy definition, is resistant to post processing. That is, if an algorithm is differentially private, an adversary with no access to the database will be unable to violate such privacy through further analysis (e.g., attempted deanonymization) of the query output \cite{DPDefinitions}.

\begin{theorem}[Post Processing] \label{thm:pp}
 Let $\tilde{f}$ be an $(\epsilon,\delta)$-differentially private randomized
algorithm. Let $g$ be an
arbitrary randomized algorithm. Then $g \circ \tilde{f}$ is $(\epsilon,\delta)$-
differentially private.
\end{theorem}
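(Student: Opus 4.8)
The plan is to first reduce to the case where the post-processing map $g$ is deterministic, and then handle general randomized $g$ by writing it as a mixture of deterministic maps. Fix a measurable set $\mathcal{S} \subseteq \textup{Range}(g \circ \tilde{f})$ and a pair of neighboring databases $\data, \data'$. The goal is to bound $\Pr[g(\tilde{f}(\data)) \in \mathcal{S}]$ in terms of $\Pr[g(\tilde{f}(\data')) \in \mathcal{S}]$.

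First I would treat deterministic $g$. The key observation is that the event $\{g(\tilde{f}(\data)) \in \mathcal{S}\}$ is identical to the event $\{\tilde{f}(\data) \in g^{-1}(\mathcal{S})\}$, where $g^{-1}(\mathcal{S}) = \{y : g(y) \in \mathcal{S}\}$ is a subset of the range of $\tilde{f}$. Since $g^{-1}(\mathcal{S})$ is itself just \emph{some} output set, I can apply the differential privacy guarantee of $\tilde{f}$ directly to it. This yields
$$
\Pr[\tilde{f}(\data) \in g^{-1}(\mathcal{S})] \;\le\; e^{\epsilon}\,\Pr[\tilde{f}(\data') \in g^{-1}(\mathcal{S})] + \delta,
$$
and translating both sides back through $g$ gives exactly the desired inequality for deterministic $g$.

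Next I would handle the randomized case by viewing $g$ as sampling internal random coins $r$ (independent of $\data$ and of $\tilde{f}$'s randomness) and then applying a deterministic map $g_r$. Conditioning on $r$, each $g_r$ is deterministic, so the argument above applies to give the $(\epsilon,\delta)$ bound for every fixed $r$. Writing $\Pr[g(\tilde{f}(\data)) \in \mathcal{S}]$ as the expectation over $r$ of $\Pr[g_r(\tilde{f}(\data)) \in \mathcal{S}]$ and using linearity and monotonicity of expectation, the per-$r$ inequalities integrate to the same $(\epsilon,\delta)$ inequality for $g$ itself, completing the proof.

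The main obstacle is not the deterministic step, which is a clean preimage argument, but making the decomposition of the randomized $g$ into a mixture of deterministic maps rigorous: I must ensure the internal randomness of $g$ is genuinely independent of $\tilde{f}$ and of the input database, so that conditioning on $r$ leaves the distribution of $\tilde{f}(\data)$ unchanged and the per-coin bound is legitimate. In a continuous range a little measure-theoretic care is also needed to justify moving the privacy inequality inside the expectation over $r$, but this is routine once the independence of the coins is established.
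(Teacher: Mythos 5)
Your proof is correct, and it is exactly the canonical argument: for deterministic $g$ the event $\{g(\tilde{f}(\data)) \in \mathcal{S}\}$ coincides with $\{\tilde{f}(\data) \in g^{-1}(\mathcal{S})\}$, to which the $(\epsilon,\delta)$ guarantee applies directly, and the randomized case follows by averaging the per-coin inequality over $g$'s internal randomness, which is independent of the database. Note that the paper itself gives no proof of this theorem---it is stated as background and cited to Dwork et al.---so the relevant comparison is to the standard proof in the literature (e.g., Dwork and Roth's monograph), which your proposal reproduces faithfully, including the correct observation that the additive $\delta$ term passes unchanged through the expectation over the coins.
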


Theorem \ref{thm:pp} has another useful consequence. It allows us to develop private algorithms by first computing some private output, and then carrying out further computation on that output without accessing the database.  The additional computation need not be analyzed carefully---the final output of the additional analysis is automatically known to retain privacy.

Differential privacy requires the introduction of some randomness to any query output. A frequently used method is the \textit{Laplace mechanism}, introduced by Dwork et al.~\cite{DPDefinitions}. When given an arbitrary algorithm $f$ with real-valued output, this mechanism will add some noise drawn from the Laplace distribution to the output of the algorithm and release a noisy output. 

\begin{definition}[Laplace Distribution]\label{def:ld}
 The Laplace Distribution
centered at 0 with scale $b$ is the distribution with probability density function:$$ \lap(x|b)=\frac{1}{2b}\textup{exp}\Big(-\frac{|x|}{b}\Big).$$ We write $\lap(b)$ to denote the Laplace distribution with scale $b$.
\end{definition} 

The scale of the Laplace Distribution used to produce the noisy output depends on the global sensitivity of the given algorithm $f$, which is the maximum change on the output of $f$ that could result from the alteration of a single row.

\begin{definition}[Global sensitivity] \label{def:gs}
The global sensitivity of a function $f$  is:
$$GS_f = \underset{\data,\data'}{ \textup{max}}\ |f(\data)-f(\data')|,$$ where $\data$ and $\data'$ are neighbouring databases. 
\end{definition}

With computed sensitivity $GS_f$ and privacy parameter $\epsilon$, the Laplace mechanism applied to $f$ ensures $(\epsilon,0)$-differential privacy \cite{DPDefinitions}.

\begin{definition}[Laplace Mechanism]  \label{def:lm}
Given any function $f$, the Laplace mechanism is defined as:$$\tilde{f}(\data)= f(\data)+Y,$$ where $Y$ is drawn from $\lap(GS_f/\epsilon)$, and $GS_f$ is the global sensitivity of $f$.
\end{definition}

\begin{theorem} \label{thm:lm}
(Laplace Mechanism) The Laplace mechanism preserves $(\epsilon,0)$ differential privacy.\end{theorem}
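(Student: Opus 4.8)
The plan is to reduce the distributional inequality of Definition \ref{def:dp} to a pointwise bound on probability densities, which is legitimate precisely because $\delta = 0$. Since $f$ has real-valued output, $\tilde{f}(\data) = f(\data) + Y$ with $Y \sim \lap(b)$ and $b = GS_f/\epsilon$ admits a density on $\mathbb{R}$; call it $p_{\data}$. For any measurable $\mathcal{S} \subseteq \textup{Range}(\tilde{f})$ we have $\Pr[\tilde{f}(\data) \in \mathcal{S}] = \int_{\mathcal{S}} p_{\data}(z)\,dz$, so it suffices to establish $p_{\data}(z) \leq e^\epsilon\, p_{\data'}(z)$ for every $z$ and every pair of neighboring databases $\data, \data'$; integrating this inequality over $\mathcal{S}$ then delivers the claim.

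First I would write the density ratio explicitly. Using Definition \ref{def:ld}, we have $p_{\data}(z) = \frac{1}{2b}\exp(-|z - f(\data)|/b)$, and likewise for $\data'$, so the normalizing constants cancel and
\begin{equation*}
\frac{p_{\data}(z)}{p_{\data'}(z)} = \exp\!\left(\frac{|z - f(\data')| - |z - f(\data)|}{b}\right).
\end{equation*}
The key step is to bound the exponent. By the reverse triangle inequality, $|z - f(\data')| - |z - f(\data)| \leq |f(\data) - f(\data')|$, and by Definition \ref{def:gs} this is at most $GS_f$. Substituting $b = GS_f/\epsilon$ makes the exponent at most $GS_f/b = \epsilon$, so the ratio is bounded by $e^\epsilon$, exactly as needed.

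The main obstacle here is one of care rather than depth: I must apply the reverse triangle inequality in the correct direction so that the quantity controlled is precisely the worst-case displacement $|f(\data)-f(\data')|$ captured by global sensitivity, and I must confirm that the prescribed scale $b = GS_f/\epsilon$ cancels the sensitivity to leave exactly $\epsilon$ in the exponent. The only remaining bookkeeping is the reduction from densities back to set probabilities: integrating the pointwise inequality $p_{\data}(z) \leq e^\epsilon\, p_{\data'}(z)$ over an arbitrary $\mathcal{S}$ preserves it, which is the $(\epsilon,0)$ condition with additive term $\delta = 0$.
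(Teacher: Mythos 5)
Your proof is correct: the pointwise density-ratio bound via the triangle inequality and global sensitivity, followed by integration over $\mathcal{S}$, is exactly the standard argument for the Laplace mechanism. The paper itself offers no proof of this theorem, deferring to Dwork et al.~\cite{DPDefinitions}, and your argument is the canonical one given there, so there is nothing to reconcile.
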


Global sensitivity is the maximum effect that can be caused by changing a single row of any database.  Sometimes it is helpful to talk about local sensitivity \textit{for a given database $\data$} \cite{LocalSensitivityDef}.  This is the maximum effect that can be caused by changing a row of that particular database.

\begin{definition}[Local Sensitivity] \label{def:ls}
 The sensitivity of a function $f$ at a particular database $\data$ is:
$$LS_f(\data) = \underset{\data'}{ \textup{max}}\ |f(\data)-f(\data')|,$$ where $\data'$ is a neighboring database. 
\end{definition}

Note that $GS_f = \max_\data LS_f(\data)$. Local sensitivity cannot simply be used in the Laplace mechanism in place of global sensitivity, because local sensitivity itself is a function of the database and therefore cannot be released.  But private upper bounds on local sensitivity can be used to create similar mechanisms that do preserve privacy, and one of our algorithms uses just such a technique.

Choosing $\epsilon$ is an important consideration when using differential privacy. We consider several values of $\epsilon$ throughout our power analyses.  The lowest, .01 is an extremely conservative privacy parameter and allows for safe composition with many other queries of comparable $\epsilon$ value. We also use $\epsilon$s of $.1$ and $1$, which, while higher, still provide very meaningful privacy protection.  Ultimately, the choice of $\epsilon$ is a question of policy and depends on the relative importance with which privacy and utility are regarded.  We also measure, for comparison, the power of the public versions of each test (equivalent to an $\epsilon$ of $\infty$). As one might expect, the amount of data needed to detect a given effect often scales roughly with the inverse of $\epsilon$.

\subsection{Differentially Private Hypothesis Testing}

Performing hypothesis tests within the framework of differential privacy introduces new complexity. A function to compute a private test statistic (be it a private version of a standard test statistic or an entirely new test statistic) is not useful on its own. We need a p-value or other understandable output, and that means understanding the reference distribution (i.e., the distribution of the statistic given $H_0$).

In classical statistics, test statistics are computed with deterministic functions.  The randomness added to the test statistic in order to privatize it introduces new complexity.  Most importantly, it causes the reference distribution to change. One \textit{cannot} simply compare the private test statistic to the usual reference distribution, as the addition of noise can inflate the type I error well above acceptable levels \cite{ANOVA}.

Because of this, a complete differentially private hypothesis test requires not only a function for computing a private test statistic, but also a method for determining its null distribution. Often the exact reference distribution cannot be determined, so worst-case reference distributions or upper bounds on the resulting critical value must be used, and the precision of this reference distribution can have a large effect on the resulting power.

The goal of differentially private hypothesis test design is to develop a test with power as close as possible to the public test.

\subsection{Related Work}

There is a substantial and growing literature on differentially private hypothesis testing. One area of research is the study of the rate of convergence of private statistics to the distributions of their public analogues \cite{smith2008efficient, smith2011privacy, wasserman2010statistical}. These papers do not offer practical, implementable tests and discussion of reference distributions when the noise is not yet negligible is often limited or entirely absent. Further, the results are often entirely asymptotic, without regard for constants that may prove to be problematic.

The chi-squared test, which tests the independence of two categorical variables,\footnote{This is the chi-squared test of independence. There are several related tests that use the same statistic, the chi-squared.} has been the subject of much study, resulting in the development of many private variants. One of these works, that of Vu and Slavkovic \cite{vu2009differential}, provides methods for calculation of accurate p-values adjusted for the addition of Laplace noise for differentially private single proportion and chi-squared tests specifically for clinical trial data. Several other papers, though they make asymptotic arguments on the uniformity of their p-values, have developed frameworks for private chi-squared tests specifically for the intent of genome-wide association study (GWAS) data \cite{fienberg2011privacy, johnson2013privacy, uhlerop2013privacy}. For these same tests, Monte Carlo simulation has been shown to offer more precise analysis in some cases \cite{gaboardi2016differentially, wang2015revisiting}. There has also been work, like that of Rogers and Kifer \cite{rogers2017new}, that proposes entirely new test statistics with asymptotic distributions more similar to their public counterparts.

While the development of private test statistics has achieved much attention, careful evaluations of statistical power of these new test statistics is not always demonstrated. This is unfortunate, as the cost of privacy (utility loss) must be accurately quantified in order for the widespread adoption or implementation of any of these methods. Fortunately, rigorous power analysis seems to be more common in recent work. Awan and Slavkovic recently presented a test for simple binomial data \cite{awan2018differentially}.  While the setting is the simplest possible, their paper gives what we believe is the first private test to come with a proof of optimality, something normally very difficult to achieve even in the public setting.

The body of work on numerical (rather than categorical) methods is less extensive but has been growing quickly in recent years. In 2017, Nguyen and Hui proposed algorithms for survival analysis methods \cite{nguyen2017differentially}. There have been frameworks developed for testing the difference in means of normal distributions \cite{ding2018comparing, d2015differential}, and for testing whether a sample is consistent with a normal distribution with a particular mean \cite{solea2014differentially}. Differentially private versions of linear regressions, a class of inference that is extremely common in many fields both within and outside of academia, have received a notable level of attention, but the treatment of regression coefficients as test statistics has come about only recently \cite{barrientos2017differentially, sheffet2015differentially}. Two works have studied differentially private versions of one-way analysis of variance (ANOVA) \cite{ANOVA, new_anova}. The only prior work done on nonparametric hypothesis tests, as far as we are aware, is on the Wilcoxon signed-rank test by Task and Clifton in 2016 \cite{DPWilcoxon}.  Prior work specifically relevant to the tests we are proposing will be discussed in more detail in the relevant section.

\section{Many groups}\label{sec:kw}

We first consider the most general case, where we wish to distinguish whether many groups share the same distribution on a continuous variable.  The standard parametric test in the public setting is the one-way analysis of variance (ANOVA), which tests the equality of means across many groups.  Private ANOVA has been studied previously first by Campbell et al.~\cite{ANOVA} and then by Swanberg et al.~\cite{new_anova}, who improved the power by an order of magnitude.  The standard nonparametric test in the public setting is the Kruskal-Wallis test, which was used by the pschosis research group to determine that subjects in the schizophrenia, bipolar, and control groups had different methylation levels at a particular gene site\cite{carrard2011}. As is standard for nonparametric statistics in the public setting, it sacrifices some power compared to ANOVA but no longer assumes normally distributed data. \cite{ANOVAvsKW}

In this section we present two tests.  The first is a straightforward privatization of the standard Kruskal-Wallis test statistic.  The second modifies the statistic, essentially by linearizing the implied distance metric.  We find first that our modified statistic has much higher power.  We then further show that our modified statistic has much higher power than the ANOVA test of Swanberg et al.~\textit{even when the data is normally distributed}.  

\subsection{The Kruskal-Wallis test}\label{sec:kw_bg}

The Kruskal-Wallis test, proposed by William Kruskal and W. Allen Wallis in 1952 \cite{kruskal1952use}, is used to determine if several groups share the same distribution in a continuous variable. The only assumptions are that the data are drawn randomly and independently from a distribution with at least an ordinal scale.

Take a database $\data$ with $g$ groups\footnote{Throughout the paper we assume $g$ is public and independent of the data, so we don't list it as a separate input.  Because $g$ is the number of \textit{valid} groups, one or more of the $g$ groups might not contain any observations.  Allowing many valid groups that have no actual observations artificially increases the critical value, so it can reduce the power of our tests but does not affect the validity or privacy of the output.} and $n$ rows. Let $n_i$ be the size of each group and $r_{ij}$ be the rank of the $j^\text{th}$ element of group $i$. (If values are equal for several elements, all are given a rank equal to the average rank for that set.) We define $\bar{r}_{i}$ to be $\frac{1}{n_i}\sum_{j=1}^{n_i}r_{ij}$, the mean rank of group $i$, and $\bar{r}$ to be  $\frac{n+1}{2}$, the average of all the ranks. Then, the Kruskal-Wallis $\statsqu$-statistic is defined to be
$$\statsqu = (n - 1) \frac{\sum_{i=1}^g n_i(\bar{r}_i - \bar{r})^2}{\sum_{i=1}^g \sum_{j=1}^{n_i} (r_{ij} - \bar{r})^2}.$$
If there are no ties in the database, the denominator is constant and the formula can be simplified to
$$\statsqu = \frac{12}{n(n+1)}\sum_{i=1}^g n_i\bar{r}_i^2 - 3(n + 1).$$

For clarity and consistency with later sections, we present this calculation as an algorithm.  In general we use a subscript ``stat'' to label the algorithm computing a test statistic and a subscript ``p'' to denote the fully hypothesis test that outputs a p-value.  We use tildes to indicate private algorithms.

\begin{algorithm}
\DontPrintSemicolon
\algrule
\textbf{Algorithm} $\KW$ \textbf{:} Kruskal-Wallis Test Statistic \;
\algrule
\KwIn{$\data$}
    \For{group $i$ of $\data$}{
        $\bar{r}_{i} \longleftarrow \Big(\sum_{j=1}^{n_i}r_{ij}\Big)/n_i$
    }
    $\statsqu \longleftarrow \frac{12}{n(n+1)}\sum_{i=1}^{g}n_i\bar{r}_i -3(n+1) $\;
\KwOut{$\statsqu$}
\algrule
\label{alg:kw}
\end{algorithm}

\subsection{Privatized Kruskal-Wallis}

In this section, we bound the sensitivity of \KW, allowing us to create a private version. We then present a complete algorithm for calculating a p-value and prove that it too is differentially private.  We begin with the following sensitivity claim, proved in  Appendix \ref{sec:kw_sq_sens}.

\begin{theorem} \label{thm:squaresensi}
The sensitivity of $\KW$ is bounded by 87.
\end{theorem}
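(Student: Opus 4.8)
The plan is to first rewrite the statistic so that only a single data-dependent sum remains. Since neighboring databases have the same size $n$ (a row is altered, not added or removed), every quantity depending on $n$ alone is constant: $\frac{12}{n(n+1)}$, the $3(n+1)$ term, and $\bar r = \frac{n+1}{2}$. Applying the identity $\sum_i n_i \bar r_i^2 = \sum_i n_i(\bar r_i - \bar r)^2 + n\bar r^2$ and substituting $\bar r = \frac{n+1}{2}$, the additive constants cancel, giving the clean deviation form
$$\statsqu = \frac{12}{n(n+1)}\sum_{i=1}^{g} n_i(\bar r_i - \bar r)^2.$$
Consequently $GS_{\KW} = \frac{12}{n(n+1)}\,GS_S$, where I would work with $S = \sum_i R_i^2/n_i$ and $R_i = \sum_j r_{ij}$ the rank sum of group $i$ (this equals $\sum_i n_i\bar r_i^2$, which differs from the deviation form only by the constant $n\bar r^2$, hence has the same sensitivity). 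It then suffices to show $GS_S \le \tfrac{87}{12}n(n+1)$, i.e. that an altered row changes $S$ by $O(n^2)$ with the correct constant.

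\textbf{Effect of one altered row.} Next I would pin down exactly how one altered row perturbs the ranks. Changing a single observation's value relocates it in the sorted order: its own rank moves from $r$ to $r'$, the $|r'-r|\le n-1$ elements strictly between the two positions each shift by exactly one rank in the opposite direction, and all other ranks are untouched. (I work in the no-ties regime of the simplified statistic that $\KW$ actually computes; ties only average ranks and cannot enlarge this displacement.) Simultaneously the observation may change group from $a$ to $b$, altering $n_a,n_b$ and $R_a,R_b$. I would decompose a general row change into (i) a \emph{rank move} keeping the group fixed and (ii) a \emph{group reassignment} keeping the rank fixed, so that by the triangle inequality $GS_S \le \sup|\Delta_{\mathrm{move}}S| + \sup|\Delta_{\mathrm{group}}S|$ and each piece is bounded in isolation.

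\textbf{Bounding the pieces.} For the group reassignment I would expand $\frac{(R_a-r')^2}{n_a-1}-\frac{R_a^2}{n_a}$ together with the matching $b$-term, controlling each rank sum by its extremes $\frac{n_i(n_i+1)}{2}\le R_i \le \frac{n_i(2n-n_i+1)}{2}$ and each rank by $r'\le n$. For the rank move I would track the block shift as $R_a \mapsto R_a+(r'-r)-k_a$ and $R_i \mapsto R_i-k_i$ for $i\ne a$, where $k_i$ counts shifted elements landing in group $i$, subject to $\sum_i k_i = |r'-r| \le n-1$ and $k_a \le n_a - 1$. Expanding $S$ yields terms linear and quadratic in the $k_i$ with coefficients of order $\bar r_i \le n$, which I would bound using these constraints. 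In each piece the bound scales as $C n^2$; multiplying by $\frac{12}{n(n+1)}$ cancels the $n$-dependence and leaves a pure constant, and I would maximize the combined constant over the free parameters (group sizes, the split of the shifted block among groups, and the rank positions) to certify it does not exceed $87$.

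\textbf{Main obstacle.} The hard part will be the rank-move analysis, specifically the coupling between which groups absorb the shifted block and the group-size denominators $n_i$: the shift lowers several $R_i$ simultaneously while raising $R_a$, so the signed contributions partly cancel and a naive term-by-term bound is both loose and awkward to make tight. Obtaining a clean, $n$-independent constant requires identifying the genuine extremal configuration (plausibly a small group $a$ whose single element sweeps the full rank range against the remaining elements) and checking the worst case over group sizes, and I expect this optimization—along with verifying uniformity in $n$ and handling the singleton and empty-group edge cases—to be where essentially all the effort lies. The stated bound $87$ then follows by summing the two suprema, presumably with some deliberate slack for a cleaner constant.
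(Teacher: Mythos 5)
Your plan is sound and, carried out with the ingredients you list, it does prove the theorem --- but it takes a genuinely different route from the paper's proof in the one place that matters. The shared skeleton is the same: reduce to the no-ties form, pass to $S=\sum_i R_i^2/n_i$ so that the prefactor $\frac{12}{n(n+1)}$ turns an $O(n^2)$ bound into a constant, use the rank-sum extremes $\frac{n_i(n_i+1)}{2}\le R_i\le\frac{n_i(2n-n_i+1)}{2}$, and track the block of $|r'-r|$ intermediate rows that shift by one; your rank-move analysis is essentially the paper's ``row retains its group'' case, which the paper bounds by $48$ (your stated constraints give at most $60$). The divergence is in the group change: the paper analyzes a simultaneous rank-and-group change monolithically, splitting into three sub-cases ($n_1>1,n_2>0$; $n_1=1$; $n_2=0$), and its constant $87$ comes precisely from that coupled extremal problem. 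You instead factor an arbitrary row alteration through an intermediate database --- move the rank within the old group, then flip the group at fixed rank --- and add the two suprema by the triangle inequality (valid, since the intermediate database is a neighbor of both endpoints). This buys a real simplification: for the pure flip the leading $R_i^2$ and the $r'^2$ cross terms cancel, leaving
$$\Delta S=\frac{\left(R_a-n_a r'\right)^2}{n_a(n_a-1)}-\frac{\left(R_b-n_b r'\right)^2}{n_b(n_b+1)},$$
and since distinct ranks force $|R_a-n_ar'|\le\frac{(n_a-1)(2n-n_a)}{2}$ (similarly for $b$), each fraction is at most $\frac{(2n-n_i)^2}{4}\le n^2$, so the flip costs at most $12$ after scaling; the $n_a=1$ and $n_b=0$ edge cases merely delete one of the two fractions. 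Summing the two pieces gives roughly $60$--$84\le 87$ under any reasonable execution, so the ``deliberate slack'' you hope for is genuinely available, and your decomposition in fact certifies a smaller constant than the paper's; the price is that a sum of two worst cases is lossy in principle, whereas the paper's direct coupled analysis could in principle be tighter (with its actual bounds it ends up looser). One execution warning: the cancellation above must happen \emph{before} you invoke the extremes --- bounding $\frac{(R_a-r')^2}{n_a-1}$ and $\frac{R_a^2}{n_a}$ separately by their maxima gives an $O(n^3)$ unscaled quantity that the prefactor cannot absorb, so the scaled bound would grow with $n$; your plan to expand the differences captures this, but it is the one step where the argument can silently fail.
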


We are using the simplified formula that assumes there are no ties in the data, so our algorithm begins by adding a small amount of random noise to each data point to randomly order any ties. We may then compute the $\statsqu$-statistic as in the public setting and add noise proportional to the sensitivity.

\begin{algorithm}[!htb]
\DontPrintSemicolon
\algrule
\textbf{Algorithm} $\privKW$ \textbf{:} Private Kruskal-Wallis Test Statistic\;
\algrule
\KwIn{$\data$, $\epsilon$}
    Rank all data points, randomly breaking ties\;
    $\statsqu \longleftarrow \KW(\data)$\;
    $\widetilde{\statsqu} \longleftarrow \statsqu + \lap \Big (87/\epsilon \Big )$\;
\KwOut{$\privsqu$}
\algrule 
\end{algorithm}

\begin{theorem} \label{thm:pKW}
    Algorithm $\privKW$ is $\epsilon$-differentially private.
\end{theorem}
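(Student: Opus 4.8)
The plan is to recognize that $\privKW$ is, at its core, an application of the Laplace mechanism (Theorem \ref{thm:lm}) to the Kruskal-Wallis statistic, whose sensitivity is already bounded by $87$ in Theorem \ref{thm:squaresensi}. The algorithm releases $\statsqu + \lap(87/\epsilon)$, and $87/\epsilon$ is exactly the sensitivity bound of Theorem \ref{thm:squaresensi} divided by $\epsilon$, so if the only randomness were the Laplace noise the claim would follow immediately. The one wrinkle that requires care is the initial random tie-breaking step, since this makes the ranking—and hence the value of $\statsqu$ produced by $\KW$—a \emph{random} rather than deterministic function of $\data$, so Theorem \ref{thm:lm} does not apply verbatim.

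To handle this, first I would model the tie-breaking as data-independent external randomness: associate to each of the $n$ row positions an independent continuous random variable $Z_k$ and rank the rows by the pair $(\text{value}, Z_k)$, so that all ties are almost surely resolved. The key observation is that the distribution of $Z = (Z_1, \dots, Z_n)$ does not depend on the data, so I can couple the executions on neighboring databases $\data$ and $\data'$ to use the same realization $z$ of $Z$. For any fixed $z$, the jittered versions of $\data$ and $\data'$ are neighboring databases with no ties, which is precisely the regime in which the simplified formula used by $\KW$ is valid and in which Theorem \ref{thm:squaresensi} establishes that the resulting values of $\statsqu$ differ by at most $87$.

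Next, I would condition on $Z = z$ and apply the standard Laplace-mechanism inequality: since the two conditional statistic values differ by at most the sensitivity $87$ and we add $\lap(87/\epsilon)$ noise, for every measurable $\mathcal{S}$ the conditional output probabilities satisfy the $(\epsilon,0)$-differential privacy bound. I would then integrate this pointwise inequality against the common distribution of $Z$; because the same $e^\epsilon$ factor holds for every $z$ and $Z$ is distributed identically under $\data$ and $\data'$, the factor pulls out of the integral and the unconditional bound $\Pr[\privKW(\data) \in \mathcal{S}] \le e^\epsilon \Pr[\privKW(\data') \in \mathcal{S}]$ follows, giving $\epsilon$-differential privacy.

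I expect the main obstacle to be precisely this tie-breaking argument: one must ensure that the sensitivity bound of Theorem \ref{thm:squaresensi} is invoked only in the tie-free regime, and that the coupling correctly forces the two runs to see identical tie-breaking randomness so that the conditional statistics are genuinely those of neighboring tie-free databases. Once that coupling is in place, everything else is a routine conditioning-and-integration wrapper around the Laplace mechanism, so the remainder of the argument should be essentially immediate.
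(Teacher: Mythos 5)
Your proposal is correct and follows essentially the same route as the paper's proof: both fix the tie-breaking randomness (which is data-independent, so it can be coupled across neighboring databases), observe that the conditional mechanism is a Laplace mechanism with sensitivity $87$ by Theorems \ref{thm:squaresensi} and \ref{thm:lm}, and then average over that randomness to conclude privacy. Your write-up merely makes explicit the coupling and integration step that the paper compresses into the remark that randomly selecting among $\epsilon$-differentially private mechanisms is itself private.
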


See Appendix \ref{sec:kw_privacy_pf} for the proof.

\begin{algorithm} [!htb]
\DontPrintSemicolon
\algrule
\textbf{Algorithm} $\privKWP$ \textbf{:} Complete Kruskal-Wallis Test\;
\algrule
\KwIn{\data, $\epsilon$, $z$}
	$\widetilde{\statsqu} \longleftarrow \privKW(\data, \epsilon)$\;
    \For{$k = 1$ \textup{to} $z$}{
        $\data^* \longleftarrow $ a database with independent uniform values from [0,1], divided almost equally into $g$ groups\;
        $h_k \longleftarrow \privKW(\data^*)$;
    }
    $p \longleftarrow$ fraction of $h_k$ values greater than $\privsqu$ \;
\KwOut{$\widetilde{\statsqu}, p$}
\algrule
\label{alg:complete_kw}
\end{algorithm}

Algorithm $\privKWP$ is our complete algorithm to find a p-value given a database $\data$, privacy parameter $\epsilon$. First a private test statistic $\widetilde{\statsqu}$ is computed.  Then the reference distribution is approximated by simulating $z$ databases under $H_0$ and computing the test statistic for each.\footnote{When we use the traditional Kruskal-Wallis test, the distribution of $h$-statistics asymptotically converges to the $\chi^2$ distribution. Thus, for efficiency purposes, we sample $h_k$ from ${\sf \chi^2}(g-1) + \lap(\Delta \statsqu/\epsilon)$}  (The distribution of the test statistic is independent of the distribution of data between groups and the distribution of the i.i.d.~data points, so our choice of equal-sized groups and uniform data from $[0,1]$ is arbitrary.)  The p-value is the percent of $h_k$ more extreme than $\widetilde{\statsqu}$. 


\begin{theorem} \label{thm:pcomp}
Algorithm $\privKWP$ is $\epsilon$-differentially private.
\end{theorem}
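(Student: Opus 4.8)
The plan is to establish privacy entirely through the post-processing theorem (Theorem \ref{thm:pp}), by observing that the only point at which Algorithm $\privKWP$ touches the sensitive database \data is its very first line. First I would note that the assignment $\privsqu \longleftarrow \privKW(\data, \epsilon)$ produces an $\epsilon$-differentially private value by Theorem \ref{thm:pKW}. Everything the algorithm outputs will be shown to be a function of this single private quantity together with randomness that is independent of \data.

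Next I would argue that every subsequent step is a randomized computation that never accesses \data. The loop constructs $z$ synthetic databases $\data^*$, each consisting of independent uniform draws from $[0,1]$ partitioned almost equally into $g$ groups, and sets $h_k \longleftarrow \privKW(\data^*)$. The key point is that these synthetic databases, and the Laplace noise used inside each call of $\privKW$ (or equivalently the $\chi^2(g-1) + \lap(\cdot)$ samples described in the footnote for efficiency), are generated independently of \data. The p-value $p$ is then the fraction of the $h_k$ exceeding $\privsqu$, which is a deterministic function of $\privsqu$ and the data-independent values $h_1, \dots, h_z$.

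Formally, I would package the entire tail of the algorithm into a single randomized algorithm $g$ whose only input is $\privsqu$: internally $g$ draws the synthetic databases and computes their statistics using its own fresh randomness, then returns the pair $(\privsqu, p)$. Since $g$ never reads \data, the map computed by $\privKWP$ is exactly the composite $g \circ \privKW$, and Theorem \ref{thm:pp} immediately gives that its output is $\epsilon$-differentially private.

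The argument is short precisely because the reference-distribution simulation is data-oblivious by construction. The one point that warrants care, and the only real obstacle, is to confirm that no step after the first line reuses \data in any form; in particular one must check that the simulated databases $\data^*$ are genuinely independent draws rather than reshufflings or resamplings of the input, so that the post-processing theorem applies cleanly. This is settled by direct inspection of the algorithm's pseudocode.
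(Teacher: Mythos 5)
Your proof is correct and takes exactly the same route as the paper: privacy of $\privsqu$ via Theorem \ref{thm:pKW}, followed by the observation that the reference-distribution simulation and p-value computation never access \data, so Theorem \ref{thm:pp} (post-processing) applies. The paper's own proof is just a terser statement of this same two-step argument.
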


\begin{proof}
By Theorem \ref{thm:pKW}, the computation of $\privsqu$ is $\epsilon$-differentially private. All of the following steps (generating the reference distribution and calculating $p$-value) do not need to access to the database $\data$, and therefore by Theorem \ref{thm:pp} (post processing), Algorithm $\privKWP$ is $\epsilon$-differentially private.
\end{proof}

\subsection{A New Test: Absolute Value Kruskal-Wallis}

We now introduce our own new test, specifically designed for the private setting.  Inspired by Swanberg et al.~\cite{new_anova}, we alter the Kruskal-Wallis statistic, measuring distance with the absolute value instead of the square of the differences.  This statistic is now
$$\statabs = (n - 1) \frac{\sum_{i=1}^g n_i | \bar{r}_i - \bar{r} | }{\sum_{i=1}^g \sum_{j=1}^{n_i} | r_{ij} - \bar{r} | }.$$

As before, when there are no ties in the data, the statistic can be simplified.  (See Appendix \ref{sec:kw_abs_deriv} for the calculation.) In this case, the form depends on the parity of $n$.

\[ 
\statabs = 
\begin{dcases}
\frac{4(n-1)}{n^2}\sum_{i=1}^{g}n_i \left|\bar{r}_i-\frac{n+1}{2} \right|, &\mbox{if } n \mbox{ is even} \\
\frac{4}{n+1}\sum_{i=1}^{g}n_i \left|\bar{r}_i-\frac{n+1}{2} \right|, &\mbox{if } n \mbox{ is odd} \\
\end{dcases}  
\]

We call the algorithm to calculate the $\statabs$ test statistic $\KWA$. This statistic is preferable for two reasons.  First, it has lower sensitivity.  The following theorem is prooved in Appendix \ref{sec:kw_abs_sens}.

\begin{theorem} \label{thm:abssensi}
The sensitivity of $\KWA$ is bounded by 8.
\end{theorem}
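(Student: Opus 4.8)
The plan is to peel off the leading coefficient and reduce everything to a bound on the inner sum $S := \sum_{i=1}^{g} n_i\big|\bar{r}_i - \tfrac{n+1}{2}\big|$. Since neighboring databases alter a row rather than add or delete one, they share the same $n$, hence the same parity and the same prefactor ($\tfrac{4(n-1)}{n^2}$ or $\tfrac{4}{n+1}$). Therefore $|\statabs(\data)-\statabs(\data')|$ equals that common prefactor times $|S(\data)-S(\data')|$, and it suffices to prove $|S(\data)-S(\data')| \le 2(n-1)$: for even $n$ this yields $\tfrac{4(n-1)}{n^2}\cdot 2(n-1)=\tfrac{8(n-1)^2}{n^2}<8$, and for odd $n$ it yields $\tfrac{4}{n+1}\cdot 2(n-1)=\tfrac{8(n-1)}{n+1}<8$. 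Because $\KWA$ is applied after ties are broken at random (as in $\privKW$), I would assume throughout that the ranks are a permutation of $1,\dots,n$.

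Next I would pass to \emph{centered ranks} $s_k := k - \tfrac{n+1}{2}\in\big[-\tfrac{n-1}{2},\tfrac{n-1}{2}\big]$ and group sums $T_i := \sum_{\text{points in group }i}\big(r-\tfrac{n+1}{2}\big)$, so that $n_i\big|\bar{r}_i-\tfrac{n+1}{2}\big| = |T_i|$ and $S = \sum_i |T_i|$, subject to the global constraint $\sum_i T_i = 0$. I would then track exactly how a single-row change acts on the $T_i$. Altering one row changes that point's value, moving its sorted position from $\rho$ to $\rho'$ (say $\rho<\rho'$) and shifting every intermediate point down one rank, and it may also change the point's group from $a$ to $b$. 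Writing $m_i$ for the number of intermediate points (those at positions $\rho+1,\dots,\rho'$) lying in group $i$, careful bookkeeping gives $\Delta T_i = -m_i - s_\rho\,\mathbf{1}\{i=a\} + s_{\rho'}\,\mathbf{1}\{i=b\}$, with $\sum_i m_i = \rho'-\rho$.

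Finally I would apply the triangle inequality $|S(\data)-S(\data')| \le \sum_i|\Delta T_i|$ and maximize. The key observation is that the refund terms $-m_a$ and $-m_b$ cancel the contribution of $m_a,m_b$ to $\sum_i m_i$: maximizing $\sum_i|\Delta T_i|$ over $m_a,m_b\ge 0$ (for fixed $\rho,\rho'$) collapses the bound to $|s_\rho| + |s_{\rho'}| + (\rho'-\rho)$, which is at most $\tfrac{n-1}{2}+\tfrac{n-1}{2}+(n-1)=2(n-1)$, attained at $\rho=1,\rho'=n$. The symmetric case $\rho>\rho'$ and the pure group-swap case $\rho=\rho'$ (where $|\Delta S|\le 2|s_{\rho'}|\le n-1$) are no worse, so the bound holds uniformly.

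The hard part is carrying out this maximization cleanly while simultaneously accounting for both effects of a row change---the downward rank shift of every intermediate point and the group reassignment. Naively decomposing the change into ``alter the value'' followed by ``alter the group'' is tempting but lossy: each sub-step contributes separately, giving only $|S(\data)-S(\data')|\le 3(n-1)$ and the weaker sensitivity bound $12$. The improvement to $8$ comes precisely from bounding the two effects jointly and exploiting the cancellation above, so I expect the bookkeeping of the $\Delta T_i$ and the joint optimization to be where the real work lies.
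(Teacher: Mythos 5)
Your proof is correct, and its core mechanism is the same as the paper's: both arguments bound, group by group, the change in $\sum_i n_i\bigl|\bar{r}_i - \tfrac{n+1}{2}\bigr|$ caused by moving one row's rank from $\rho$ to $\rho'$ (each intermediate rank shifting by one) while possibly reassigning its group, and then finish with the triangle inequality and the common prefactor. Where you genuinely improve on the paper is the bookkeeping. By replacing group means with the unnormalized centered sums $|T_i| = n_i\bigl|\bar{r}_i - \tfrac{n+1}{2}\bigr|$ you never divide by $n_i$, so a group emptying out or shrinking to one element needs no special treatment; the paper, working with $\bar{r}_i$ directly, must split into a group-retained case plus three group-change subcases ($n_1>1,\ n_2>0$; $n_1=1$; $n_2=0$), each a rerun of essentially the same estimate. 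Your single formula $\Delta T_i = \mp m_i - s_\rho\,\mathbf{1}\{i=a\} + s_{\rho'}\,\mathbf{1}\{i=b\}$ subsumes all of these, and the cancellation of $m_a, m_b$ against $\sum_i m_i$ yields the inner bound $|s_\rho| + |s_{\rho'}| + |\rho'-\rho| \le 2(n-1)$, slightly sharper than the paper's worst case of $2n$ (the paper bounds the two rank terms by $\tfrac{n+1}{2}$ each rather than $\tfrac{n-1}{2}$); both land below the stated constant $8$. You also handle the two parities of $n$ explicitly via the shared prefactor, whereas the paper substitutes the larger odd-$n$ prefactor using $\tfrac{4}{n+1}\ge\tfrac{4(n-1)}{n^2}$ --- both valid. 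The one step to spell out in a full write-up is the descending case $\rho > \rho'$, where intermediate ranks shift up and the sign on $m_i$ flips; since the inequality $|{\pm}s + m| - m \le |s|$ driving your cancellation is insensitive to these signs, the argument goes through verbatim, as you note.
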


Second, the actual values for $\statabs$ are significantly higher than they are for $h$, so any given amount of noise is less likely to overwhelm the value.

Because of space constraints, we don't give pseudocode for this hypothesis test, but it follows exactly that of the previous test.  The privatized statistic is computed by \privKWA, which adds Laplace noise in the same way as for \privKW, but scaled to the lower sensitivity.  The full hypothesis test, \privKWPA, computes the $p$ value in the same way as was done for \privKWP.\footnote{Unlike before, a ${\sf \chi^2}$ approximation cannot be used.}


\begin{theorem}
Algorithm $\privKWA$ and $\privKWPA$ are $\epsilon$-differentially private.
\end{theorem}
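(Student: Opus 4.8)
The plan is to prove each claim in turn, in both cases transferring the structure already established for the squared-distance versions \privKW\ and \privKWP. For \privKWA, the key observation is that the algorithm is exactly the Laplace mechanism (Definition \ref{def:lm}) applied to the statistic \KWA: after ranking the data and randomly breaking ties it computes \statabs\ and adds noise drawn from $\lap(8/\epsilon)$. Since Theorem \ref{thm:abssensi} bounds the global sensitivity of \KWA\ by $8$, the scale $8/\epsilon$ is precisely $GS_{\statabs}/\epsilon$, so Theorem \ref{thm:lm} yields $(\epsilon,0)$-differential privacy directly. This is structurally identical to the argument behind Theorem \ref{thm:pKW}, with the sensitivity bound $87$ replaced by $8$ and \statsqu\ replaced by \statabs.

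The one point that requires care, and which I expect to be the main obstacle, is the interaction between the random tie-breaking step and the sensitivity bound, exactly as it arises in the proof of Theorem \ref{thm:pKW}. The simplified closed form for \statabs\ assumes no ties, whereas the bound of Theorem \ref{thm:abssensi} must apply to the ranks actually produced after tie-breaking. The plan is to argue that the bound of $8$ holds for \emph{any} assignment of ranks to tied values, i.e. uniformly over the internal randomness $\rho$ used to break ties: for each fixed realization of $\rho$, the map $\data \mapsto \KWA$ is a deterministic function of a permutation of $\{1,\dots,n\}$ whose value changes by at most $8$ between neighboring databases, so conditioned on $\rho$ the released statistic is $(\epsilon,0)$-differentially private by Theorem \ref{thm:lm}. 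Because a mixture of $(\epsilon,0)$-differentially private mechanisms over a data-independent distribution on $\rho$ is again $(\epsilon,0)$-differentially private, \privKWA\ inherits the guarantee. Since this is the same subtlety already resolved for \privKW, I would simply note that the argument of Theorem \ref{thm:pKW} applies verbatim here.

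For \privKWPA\ I would mirror the proof of Theorem \ref{thm:pcomp} exactly. The algorithm first calls \privKWA\ to obtain the private statistic, which is $(\epsilon,0)$-differentially private by the first part. Every remaining step---drawing the $z$ simulated databases $\data^*$ of i.i.d.~uniform values under $H_0$, recomputing the statistic on each, and reporting the fraction exceeding the private statistic as the $p$-value---depends only on the already-private output and on fresh, data-independent randomness; none of these steps reads \data. By the post-processing theorem (Theorem \ref{thm:pp}), composing this data-independent computation with \privKWA\ preserves the privacy guarantee, so \privKWPA\ is $\epsilon$-differentially private. The only genuinely new ingredient in the whole argument is the sensitivity bound of Theorem \ref{thm:abssensi}, which is assumed here; everything else is a direct transfer of the earlier Laplace-mechanism and post-processing arguments.
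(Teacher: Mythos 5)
Your proposal is correct and follows essentially the same route as the paper: the paper's own proof simply states that the argument is identical to those for $\privKW$ and $\privKWP$ (Theorems \ref{thm:pKW} and \ref{thm:pcomp}), i.e., the Laplace mechanism with the sensitivity bound of Theorem \ref{thm:abssensi}, the conditioning-on-tie-breaking-randomness (mixture) argument, and post-processing via Theorem \ref{thm:pp}. Your write-up just makes explicit the same steps the paper invokes by reference, so there is nothing to add.
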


\begin{proof}
The proof is identical to the proofs for Algorithms $\privKW$ and $\privKWP$ (Theorems \ref{thm:pKW} and \ref{thm:pcomp}).
\end{proof}

\paragraph{Unequal group sizes}  The traditional $\statsqu$ statistic (and therefore the noisy private analogue) has a reference distribution that is independent of the allocation of observations between groups.  This is unfortunately not true for our new $\statabs$ statistic.  Fortunately, it seems that the worst-case distribution (i.e., the one resulting in the highest critical value) occurs when all groups are of equal size.  (We present both theoretical and experimental evidence for this in Appendix \ref{sec:kw_abs_t1}.)  As a result, it is safe to always equal-sized groups when simulating a reference distribution, though for very unequal group sizes, there will be a significant loss in power compared to a hypothetical where group sizes were known.  (If approximate group sizes are known publicly or released through other queries, those could be used instead when simulating the reference distribution.)

\subsection{Experimental Results}

\paragraph{Power analysis} We now assess the power of our \privKWPA test on synthetic data (See Appendix \ref{sec:kw_appendix} for an application to real-world data.) We generate many databases of data distributed with specified parameters and then run $\privKWPA$ on each. The power of the test for a given set of parameters is the proportion of times $\privKWPA$ returns a significant result (i.e. a p-value less than the significance level $\alpha$, generally set at 0.05). We use three groups of normally-distributed data, separated by steps of one standard deviation (so the highest and lowest groups differ by two standard deviations).  In our captions we denote the mean of group $i$ with $\mu_i$.

\begin{figure} [!htb] 
    \centering
    \includegraphics[width=\linewidth, keepaspectratio]{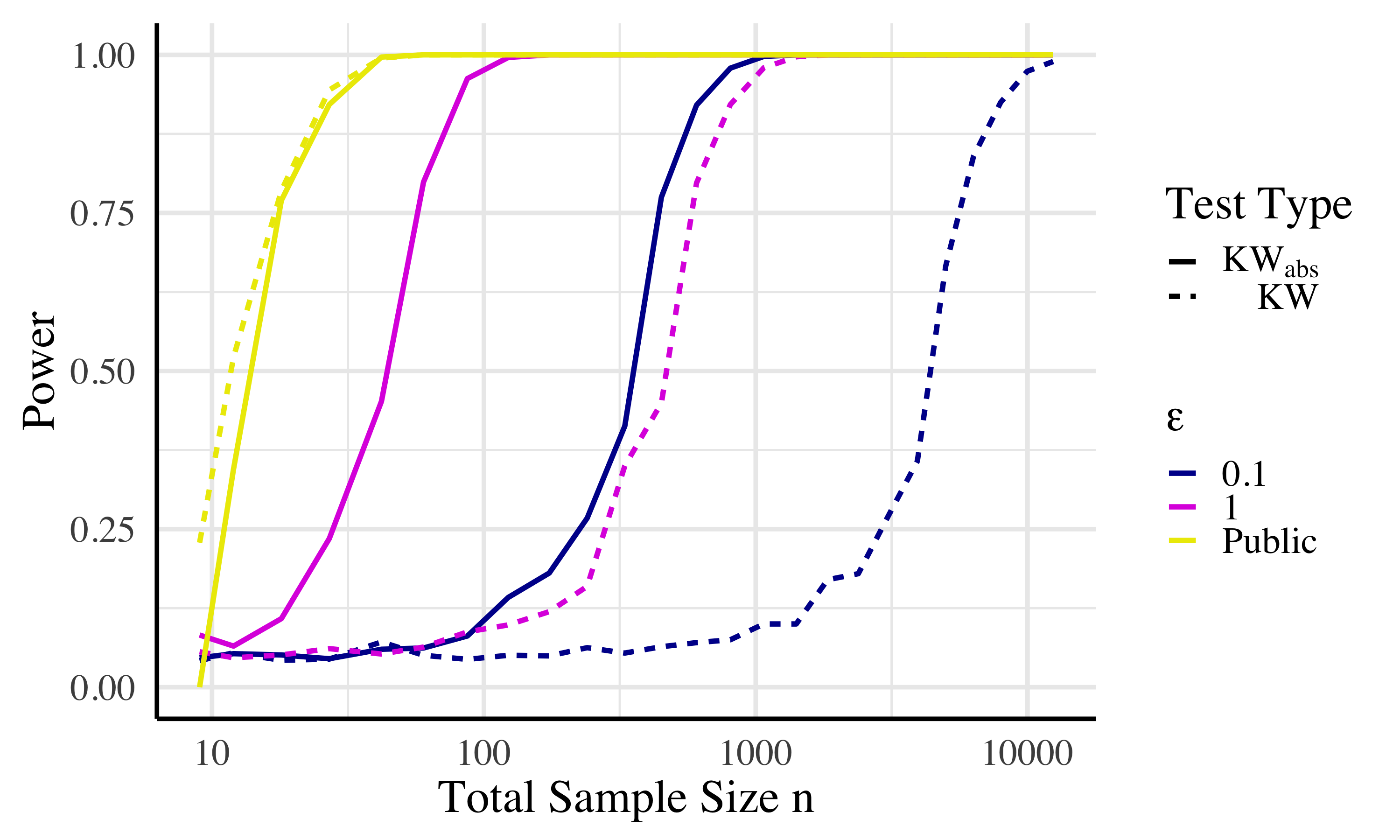}
    \caption{Power of $\privKWPA$ at various values of $\epsilon$ and total sample size $n$. (Effect size: $max_i(\mu_{i}) - min_i(\{\mu_{i}) = 2 \sigma$; $g = 3$; $\alpha=.05$; equal group sizes; normally distributed sample data)}\label{fig:no_ties}
\end{figure}

As shown in Figure \ref{fig:no_ties}, our private absolute value test variant requires significantly less data points than the original private test to reach the same power. Thus, from here on, we only evaluate the power of the absolute value variant. Figure \ref{fig:no_ties} also shows that, at an $\epsilon$ of 1, our private absolute value test only requires a database around a factor of 3 larger than the public test needs. 

\paragraph{Uniformity of p-values}  If a test is correctly designed, the probability of type 1 error (i.e., rejecting the null hypothesis when it is correct) should be less than or equal to $\alpha$ for any chosen value of $\alpha$. Comparing the fit of a large number of simulated p-values generated from null distributions to the uniform distribution on the unit interval allows one to evaluate the uniformity of p-values for a given hypothesis test. A common tool to carry this procedure out, the quantile-quantile (or Q-Q) plot, plots the quantiles of the uniform, theoretical distribution against the quantiles of the p-values. The theoretical and emperical quantiles will be nearly equal at all quantiles when the p-values follow the theoretical distribution, resulting in a linear trend on the Q-Q plot. A convex Q-Q plot indicates an increase in the type II error rate (i.e. the test not rejecting the null hypothesis when it is indeed not true, causing a decrease in power) which is acceptable but undesirable, while a concave Q-Q plot indicates an exceedingly high type I error rate (i.e. the test rejecting the null hypothesis when it is true, causing undue increases in power) which is not acceptable. Figure \ref{fig:qqplot_kw} demonstrates the p-value uniformity of \privKWPA. See Appendix \ref{sec:kw_appendix} for a discussion of uniformity of p-values with unequal group sizes.

\begin{figure} [!htb] 
    \centering
    \includegraphics[width=\linewidth, keepaspectratio]{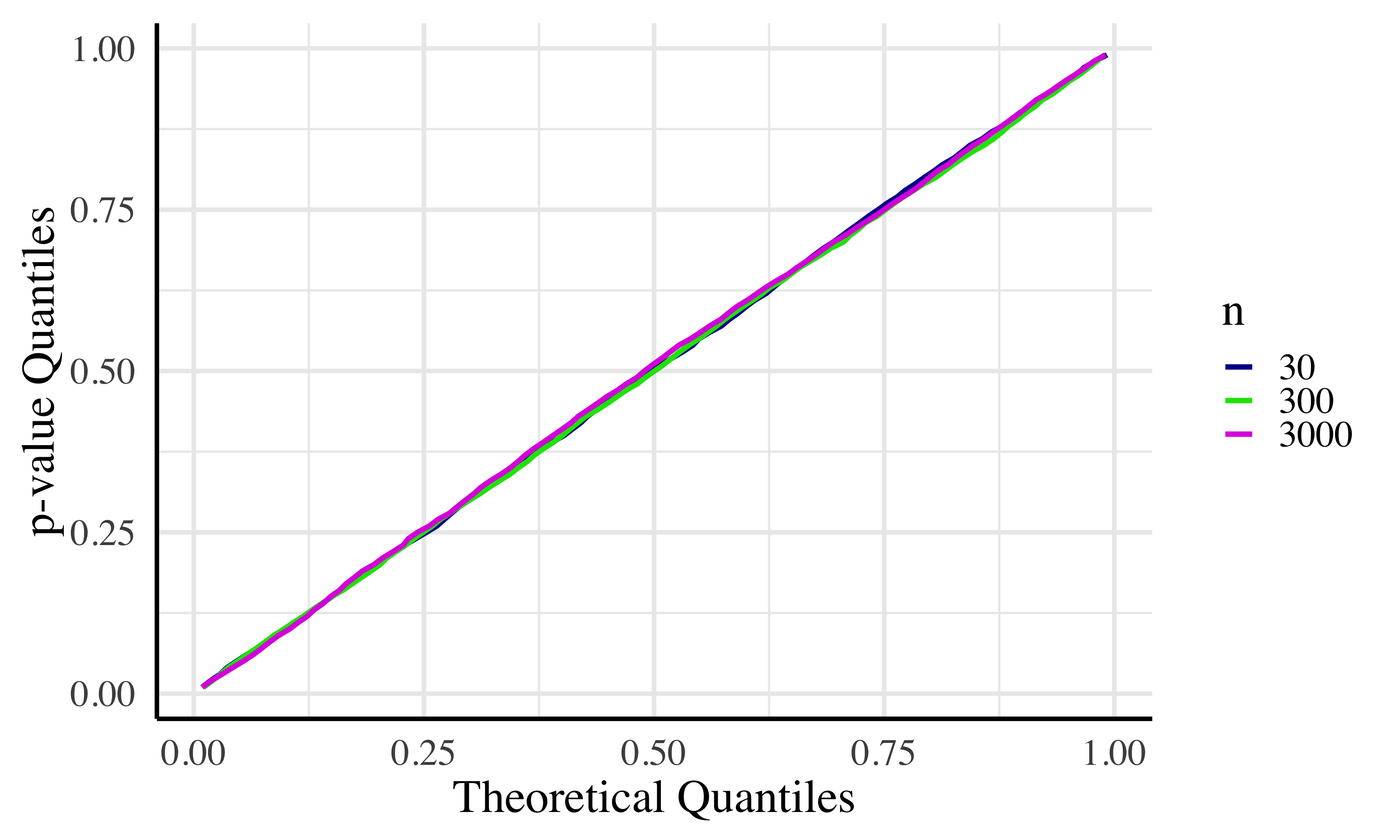}
    \caption{A quantile-quantile plot of \privKWPA comparing the distribution of simulated p-values to the uniform distribution at varying $n$, all with equal group sizes. ($g = 3$; $\epsilon = 1$)}\label{fig:qqplot_kw}
\end{figure}


\paragraph{Comparison to previous work} The only prior work on hypothesis testing for independence of two variables, one continuous and one categorical, is that on ANOVA.  The best private ANOVA analogue is that of Swanberg et al.~\cite{new_anova}.  In Figure \ref{fig:anova_comp} we compare \privKWPA to their test and we find its power to be much greater.  To get 80\% power with this effect size, our test requires only 23\% as much data as the private ANOVA test.  (The effect size used is the same as in \cite{new_anova}.)  We stress that this means our test is significantly higher-power, \textit{in addition} to being usable for non-normal data.  The test of Swanberg et al.~also requires that the analyst issuing the query can accurately bound the range of the data---a bound that is too tight or too loose will reduce the power of the test.  Our test works for data with unknown range.


\begin{figure} [!htb] 
    \centering
    \includegraphics[width=\linewidth, keepaspectratio]{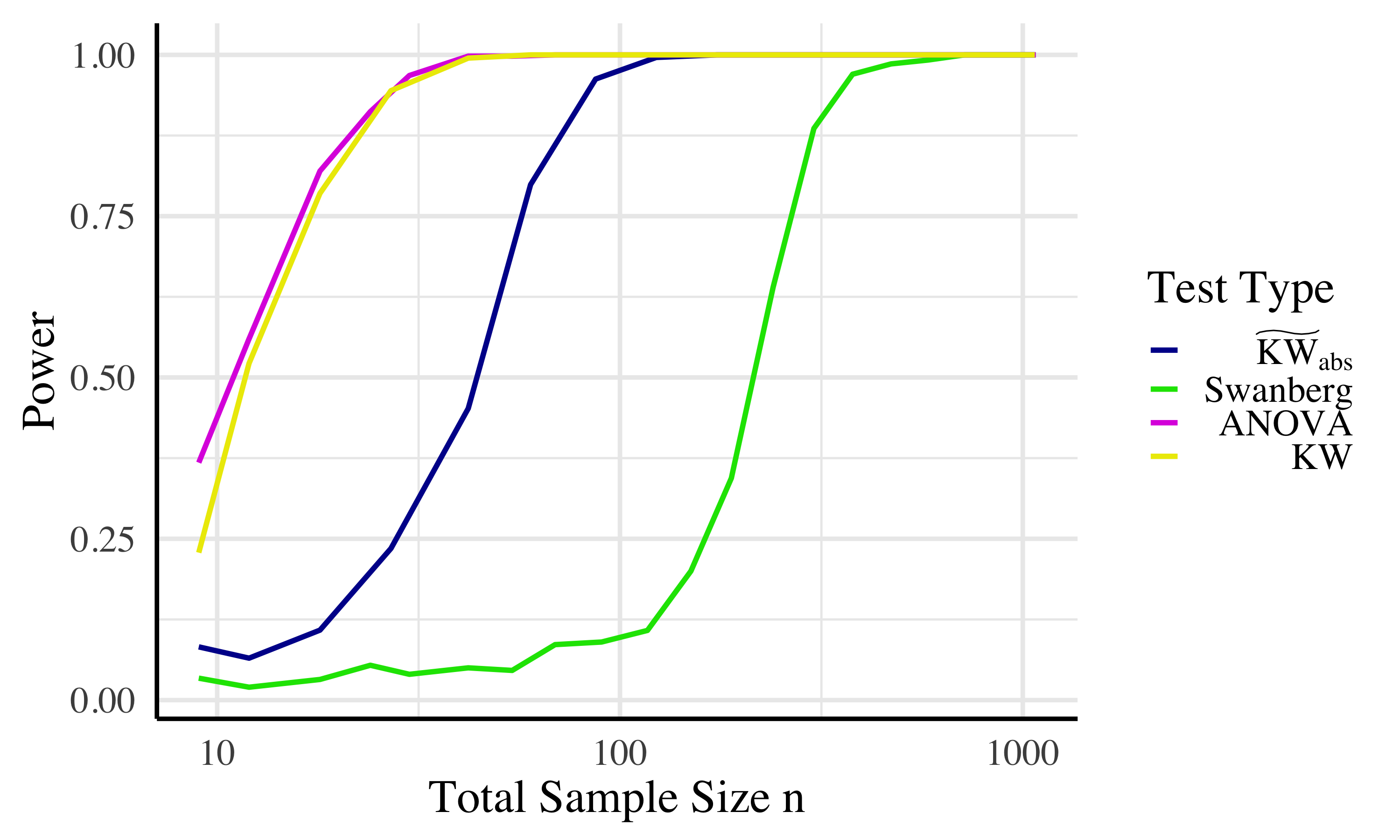}
    \caption{Power of \privKWPA, Swanberg et. al.'s test \cite{new_anova}, and the public tests at various $n$. (Effect size: $max(\mu_{n}) - min(\mu_{n}) = 2 \sigma$; $\epsilon = 1$; $g = 3$; $\alpha=.05$; equal group sizes; continuous sample data)}\label{fig:anova_comp}
\end{figure}

\paragraph{Robustness of results}  Though it is unusual, it is possible that the relative power of different hypothesis tests could change when different effect sizes are considered.  Therefore we repeat the analysis shown in Figure \ref{fig:no_ties} with a variety of different effect sizes, group sizes, and number of groups.  We also vary the frequency of tied values in the data, since the random ordering of tied values adds additional noise for our statistic.  Finally, we run the comparison on real data comparing income and age.  The results of these experiments are shown in Appendix \ref{sec:kw_appendix}. We find that the results discussed above are consistent across these variations.

\section{Two Groups}\label{sec:mw}


We now consider the case of data with only two groups (e.g., restricting our comparison to the methylation levels of the bipolar subjects versus the healthy controls.)  In the public nonparametric setting, one could simply use Kruskal-Wallis with $g=2$, but one can also use the Mann-Whitney $U$-test (also called the Wilcoxon \textit{rank-sum} test), proposed in 1945 by Frank Wilcoxon \cite{WC} and formalized in 1947 by Henry Mann and Donald Whitney \cite{MW}.  In this section we construct a private version of the Mann-Whitney test and compare it to simply using \privKWPA with $g=2$.

The standard parametric test in the public setting is the two-sample t-test.  We know of three prior works that can, in some sense, be seen as providing an analogue of the two-sample t-test for the standard private setting.  The only one for which this is an explicit goal is that of D'Orazio et al.~\cite{d2015differential}.  This test releases private estimates of the difference in means and of the within-group variance and produces a confidence interval rather than a p-value. (The difference in means is done with simple Laplace noise, while the variance estimate uses a subsample-and-aggregate algorithm.) Most importantly, they assume that the size of the two groups is public knowledge, where we treat the categorical value of a data point (ex., schizophrenic or not) to be private data.

There are two other works we know of that provide a private analogue of the two-sample t-test as a result of a slightly different goal.  The first is Ding et al.~\cite{ding2018comparing}, who give a test under the more restrictive \textit{local} differential privacy definition.  This test is of course also private under the standard differential privacy definition.  The other work is that of Swanberg et al.~\cite{new_anova}, who give a private analogue of the ANOVA test, as discussed previously.  In the public setting, ANOVA with $g=2$ is equivalent to the two-sample t-test.  Based on (somewhat incomparable) experiments in their respective papers, it appears that the Swanberg et al.~test is much higher power, which is unsurprising given that it was developed for the centralized database model of privacy.  We therefore compare our work to this. 

To our knowledge, there is no prior work specifically on a private version of the Mann-Whitney test.  As before, we find that our rank-based nonparametric tests outperform the private parametric equivalent even when the data is normally distributed.  We also find that, unlike in the public setting, the more generic Kruskal-Wallace analogue (used with $g=2$) outperforms the more purpose-built test.

\subsection{The Mann-Whitney test}

The function used to calculate the Mann-Whitney $U$ statistic is formalized in Algorithm \MW. As before, $\data$ is a database of size $n$, with $r_{ij}$ being the rank of the $j^{\text{th}}$ data point in group $i$. A statistic is first calculated for each group by summing the rankings in that group and subtracting a term depending on the group size. We then take the minimum of the two statistics to get $U$.  Compared to the other statistics we are considering, the directionality of $U$ is reversed --- low values are inconsistent with the null hypothesis and cause rejection, rather than high values.

\begin{algorithm} [h]
\DontPrintSemicolon
\algrule
\textbf{Algorithm} \MW \textbf{:} Mann-Whitney Test Statistic\;
\algrule
\KwIn{\data}
    \For{$i \in \{1, 2\}$}{
        $R_i \longleftarrow \sum_{j} r_{ij}$\;
        $U_i \longleftarrow R_i - \frac{n_i(n_i + 1)}{2}$ \;
    }
    $U \longleftarrow \min \{ U_1, U_2 \}$ \;
\KwOut{$U$}
\algrule
\end{algorithm}









\subsection{A Differentially Private Algorithm}

The global sensitivity of \MW is $n$, but the local sensitivity is lower.  We prove the following in Appendix \ref{sec:mw_sens}:

\begin{theorem} [Sensitivity of \MW] \label{thm:mwsensi} 
The local sensitivity is given by $LS_{\MW}(\data) = \max \{ n_1, n_2 \}$, where $n_1$ and $n_2$ are the sizes of the two groups in $\data$.
\end{theorem}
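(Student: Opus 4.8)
The plan is to reduce everything to the behaviour of a single quantity, $U_1$, exploiting that $U=\min\{U_1,U_2\}$ together with the identity $U_1+U_2=n_1n_2$. First I would establish this identity: since $R_1+R_2=\sum_{k=1}^{n}k=\tfrac{n(n+1)}{2}$ and $n=n_1+n_2$, a direct expansion gives $U_1+U_2=\tfrac{n(n+1)}{2}-\tfrac{n_1(n_1+1)}{2}-\tfrac{n_2(n_2+1)}{2}=n_1n_2$. Hence $U=\min\{U_1,\,n_1n_2-U_1\}$, which is $1$-Lipschitz as a function of $U_1$. I would also record the combinatorial reading of $U_1$ (Mann--Whitney's original form): with ties already broken, $U_1=\#\{(a,b):a\in G_1,\ b\in G_2,\ x_a>x_b\}$ counts the cross-group pairs in which the group-$1$ value is larger. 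This makes the effect of perturbing one row transparent.

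Next I would handle the neighbours that change only the continuous value of a row while keeping its group label fixed. As the value of a point $p$ in group $i$ slides, each time it crosses a point of its own group the two merely swap ranks and $R_i$ (hence $U_i$, hence $U_1$) is unchanged, whereas each time it crosses a point of the \emph{other} group $R_i$ changes by exactly $1$. Since $p$ can cross at most $n_{3-i}$ opposite-group points, $|\Delta U_1|\le n_{3-i}$, so the largest change achievable this way is $\max\{n_1,n_2\}$ (attained by perturbing a point of the smaller group, whose complement is the larger one). Because $n_1n_2$ is fixed under a value change and $U$ is $1$-Lipschitz in $U_1$, this yields $|\Delta U|\le\max\{n_1,n_2\}$.

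Then I would treat the harder neighbours, those that relabel a row (and possibly also move its value), which are admissible because we treat group membership as private. Writing the perturbed point $p$ as moving from $G_1$ to $G_2$, only cross-pairs involving $p$ change, so $\Delta U_1=\#\{a\in G_1\setminus\{p\}:x_a>x_p\}-\#\{b\in G_2:x_b<x_p\}\in[-n_2,\,n_1-1]$. The subtlety is that the fold point of $U=\min\{U_1,\,N-U_1\}$ now moves, since the product becomes $N=(n_1-1)(n_2+1)$; one must verify that, despite the shifting centre, the bounded displacement of $U_1$ still keeps $|\Delta U|\le\max\{n_1,n_2\}$. I expect this to be the main obstacle: the clean $1$-Lipschitz argument no longer applies verbatim, and a short case analysis (tracking which of $U_1,U_2$ realizes the minimum before and after the relabelling) is needed to rule out a larger jump.

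Finally I would prove tightness by exhibiting a neighbour that attains the bound. Assuming $n_1\ge n_2$, take \data\ with every group-$2$ value above every group-$1$ value, so $U_1=0$ and $U=0$; moving a single group-$2$ point below all group-$1$ points raises that point's cross-pair count to $n_1$, giving $U_1=n_1$ and $U=\min\{n_1,\,n_1(n_2-1)\}=n_1$ whenever $n_2\ge 2$. Thus $|\Delta U|=n_1=\max\{n_1,n_2\}$, which, combined with the upper bounds from the two cases above, establishes the claimed equality $LS_{\MW}(\data)=\max\{n_1,n_2\}$. I would flag that this construction requires both groups to be nonempty and non-singleton, and would remark on (or separately dispatch) the degenerate small-group cases where $U$ is pinned and the attained value is smaller.
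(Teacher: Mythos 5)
Your route is genuinely different from the paper's and, in outline, cleaner. The paper never invokes the identity $U_1+U_2=n_1n_2$ or the pair-counting form of $U_1$; it bounds $|\Delta U_1|$ and $|\Delta U_2|$ by direct bookkeeping of rank sums (splitting the data into regions around the moved value and tracking how each region's ranks shift, including the half-integer shifts caused by average-rank ties), and then, exactly as you do, passes the bound through the minimum. What your framing buys is that the ``retains group'' case becomes a one-line crossing argument instead of a page of accounting.

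As written, though, the proposal has a genuine gap, and it is exactly the step you flag as the main obstacle: the relabelling case is a plan, not a proof. It closes more easily than you fear, with no tracking of which of $U_1,U_2$ realizes the minimum. From $U_1+U_2=n_1n_2$ before the change and $U_1'+U_2'=(n_1-1)(n_2+1)$ after, one gets $\Delta U_2=(n_1-n_2-1)-\Delta U_1$; since $\Delta U_1\in[-n_2,\,n_1-1]$, this forces $\Delta U_2\in[-n_2,\,n_1-1]$ as well, so both coordinates move by at most $\max\{n_1,n_2\}$, and the elementary fact $|\min\{U_1',U_2'\}-\min\{U_1,U_2\}|\le\max\{|\Delta U_1|,|\Delta U_2|\}$ finishes the case --- the ``shifting fold point'' never enters. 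Two smaller points. First, ties: the algorithm \MW uses average ranks (there is no random tie-breaking step here, unlike in \privKW), and the paper's proof spends most of its effort on exactly this; your assumption that ties are ``already broken'' does not match the statement being proved. The fix in your framework is painless --- count tied cross-pairs with weight $\tfrac{1}{2}$, so each pair involving the changed row still contributes a change of at most $1$ --- but it needs to be said. Second, tightness: the paper itself proves only the upper bound (which is all the privacy analysis requires), and your construction exhibits equality only at one particular database; as your own degenerate example suggests (e.g.\ $n_1=n_2=1$, where $U\equiv 0$), the stated equality cannot hold at every \data, so the theorem should be read --- and proved --- as an upper bound on the local sensitivity.
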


We can now define our private test statistic, $\privMW$.  This algorithm first uses a portion of its privacy budget ($\epsilon_m$) to estimate the size of the smallest group.  This value is then reduced to $m^*$, such that with probability $1-\delta$ we have $n-m^* >  LS_{\MW}(\data)$.  This means that we can then release $U$ using noise proportional to $n-m^*$ (using the remaining privacy budget, $\epsilon_U$.  See Appendix \ref{sec:mw_pf} for proof that \privMW is $(\epsilon_m + \epsilon_U, \delta)$-differentially private.

\begin{algorithm} [!htb]
\DontPrintSemicolon
\algrule
\textbf{Algorithm} $\privMW$ \textbf{:} Private Mann-Whitney Test Statistic\;
\algrule
\KwIn{\data, $\epsilon_{m}$, $\epsilon_{U}$, $\delta$}
    $m \longleftarrow  \min \{n_1,n_2\}$\;
    $\widetilde{m} \longleftarrow m + \lap \Big( \frac{1}{\epsilon_{m}} \Big) $\;
    $c \longleftarrow -\frac{\ln (2\delta)}{\epsilon_{m}}$\;
    $m^* \longleftarrow \max(\left \lceil{\widetilde{m}-c} \right \rceil, 0)$\;
    $\widetilde{U} \longleftarrow \MW (\data) + \lap \Big( \frac{n-m^*}{\epsilon_{U}} \Big)$\; 
\KwOut{$\widetilde{m}$, $\widetilde{U}$}
\algrule
\end{algorithm} 


As before, $\privMW$ is not meaningful on its own; we want an applicable reference distribution with which to calculate a corresponding $p$-value. This is shown below in algorithm \privMWP.  It works similarly to the analogous algorithms \privKWP and \privKWPA.  The key difference is that the reference distribution now depends on the group size estimate $\widetilde{m}$.\footnote{The algorithm given simulates full databases to compute the reference distribution.  This is not particularly slow, but in Appendix \ref{sec:mw_t1} we show that one can also sample from a normal distribution with certain parameters to get an acceptable reference distribution more quickly.}

\begin{algorithm} [!htb]
\DontPrintSemicolon
\algrule
\textbf{Algorithm} $\privMWP$ \textbf{:} Complete Mann-Whitney Test\;
\algrule
\KwIn{\data, $\epsilon_{m}$,$\epsilon_{U}$,$\delta$, $z$}
	$(\widetilde{m}, \ \widetilde{U}) \longleftarrow \privMW(\data, \epsilon_{m},\epsilon_{U},\delta)$\\
	$\widetilde{m} \longleftarrow \lceil \max(0, \widetilde{m})\rceil$ \\
    \For{$k := 1$ \textup{to} $z$}{
    	$\data^* \longleftarrow $ a database with $n$ independent uniform values from [0,1] divided into $2$ groups of size $\widetilde{m}$ and $n-\widetilde{m}$  \;
    	$U_k \longleftarrow  \privMW(\data^*, \epsilon_{m},\epsilon_{U},\delta)$
    }
    $p \longleftarrow$ fraction of $U_k$  less than $\widetilde{U}$ \;
\KwOut{$\widetilde{U}, p$}
\algrule
\end{algorithm}

\paragraph{A note on design}  In the case of \privKWPA we found that the highest possible critical value came from a reference distribution with equal-size groups.  For this test that is not the case, so we cannot use equal-size groups when generating the reference distribution without unacceptable type 1 error.  As a result, we need an estimate of group size. If we didn't need this estimate for the reference distribution, it is possible that \privMW would be more accurate by simply using the global sensitivity bound on \MW.  (It would be a slightly higher sensitivity, but no privacy budget would need to be expended on estimating $m$.)  This is a good example of a point made in Section \ref{sec:intro}: simply acheiving an accurate of estimate of a test statistic is not enough. The ultimate goal of a hypothesis test is a p-value, which also requires an accurate reference distribution and high power in order to minimize decision error.

\paragraph{Type 1 error}  The reference distribution in the \privMWP algorithm depends on $m$, which is only estimated by $\widetilde{m}$, so we need to experimentally verify that type 1 error never exceeds $\alpha$. See Appendix \ref{sec:mw_app} for evidence that our estimate appears to be sufficiently accurate and for additional discussion.

%

\begin{theorem}
Algorithm $\privMWP$ is $(\epsilon_{m} + \epsilon_{U}, \delta)$-differentially private.
\end{theorem}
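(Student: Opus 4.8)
The plan is to mirror the proof of Theorem \ref{thm:pcomp} almost verbatim, invoking the privacy of $\privMW$ together with post-processing (Theorem \ref{thm:pp}). The central observation is that $\privMWP$ accesses the raw database $\data$ exactly once, in its opening line, through the call $(\widetilde{m}, \widetilde{U}) \leftarrow \privMW(\data, \epsilon_{m}, \epsilon_{U}, \delta)$; every subsequent step operates solely on the released pair $(\widetilde{m}, \widetilde{U})$ and on freshly simulated data independent of $\data$.

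First I would appeal to the privacy guarantee for $\privMW$ established in Appendix \ref{sec:mw_pf}, namely that the output pair $(\widetilde{m}, \widetilde{U})$ is $(\epsilon_{m} + \epsilon_{U}, \delta)$-differentially private as a function of $\data$. This is the only place the privacy budget is consumed: $\epsilon_{m}$ pays for the noisy group-size estimate and $\epsilon_{U}$ for the noisy statistic, while the additive $\delta$ accounts for the small probability that the clamped value $m^*$ fails to upper-bound the local sensitivity $LS_{\MW}(\data)$. I would then argue that the remainder of $\privMWP$ is post-processing. The clamping $\widetilde{m} \leftarrow \lceil \max(0, \widetilde{m}) \rceil$ is a deterministic function of the already-private $\widetilde{m}$; the simulation loop builds each synthetic database $\data^*$ using only the public sample size $n$ and the released $\widetilde{m}$ to fix the two group sizes, and then draws i.i.d.\ uniform values from $[0,1]$ that are independent of the real data; running $\privMW$ on these synthetic databases involves no access to $\data$; and tallying the fraction of $U_k$ below $\widetilde{U}$ to form $p$ is again a function only of quantities already released or independently simulated. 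By Theorem \ref{thm:pp}, composing this (randomized) post-processing with the private output $(\widetilde{m}, \widetilde{U})$ preserves the guarantee, so the final output $(\widetilde{U}, p)$ is $(\epsilon_{m} + \epsilon_{U}, \delta)$-differentially private.

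The one subtlety worth flagging explicitly — and the only point at which the argument could plausibly go wrong — is the dependence of the reference distribution on $\widetilde{m}$. One might worry that shaping the simulated databases around the group-size estimate constitutes a second, unaccounted use of $\data$. But $\widetilde{m}$ is itself part of the differentially private output of $\privMW$, so feeding it into the simulation is legitimate post-processing rather than a fresh query against the database. I would state this clearly to distinguish the present design from a naive (and non-private) alternative that recomputed the true group size $m$ when constructing the reference distribution; that alternative would violate privacy, whereas using the already-noised $\widetilde{m}$ does not.
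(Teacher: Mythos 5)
Your proof is correct and follows essentially the same route as the paper's: cite the $(\epsilon_{m}+\epsilon_{U},\delta)$-privacy of $\privMW$ (Theorem \ref{thm:dpprrof}) and then observe that everything after the single database access --- including the use of $\widetilde{m}$ to shape the reference distribution --- is post-processing covered by Theorem \ref{thm:pp}. The subtlety you flag about $\widetilde{m}$ is a nice explicit touch, but it does not change the argument.
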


\begin{proof}
By Theorem \ref{thm:dpprrof}, the computation of $(\widetilde{m}, \ \widetilde{U})$ is $(\epsilon_{m} + \epsilon_{U}, \delta)$-differentially private. All of the steps following this computation do not require access to the database and are, thus, post processing. By Theorem \ref{thm:pp}, it follows that the complete algorithm is also $(\epsilon_{m} + \epsilon_{U}, \delta)$-differentially private.
\end{proof}

\subsection{Experimental Results}\label{sec:pwr}

\paragraph{Power analysis} We first assessed the power of our test on synthetic data.\footnote{For application of our test to real-world data, see Appendix \ref{sec:mw_app}} We run \privMWP on many simulated databases and report the percentage of the time that a significant result was obtained.  For our first effect size, we have the two groups consist of normally distributed data with means one standard deviation apart.  In all experiments we set $\delta = 10^{-6}$.


Our first step was to determine the optimal proportion of the total privacy budget, $\epsilon_{tot}$, to allot to estimating $m$ and the test statistic $\privMW$. We found that the optimal proportion of $\epsilon$ to allot to estimating $m$ is roughly $.65$, experimentally confirmed at several choices of $\epsilon_{tot}$, effect size, total sample size $n$, group size ratios $n_1/n$, and underlying distribution. (See Appendix \ref{sec:mw_app} for more details.) We then fix the proportion of $\epsilon_{tot}$ allotted to $\epsilon_{m}$ as $.65$ and vary $\epsilon_{tot}$ and total sample size $n$ to measure the power of our test.

\begin{figure} [!htb] 
    \centering
    \includegraphics[width=\linewidth, keepaspectratio]{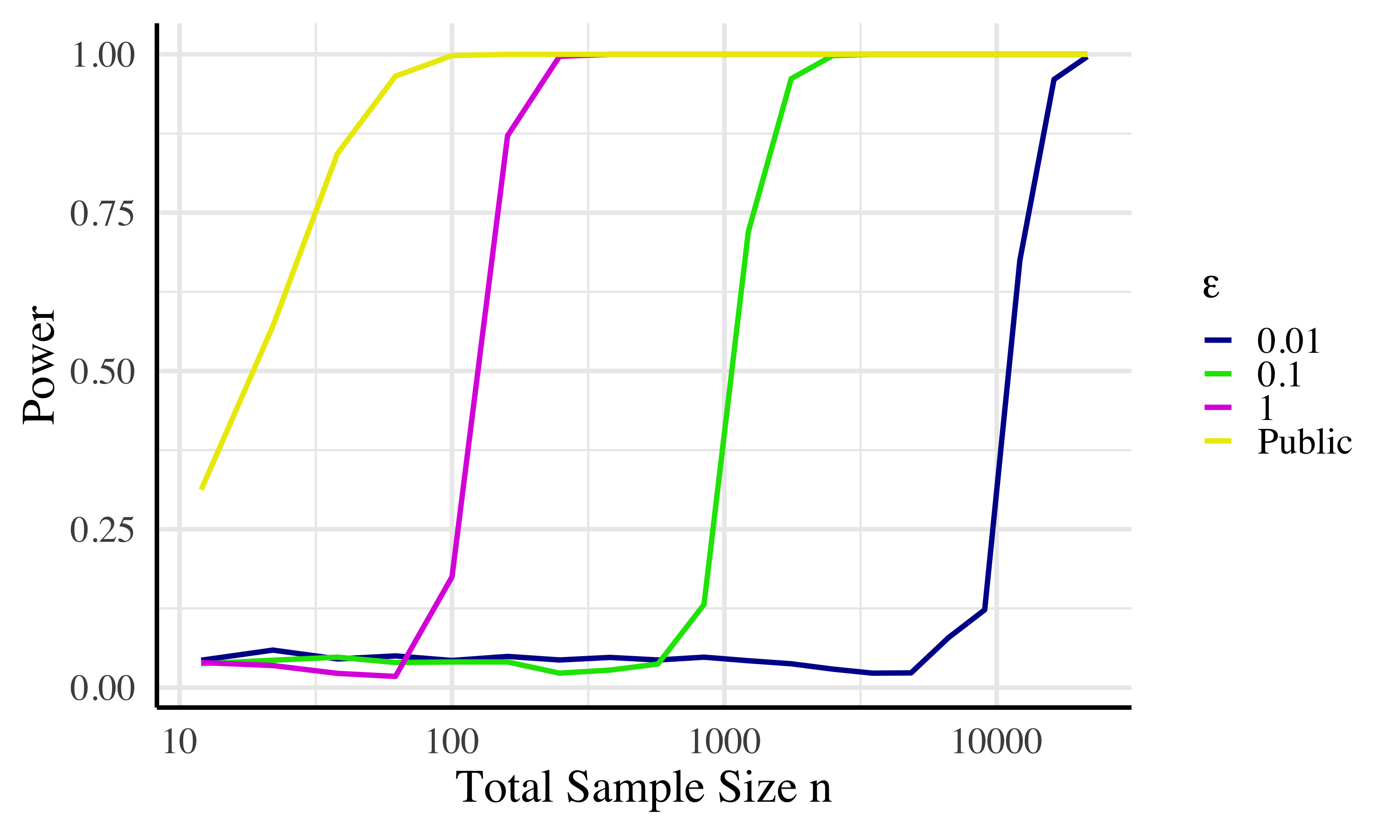}
    \caption{Power of $\privMWP$ at various values of $\epsilon_{tot}$ and total sample size $n$. (Effect size: $\mu_{1} - \mu_{2} = 1 \sigma$; proportion of $\epsilon_{tot}$ to $\epsilon_{m} = .65$; $\alpha=.05$; $m$:($n-m$) $= 1$)}\label{fig:var_N_tot_eps}
\end{figure}

As shown in Figure \ref{fig:var_N_tot_eps}, the power loss due to privacy is not unreasonably large. At an $\epsilon_{tot}$ of $1$, the test only requires a database that is approximately a factor of $3$ larger than that needed for the public test to reach a power of $1$. As one might expect, the database size needed to detect a given effect has a roughly inverse relationship with $\epsilon_{tot}$. In the appendix we perform a similar power analysis, varying effect size rather than sample size.

\paragraph{Uniformity of p-values} Algorithm $\privMWP$ uses the privatized group sample sizes $m^*, (n - m^*)$ in place of the true group sizes $n_1$, $n_2$ in order to simulate the reference distribution. Naturally, then, one may wonder \textit{how} conservative our critical values are as a result of ensuring that the type 1 error rate does not exceed $\alpha$. As shown in Figure \ref{fig:qqplot_mw}, the type I error rate of our test does not exceed $\alpha$ when group sample sizes are equal. As total sample size $n$ increases, the p-value quantiles asymptotically approach that of the theoretical distribution. In Appendix \ref{sec:mw_app}, we also examine uniformity of p-values of \privMWP with unequal group sizes and a variation of \privMWP that assumes equal group sizes.

\begin{figure} [!htb]
    \centering
    \includegraphics[width=\linewidth, keepaspectratio]{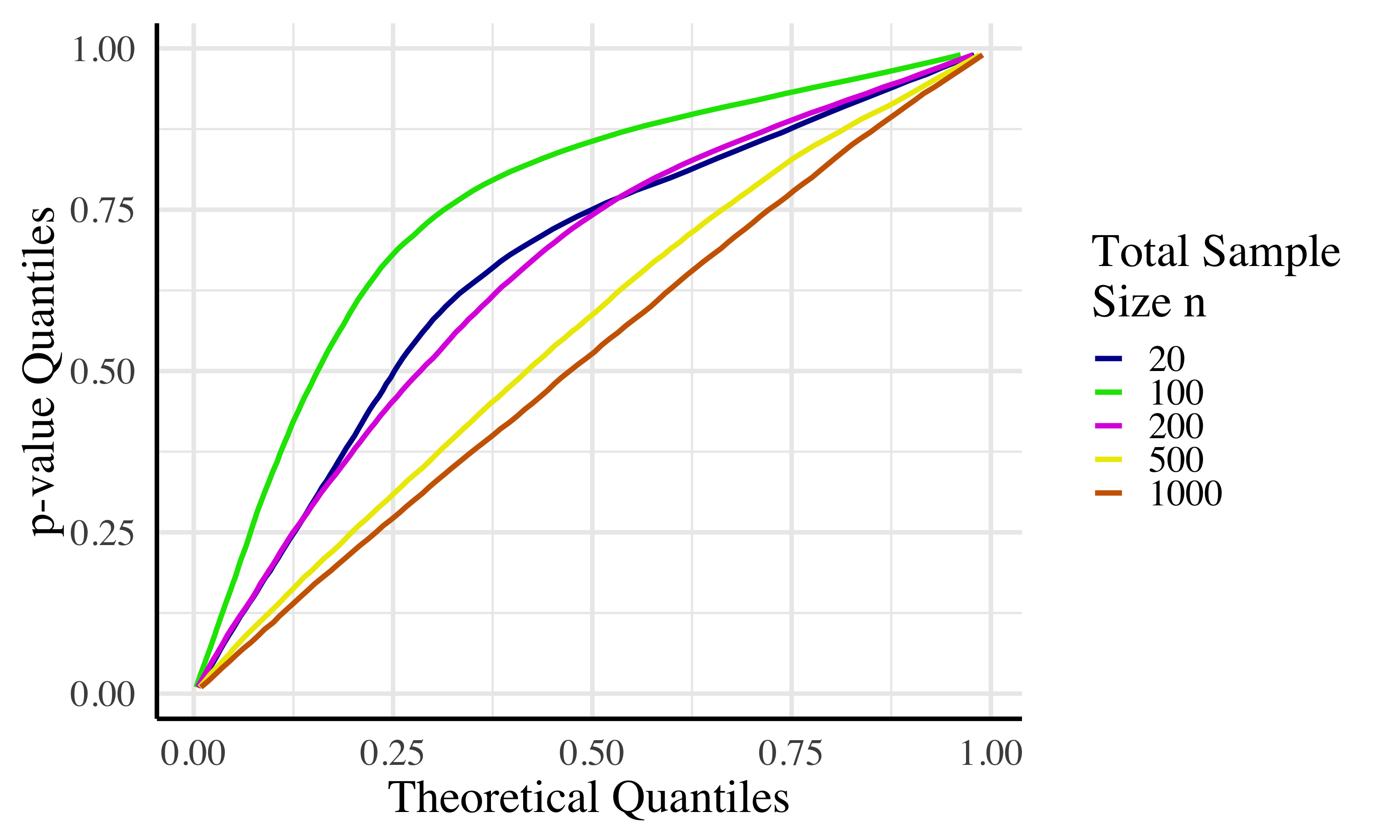}
    \caption{A quantile-quantile plot of \privMWP varying $n$. ($\epsilon_{tot} = 1$; proportion of $\epsilon_{tot}$ to $\epsilon_{m} = .65$; $m$:($n-m$)$ = 1$; normally distributed sample data)}\label{fig:qqplot_mw}
\end{figure}

\paragraph{Comparison to previous work} The best existing test applicable in the same use case is that of Swanberg et al.~\cite{new_anova}.
Their differentially private ANOVA test can be used in the 2-group case to compare to our Mann-Whitney test. The results of this comparison, using the same paramater settings chosen for optimal power in their test, can be seen in Figure \ref{fig:mw_vs_anova}, where our test offers a substantial power increase.

\begin{figure} [!htb] 
    \centering
    \includegraphics[width=\linewidth, keepaspectratio]{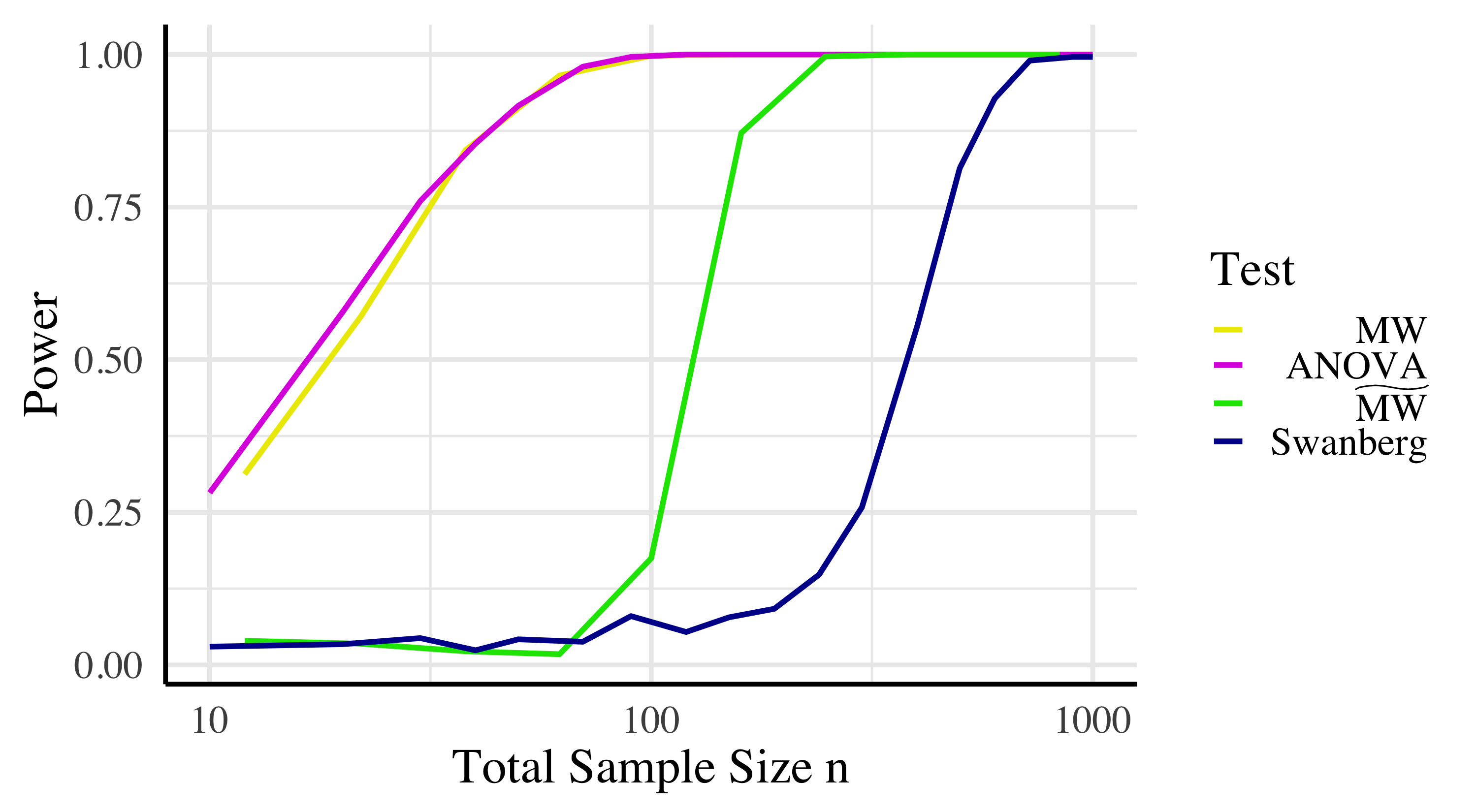}
    \caption{Power of $\privMWP$ and Swanberg et. al.'s test at various $n$. ($\epsilon_{tot} = 1$; Effect size: $\mu_{1} - \mu_{2} = 1 \sigma$; proportion of $\epsilon_{tot}$ to $\epsilon_{m} = .65$; $\alpha=.05$; $m$:($n-m$) $= 1$), normally distributed sample data}\label{fig:mw_vs_anova}
\end{figure}

\paragraph{Comparing \privMWP and \privKWPA} Both the Mann-Whitney and the Kruskal-Wallis can be used to compare the distributions of samples from two groups. As shown in Figure \ref{fig:kw_vs_mw}, we find that in the private setting, \privKWPA is more statistically powerful than \privMWP. This is perhaps surprising, since one might expect the test developed specifically for the two-group case to perform better.  But this is an example of how some tests privatize more easily than others.  \privMWP requires knowledge of the group sizes, using up a fraction of the privacy budget, while the \privKWPA statistic is not dependent on group size.

\begin{figure} [!htb] 
    \centering
    \includegraphics[width=\linewidth, keepaspectratio]{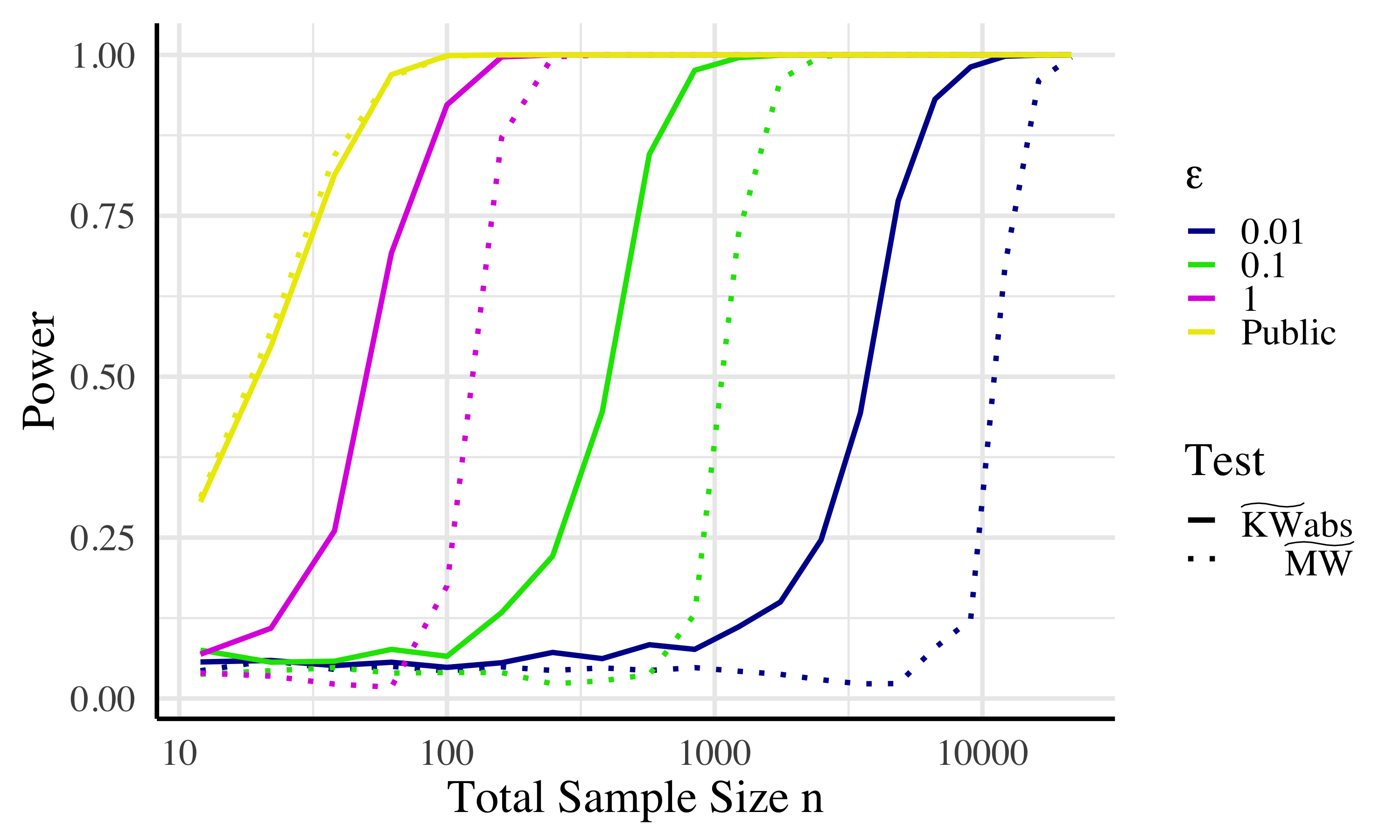}
    \caption{Power of $\privMWP$ and $\privKWPA$ at various $n$ and $\epsilon$. (Effect size: $max(\mu_{n}) - min(\mu_{n}) = 1 \sigma$; $g = 2$; $\alpha=.05$; equal group sizes; normally distributed sample data)}
    \label{fig:kw_vs_mw}
\end{figure}

We did find one exception to this finding.  If the analyst knows \textit{a priori} that the two groups are of equal size (e.g., the data collection method guaranteed an equal number in each group) then \privMWP can be run using an exact value of $n/2$ for the local sensitivity without the need to dedicate any privacy budget to estimating $m$.  This increases the accuracy of \privMW both by reducing the sensitivity used to add noise and by increasing the privacy budget allocated to $U$.  We find that in this case \privMWP \textit{is} superior to \privKWPA.  See Appendix \ref{sec:mw_app} for more details.

\section{Paired Data}\label{sec:wc}

We now consider a third situation, where there are two groups and the observations in those groups are paired.  While this scenario did not exist in the original psychotic disease study, one can imagine recording the methylation levels of one of the groups before and after administering medication.  Each subject then contributes a pair of data $(u_i, v_i)$ that are highly correlated with one another. One can assess the impact of the medication by considering whether the set of $n$ differences, $\{v_i-u_i\}_i$, is plausibly centered at zero. The standard nonparametric hypothesis test for this situation is the Wilcoxon signed-rank test, proposed in 1945 by Frank Wilcoxon \cite{WilcoxonDefinition}.  The parametric alternative is a simple one-sample t-test run on the set of differences.

This is the one setting where we are aware of prior work on a nonparametric test. Task and Clifton \cite{DPWilcoxon} gave the first private analogue of the Wilcoxon signed-rank test, referred to from here on as the \textit{TC test}, in 2016. Our test makes two key improvements to theirs and exhibits higher power.  We also correct some errors in their work.  We discuss the differences in more detail in Section \ref{sec:wc_results}.

Despite its status as one of the most commonly used hypothesis tests, to our knowledge there is no practical, implementable private version of a one-sample t-test in the literature.  In Section \ref{sec:ttest} we discuss some work that comes close, and then we give our own first attempt at a private t-test.  We again find that our nonparametric test has significantly higher power than this parametric alternative.

\subsection{The Wilcoxon signed-rank test}\label{wc}


The function calculating the Wilcoxon test statistic is formalized in Algorithm $\mathcal{W}$.  Given a database $\data$ containing sets of pairs $(u_i, v_i)$, the test computes the difference $d_i$ of each pair, drops any with $d_i=0$, and then ranks them by magnitude.  (If magnitudes are equal for several differences, all are given a rank equal to the average rank for that set.)

\begin{algorithm}
\algrule
\DontPrintSemicolon
\textbf{Algorithm } \W \textbf{:} Wilcoxon Test Statistic \;
\algrule
\KwIn{$\data$}
    \For{row $i$ of $\data$}{
       $d_i \longleftarrow |v_i - u_i|$\;
        $s_i \longleftarrow \textup{Sign}(v_i - u_i)$\;
    }
    Order the terms from lowest to highest $d_i$\;
    Drop any $d_i = 0$\;
    \For{row $i$ of $\data$}{
        $r_i \longleftarrow \textup{rank of row }i$
   }
   $w \longleftarrow \sum_i s_i r_i$\;
\KwOut{$w$}
\algrule
\label{alg:wc}
\end{algorithm}






Under the null hypothesis that $u_i$ and $v_i$ are drawn from the same distribution, the distribution of the test statistic $W$ can be calculated exactly using combinatorial techniques. This becomes computationally infeasible for large databases, but an approximation exists in the form of the normal distribution with mean 0 and variance $\frac{n_r(n_r+1)(2n_r+1)}{6}$, where $n_r$ is the number of rows that were not dropped.  Knowing this, one can calculate the p-value for any particular value of $w$.

\subsection{Our Differentially Private Algorithm}

At a high level, our algorithm is quite straightforward and similar to prior work: we compute a test statistic as one might in the public case and add Laplacian noise to make it private.  However, there are several important innovations relative to Task and Clifton that greatly increase the power of our test.

Our first innovation is to use a different variant of the Wilcoxon test statistic.  While the version introduced in Section \ref{wc} is the one most commonly used, other versions have long existed in the statistics literature.  In particular, we look at a variant introduced by Pratt in 1959 \cite{Pratt}.  In this variant, rather than dropping rows with $d_i = 0$, those rows are included.  When $d_i=0$ we set $s_i = \textup{Sign}(d_i) = 0$, so those rows contribute nothing to the resultant statistic, but they do push up the rank of other rows. 

\begin{algorithm}
\DontPrintSemicolon
\algrule
\textbf{Algorithm }$\WP$\textbf{:} Wilcoxon Test Statistic - Pratt Variant \;
\algrule
\KwIn{$\data$}
\For{row $i$ of $\data$}{
        $d_i \longleftarrow |v_i - u_i|$\;
        $s_i \longleftarrow \textup{Sign}(v_i - u_i)$ \;
    }
    Order the terms from lowest to highest $d_i$\;
    \For{row $i$ of $\data$}{
        $r_i \longleftarrow \textup{rank of row }i$
   }
   $w \longleftarrow \sum_i s_i r_i$\;
\KwOut{$w$}
\algrule
\end{algorithm}



In the public setting, the Pratt variant is not very different from the standard Wilcoxon, being slightly more or less powerful depending on the exact effect one is trying to detect \cite{WCcomp}.  In the private setting, however, the difference is substantial.

The benefit to the Pratt variant comes from how the test statistics are interpreted.  In the standard Wilcoxon, it is known that the test statistic follows an approximately normal distribution, but the variance of that distribution is a function of $n_r$, the number of non-zero $d_i$ values.  In the private setting, this number is not known, and this has caused substantial difficulty in prior work.  (See Section \ref{sec:wc_results} for more discussion.)  On the other hand, the Pratt variant produces a test statistic that is always compared to the same normal distribution, which depends only on $n$.  The algorithm \privWP that outputs a differentially private analogue is shown below. 


\begin{algorithm}[!htb]
\DontPrintSemicolon
\algrule
\textbf{Algorithm } $\privWP$ \textbf{:} Private Wilcoxon Test Statistic\;
\algrule
\KwIn{$\data$, $\epsilon$}
    $w \longleftarrow \WP(\data)$\;
    $\widetilde{w} \longleftarrow w + \lap \Big (\frac{2n}{\epsilon} \Big )$\;
\KwOut{$\widetilde{w}$}
\algrule
\label{alg:wcalg}
\end{algorithm}

\begin{theorem} \label{thm:wcalg}
Algorithm \privWP  is $\epsilon$-differentially private. 
\end{theorem}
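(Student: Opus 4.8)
The plan is to recognize that Algorithm $\privWP$ is exactly the Laplace mechanism applied to the statistic $\WP$, so by Theorem~\ref{thm:lm} it suffices to establish that the global sensitivity of the Pratt variant satisfies $GS_{\WP} \le 2n$. Note first that under our neighboring relation a single row is altered but never added or deleted, so $n$ is identical for $\data$ and $\data'$ and is public; hence the noise scale $2n/\epsilon$ is data-independent and the mechanism is well-defined. All that remains is the sensitivity bound.

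To bound the sensitivity I would rewrite the statistic in terms of rank positions rather than rows. After sorting the rows by magnitude $d_i$, let $\sigma_p \in \{-1,0,+1\}$ be the sign $s_i$ of the row occupying rank position $p$, so that $w = \sum_{p=1}^{n} \sigma_p\, p$. (Zero-difference rows carry $\sigma_p = 0$ but still occupy the lowest positions, which is precisely the Pratt convention, so they contribute nothing directly but do shift other ranks.) Passing from $\data$ to a neighbor removes the altered row's sign from its old position $p_1$, with old sign $a \in \{-1,0,+1\}$, and reinserts a possibly new sign $b \in \{-1,0,+1\}$ at a new position $p_2$. Every other row keeps its own sign, and only those rows strictly between $p_1$ and $p_2$ have their rank shifted, each by exactly one. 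Writing out the change for the case $p_2 > p_1$ gives
$$\Delta w = b\, p_2 - a\, p_1 - \sum_{q=p_1+1}^{p_2} \sigma_q,$$
with the symmetric expression (a $+$ sum) when $p_2 < p_1$, and $\Delta w = (b-a)\,p_1$ when $p_2 = p_1$.

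The one nontrivial step is bounding this by $2n$: a naive triangle inequality bounds the three terms separately by $n$, $n$, and $p_2 - p_1 \le n$, which only yields $3n$. Instead I would maximize jointly over the signs, exploiting that they are independent across rows. The displayed $\Delta w$ is largest when $b = +1$, $a = -1$, and every intervening $\sigma_q = -1$, giving $\Delta w = p_2 + p_1 + (p_2 - p_1) = 2p_2 \le 2n$; the mirror-image choice yields $\Delta w \ge -2p_2 \ge -2n$, and the equal-position case gives $|\Delta w| \le 2p_1 \le 2n$. Thus $|\Delta w| \le 2n$ for every pair of neighbors, so $GS_{\WP} \le 2n$ (and taking $p_2 = n$ shows this is tight). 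Applying Theorem~\ref{thm:lm} with sensitivity $2n$ then completes the proof.

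The main obstacle, and essentially the only place care is needed, is this coordinated-sign maximization of the rank-shift term $\sum_{q=p_1+1}^{p_2}\sigma_q$ together with the endpoint contributions: the cancellation that brings $3n$ down to $2n$ relies on the worst case having all intervening signs aligned against the old sign and with the new sign. Everything else is bookkeeping about how ranks shift under a single-row change and a direct appeal to the Laplace mechanism; ties in magnitude, handled by the averaged-rank convention, only smooth the positions and do not increase the bound.
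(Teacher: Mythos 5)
Your core computation is correct, and it is essentially the paper's own argument in a cleaner notation: the paper likewise tracks the worst case in which the moved row's sign flips (contributing $r_j + r'_j$) and each row lying strictly between the old and new positions shifts rank by one, then sums these effects to get a bound of $2a_1+2a_2+2a_3+a_4 \le 2n$. Your position-sum identity $w=\sum_p \sigma_p\, p$ with the joint maximization $b p_2 - a p_1 - \sum_q \sigma_q \le p_2 + p_1 + (p_2-p_1) = 2p_2 \le 2n$ is the same accounting, stated more compactly, and your tightness example is valid.

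The genuine gap is your treatment of ties. Algorithm \WP assigns tied magnitudes their \emph{average} rank, so on databases with ties among nonzero $d_i$ your identity $w=\sum_p \sigma_p\, p$ with $\sigma_p\in\{-1,0,+1\}$ is simply false: two tied rows with signs $+1$ and $-1$ at positions $3$ and $4$ contribute $0$ under rank averaging but $\pm 1$ under any integer assignment of positions. Moreover, a single-row change perturbs the ranks of rows tied with the old or new value by half-steps, an effect your decomposition does not see. Your closing sentence, that ties ``only smooth the positions and do not increase the bound,'' is precisely the statement that requires proof; it is where the paper spends the regions $A_2$ and $A_4$ of its case analysis (each row tied with the old value shifts by $\tfrac12$, contributing $\tfrac{a_2-1}{2}$, and each row tied with the new value contributes $\tfrac{a_4}{2}$, after which the total still collapses to at most $2n$). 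The claim is in fact true, and your argument can be repaired in either of two ways: redo the bookkeeping with averaged ranks and half-shifts as the paper does, or observe that the averaged-rank statistic equals the expectation of the integer-rank statistic over uniformly random tie-breakings and exhibit a coupling of the tie-breakings of $\data$ and $\data'$ (order the unchanged rows identically in both, place the changed row independently) under which every coupled pair of integer-rank databases differs by a single-row move, so your $2n$ bound transfers by averaging. Some such argument must appear, since the theorem is about the algorithm as defined, ties included.
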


See Appendix \ref{sec:wc_pf} for proof of Theorem \ref{thm:wcalg}.

To complete the design of our test, we compute a reference distribution through simulation as was done in \privKWPA and \privMWP.  Here we use the standard normal approximation for the distribution of the $w$ test statistic, though one could simulate full databases as well.  We call this algorithm \privWPP.


\begin{algorithm} [!htb]\label{alg:complete_wc}
\DontPrintSemicolon
\algrule
\textbf{Algorithm } $\privWPP$ \textbf{:} Complete Wilcoxon Test\;
\algrule
\KwIn{\data, $\epsilon$, $z$}
	$\widetilde{w} \longleftarrow \privWP(\data, \epsilon)$\;
    \For{$k = 1$ \textup{to} $z$}{
                $w_k \longleftarrow {\sf Normal}(0, n(n+1)(2n+1)/6) + \lap(2n/\epsilon)$;
    }
    $p \longleftarrow$ fraction of $w_k$ more extreme than $\widetilde{w}$ \;
\KwOut{$\widetilde{w}, p$}
\algrule
\end{algorithm}

\begin{theorem}
Algorithm \privWPP is $\epsilon$-differentially private. 
\end{theorem}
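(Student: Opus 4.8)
The plan is to reuse the template from the proof of Theorem \ref{thm:pcomp} for the complete Kruskal-Wallis test: identify the single step that reads the database, cite the privacy guarantee already proved for that step, and then dispatch everything else by post-processing.

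First I would note that the only line of \privWPP that accesses \data is the initial call $\widetilde{w} \leftarrow \privWP(\data, \epsilon)$, which by Theorem \ref{thm:wcalg} is $\epsilon$-differentially private. I would then argue that none of the remaining steps touch the database again: the loop that draws the reference sample $w_k$ and the final tally of the fraction of $w_k$ more extreme than $\widetilde{w}$ depend on the data only through the already-released value $\widetilde{w}$ and through the public parameters $\epsilon$ and $z$.

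The one detail worth checking is that the reference-distribution generation uses $n$, both in the variance $n(n+1)(2n+1)/6$ of the normal draw and in the Laplace scale $2n/\epsilon$. Under the differential privacy variant adopted in this paper, neighboring databases are obtained by altering a single row rather than adding or removing one, so $n$ is public knowledge and referencing it consumes no privacy budget. This is in fact simpler than the \privMWP case, whose reference distribution depended on the \emph{privately estimated} group size $\widetilde{m}$ rather than on a publicly known quantity. Once this is observed, the loop together with the p-value computation is just a randomized map applied to $\widetilde{w}$ with no further access to \data, so Theorem \ref{thm:pp} (post-processing) yields that the full composition is $\epsilon$-differentially private. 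I anticipate no genuine obstacle: the entire argument is the clean separation of the lone private release from the data-independent post-processing, exactly as in the earlier completeness theorems.
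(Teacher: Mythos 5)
Your proposal is correct and follows essentially the same route as the paper's own proof: the call to \privWP is $\epsilon$-differentially private by Theorem \ref{thm:wcalg}, and everything else is post-processing via Theorem \ref{thm:pp}. Your added observation that $n$ is public under this paper's fixed-size neighboring-database convention is a sound (and worthwhile) clarification, but it does not change the argument.
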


\begin{proof}
The computation of $\widetilde{w}$ was already shown to be private.  The remaining computation needed to find the p-value does not need access to the database---it is simply post-processing.  By Theorem \ref{thm:pp}, it follows that the \privWPP algorithm is also private.
\end{proof}

\subsection{Experimental Results}\label{sec:wc_results}

\paragraph{Power analysis} We assess the power of our differentially-private Wilcoxon signed-rank test first on synthetic data.  (For tests with real data, see Appendix \ref{sec:realdata}.) In order to measure power, we must first fix an effect size.  We chose to have the $u_i$ and $v_i$ values both generated according to normal distributions with means one standard deviation apart. We then measure the statistical power of Algorithm \privWPP (for a given choice of $n$ and $\epsilon$) by repeatedly randomly sampling a database \data from that distribution and then running \privWPP on that database.\footnote{Our actual implementation differs slightly from this.  To save time when running a huge number of tests with identical $n$ and $\epsilon$, we first generate the reference distribution $W_k$ values, which can be reused across runs.} The power is the percentage of the time \privWPP returns a p-value less than $\alpha$. See Appendix \ref{sec:wc_appendix} for a similar analysis of power, varying effect size rather than sample size.

\begin{figure} [!htb] 
    \centering
    \includegraphics[width=\linewidth]{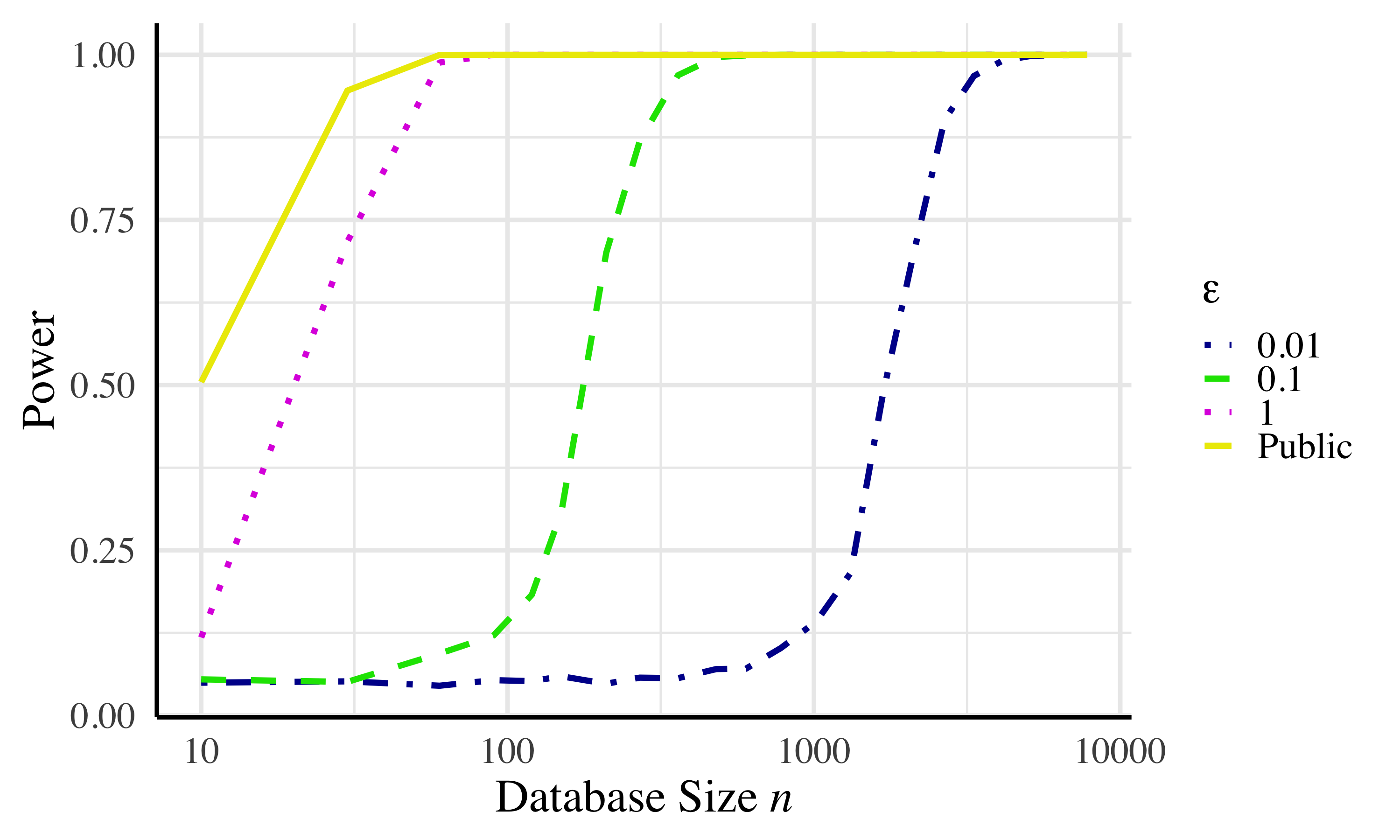}
    \caption{Power of \privWPP at various $\epsilon$ and $n$. (Effect size: $\mu_u - \mu_v = 1 \sigma$; $\alpha = .05$; normally distributed sample data)}
    \label{fig:noncomp_without_ties}
\end{figure}


\paragraph{Uniformity of p-values} 
In algorithm \privWPP we draw our reference distribution samples (the $w_k$ values) assuming there are no $d_i=0$ rows.  The distribution will technically differ slightly when there are many rows with $d_i=0$, so we need to confirm experimentally that the difference is inconsequential or otherwise acceptable.

Figure \ref{fig:qqplot_wc} shows a Q-Q plot of \privWPP on three sets of p-values, all generated under $H_0$, with $\epsilon = 1$, $n = 500$. When there are no ties in the original data (0\% of $d_i = 0$), the Q-Q plot line is indistinguishable from the identity line, indicating that the test is properly calibrated. Encouragingly, introducing a substantial number of ties into the data (30\% of $d_i = 0$) has little noticeable effect. In order to induce non-uniformity in the p-values, one needs an extremely high proportion of rows with $d_i=0$.  The curve with 90\% zero values is shown as an illustration.  When the proportion of zeros is very high, the variance of the p-values will be narrower than the reference distribution, resulting in a lower critical value.  Since the value we are using is higher, our test is overly conservative,\footnote{One could try to estimate the number of zeros to be less conservative, but that would require allocating some of the privacy budget towards that estimate, which is not worth it in most circumstances.} but this is acceptable as type I error is still below $\alpha$.

\begin{figure} [!htb]
    \centering
    \includegraphics[width=\linewidth]{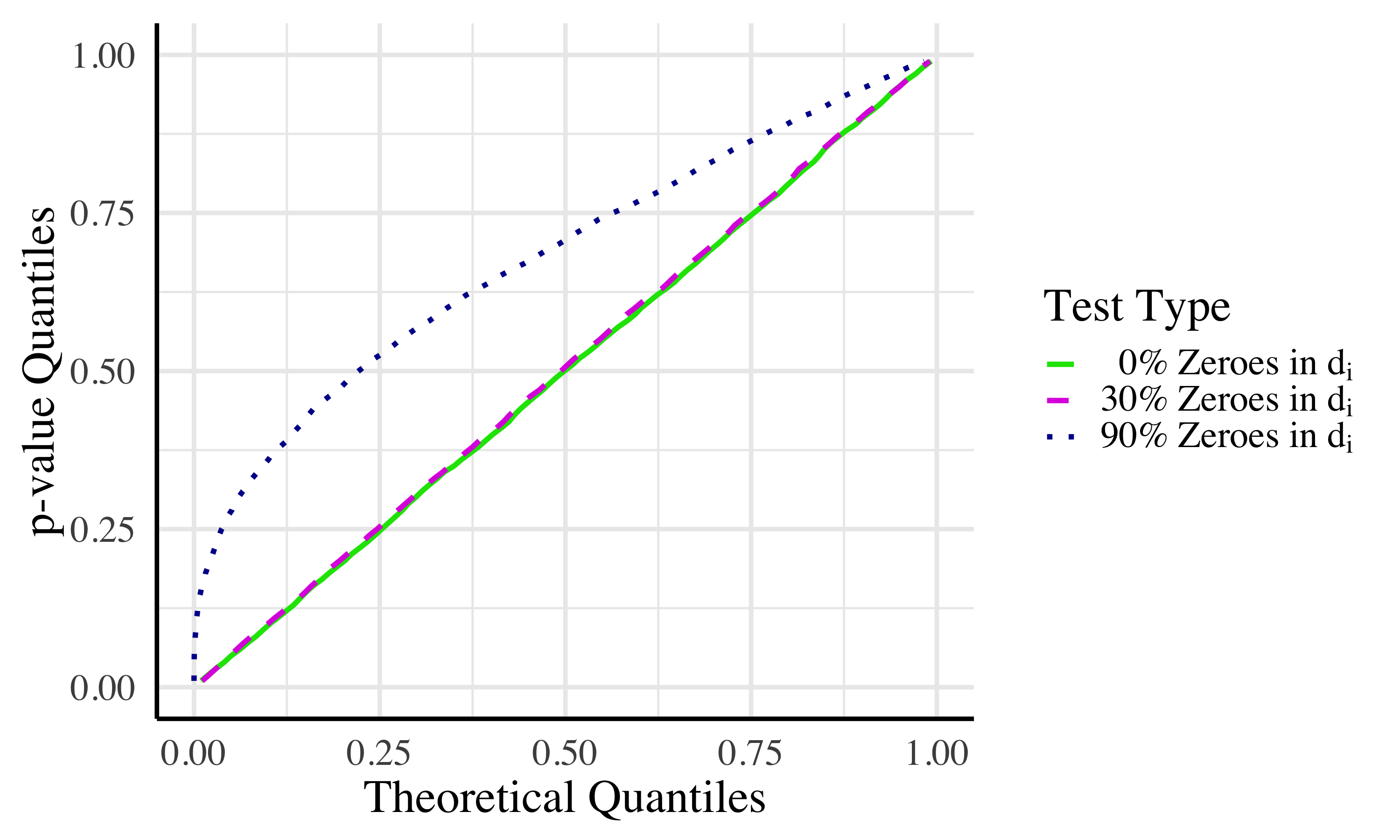}
    \caption{A quantile-quantile plot of \privWPP comparing the distribution of simulated p-values to the uniform distribution ($\epsilon = 1$, $n = 500$; normally distributed sample data).}
    \label{fig:qqplot_wc}
\end{figure}

\paragraph{Comparison to previous work} In 2016, Task and Clifton \cite{DPWilcoxon} introduced the first differentially private version of the Wilcoxon signed-rank test, from here on referred to as the TC test. Our work improves upon their test in two ways. We describe the two key differences below, and then compare the power of our test to theirs.  We also found a significant error in their work.\footnote{This error has been confirmed by Task and Clifton in personal correspondence.}  All comparisons are made to our implementation of the TC test with the relevant error corrected.

Task and Clifton compute an analytic upper bound on the critical value $t^*$.  For a given $n$ and $\epsilon$, the private test statistic $\widetilde{w}$ under $H_0$ is sampled according to a sum $W + \Lambda$, where $W$ is a random draw from a normal distribution (scaled according to $n$) and $\Lambda$ is a Laplace random variable (scaled according to $n$ and $\epsilon$).  In particular, say that $b$ is a value such that $\Pr[W>b] < \beta$ and $g$ is a value such that $\Pr[\Lambda > g] < \gamma$.  Then we can compute the following bound.  (The last line follows from the fact that the two events are independent.)
\begin{align*}
\Pr[\widetilde{W} > b + g] &< \Pr[W > b \text{ or } \Lambda > g] \\
&= \Pr[W > b] + \Pr[\Lambda > g]\\ 
&\ \ \ - \Pr[W > b \text{ and } \Lambda > g] \\
&= \beta + \gamma - \beta\gamma
\end{align*}

Task and Clifton always set $\gamma = .01$ and then vary  the choice of $\beta$ such that they have $\alpha = \beta + \gamma - \beta\gamma$ for whatever $\alpha$ is intended as the significance threshold.\footnote{This is where Task and Clifton make an error.  This formula is correct, but they used an incorrect density function for the Laplace distribution and as a result calculated incorrect values of $g$.}

The bound described above is correct but very loose, and our simulation method gives drastically lower critical values.
Table \ref{tab:n_100_comp} contains examples of the critical values achieved by each method for several parameter choices. More values can be found in Appendix \ref{sec:wc_appendix}, where we also experimentally confirm that these values result in acceptable type 1 error. 

\begin{table}[ht]
\centering
\caption{Critical Value Comparison for $n=100$}\label{tab:n_100_comp}
\begin{tabular}{llrrr}
  \hline
$\epsilon$ & $\alpha$ & Public & New & TC \\ 
  \hline
1 & 0.1 & 1.282 & 1.417 & 2.680 \\ 
   & 0.05 & 1.645 & 1.826 & 3.091 \\ 
   & 0.025 & 1.960 & 2.186 & 3.511 \\ 
   \hline
0.1 & 0.1 & 1.282 & 5.684 & 14.786 \\ 
   & 0.05 & 1.645 & 8.063 & 15.197 \\ 
   & 0.025 & 1.960 & 10.438 & 15.617 \\ 
   \hline
0.01 & 0.1 & 1.282 & 55.350 & 135.843 \\ 
   & 0.05 & 1.645 & 79.233 & 136.254 \\ 
   & 0.025 & 1.960 & 103.116 & 136.674 \\ 
   \hline
\end{tabular}
\\[10pt]
\caption*{Critical values for \textit{n} = 100 and several values of $\epsilon$ and $\alpha$.  To allow easy comparison, these values are for a normalized $W$ statistic, i.e., $W$ has been divided by the relevant constant so that it is (before the addition of Laplacian noise) distributed according to a standard normal. See Appendix \ref{sec:wc_appendix} for the equivalent table at n = 1000.}
\end{table}

Our second key change from the TC test, mentioned earlier, is that we handle rows with $d_i=0$ according to the Pratt variant of the Wilcoxon, rather than dropping them completely as is more traditional.  The reason the traditional method is so difficult in the private setting is that the reference distribution one must compare to depends on the number of rows that were dropped.  If $n_r$ is the number of non-zero rows (i.e., rows that weren't dropped), one is supposed to look up the critical value associated with $n_r$, rather than the original size $n$ of the database.  

Unfortunately, $n_r$ is a sensitive value and cannot be released privately.\footnote{A private estimate could be released, but one would have to devote a significant portion of the privacy budget for the hypothesis test to this estimate, greatly decreasing the accuracy/power of $\privW$.}  Task and Clifton show that it is \textit{acceptable} (in that it does not result in type I error greater than $\alpha$) to compare to a critical value for a value of $n_r$ that is lower than the actual value.  This allows them to give two options for how one might deal with the lack of knowledge about $n_r$.

\begin{description}
\item[High Utility] This version of the TC test simply assumes $n_r \geq .3n$ and uses the critical value that would be correct for $n_r = .3n$.  We stress that this algorithm is \textit{not} actually differentially private, though it could easily be captured by a sufficiently weakened definition that limited the universe of allowable databases.  Another problem is that for most realistic data, $n_r$ is much greater than $.3n$ and using this loose lower bound still results in a large loss of power.
\item[High Privacy] This version adds $k$ dummy values to the database with $d_i = \infty$ and $k$ with $d_i = -\infty$.\footnote{Task and Clifton do not discuss how to choose $k$, and in our experimental comparisons we set $k=15$, the same value they use.}  Then one can be certain of the bound $n_r \geq 2k$.  This is a guaranteed bound so this variant truly satisfies differential privacy.  On the other hand it is a very loose lower bound in most cases, leading to a large loss of power. 
\end{description}

\paragraph{Experimental comparison}  We compare the statistical power of our test to that of the TC test.  We begin by again measuring the power when detecting the difference between two normal distributions with means one standard deviation apart.  The results can be seen in Figure \ref{fig:comp_no_ties}.  If we look at the database size needed to achieve 80\% power, we find that the 32 data points we need, while more than the public test (14), are many fewer than the TC High Utility variant (80) or the TC High Privacy variant (122).  Appendix \ref{sec:wc_appendix} includes a figure for $\epsilon=.1$ as well.  What we see is that, while all private tests require more data, our test (requiring $n\approx 236$) still requires about 40\%  as much data as the TC High Utility variant (588).  The TC High Privacy variant, however, scales much less well to low $\epsilon$ and requires roughly 2974 data points.

\begin{figure} [!htb] 
    \centering
    \includegraphics[width=\linewidth]{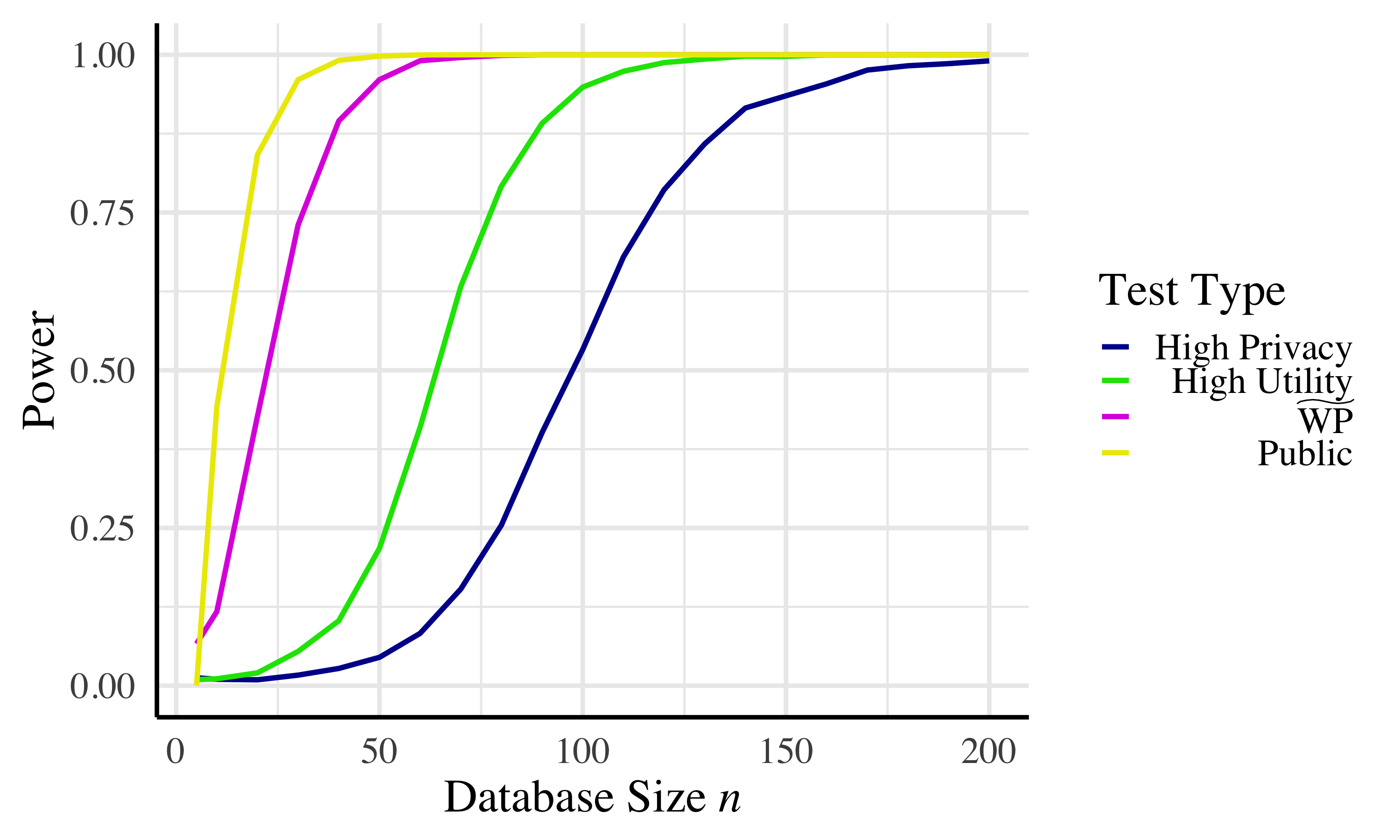}
    \caption{Power of the TC test, \privWPP, and the public test at various $n$. (Effect size: $\mu_u - \mu_v = 1 \sigma$, $\epsilon = 1$; $\alpha=.05$; normally distributed sample data)}
    \label{fig:comp_no_ties}
\end{figure}

The results in Figure \ref{fig:comp_no_ties} use a continuous distribution for the real data, so there are no data points with $d_i=0$.  Because one of the crucial differences between our algorithms is the method for handling these zero values, we also consider the effect when there are a large number of zeros in Appendix \ref{sec:wc_appendix}.  Overall, we see that both in situations with no zero values and situations with many, our test achieves the rigorous privacy guarantees of the TC High Privacy test while achieving greater utility than the TC High Utility test.

\paragraph{Relative contribution of improvements} Given that we make two meaningful changes to the TC test, one might naturally wonder whether both are truly useful or whether the vast majority of the improvement comes from one of the two changes.  To test this, we compare to an updated variant of the TC test where we calculate critical values exactly through simulation, as we do in our algorithm, but otherwise run the TC test unchanged (referred to as "High Privacy +" and "High Utility +"). The result is presented in Figure \ref{fig:comp_no_ties_w_plus}, where we find the resulting algorithm to rest comfortably between the original TC test and our proposed test.  This means that both the change to the critical value calculation and the switch to the Pratt method of handling $d_i=0$ rows are important contributions to achieving the power of our test.

\begin{figure} [!htb] 
    \centering
    \includegraphics[width=\linewidth]{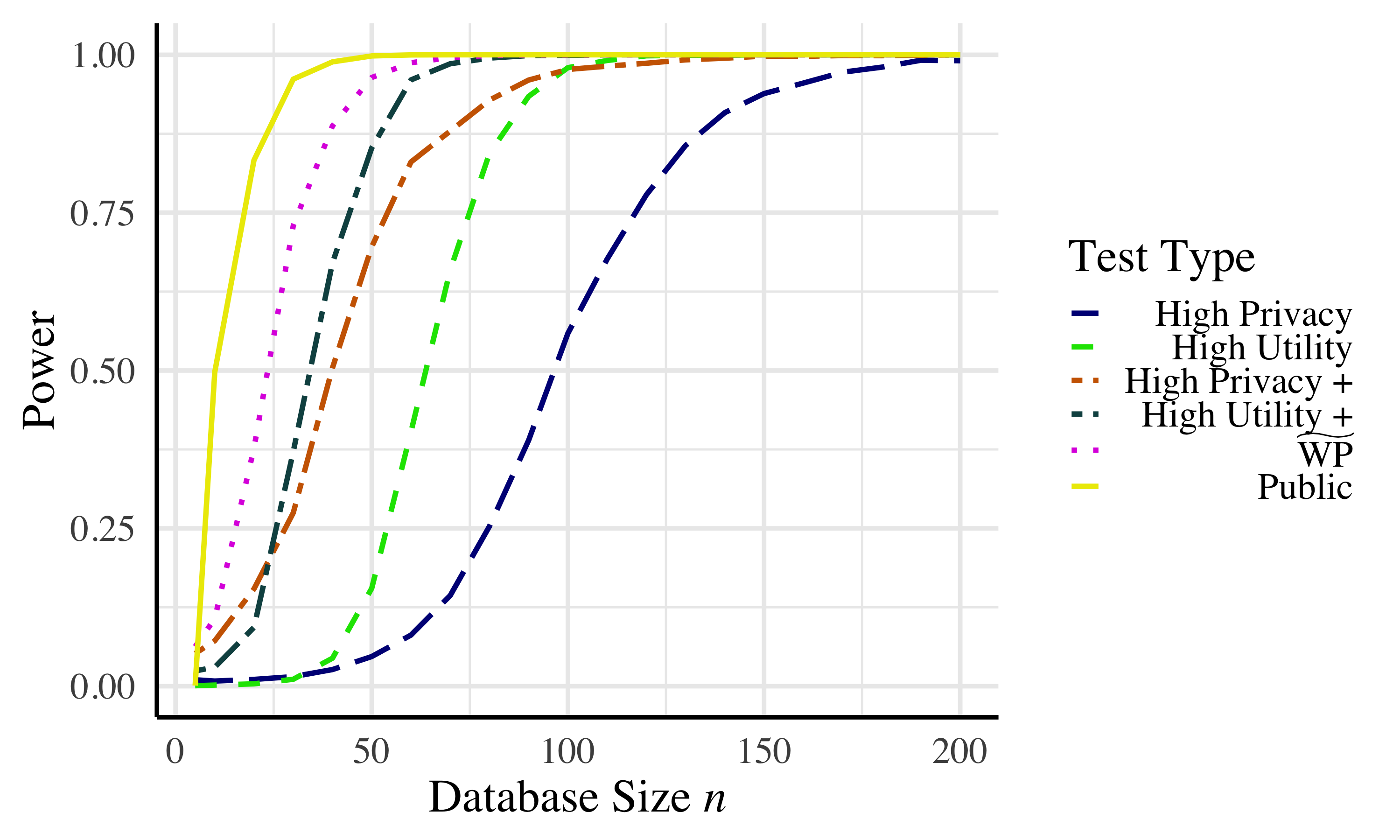}
    \caption{Power comparison of the TC algorithms, the TC algorithms with our critical values (denoted with a +), our new algorithm, and the public algorithm at various sample sizes $n$. (Effect size: $\mu_u - \mu_v = 1 \sigma$; $\epsilon=1$; $\alpha=.05$)}
    \label{fig:comp_no_ties_w_plus}
\end{figure}

\subsection{Parametric Alternative: A New T-test}\label{sec:ttest}

The parametric analog to the Wilcoxon test is to run a one sample t-test on the set of differences $\{v_i-u_i\}_i$ to see if their mean is significantly different from zero (also called a paired t-test).  There has been surprisingly little work on the creation of a private version of a one sample t-test.  Karwa and Vadhan \cite{karwa2017finite} study private confidence intervals, which are in a sense equivalent to a t-test.  However, their analysis is asymptotic and they say that the algorithm does not give practical results with database size in the thousands.  Sheffet \cite{sheffet2015differentially} provides a method for calculating private coefficient estimates for linear regression and also transforms the t-distribution to provide an appropriate reference distribution for inference.  In the public setting, one can convert a test on regression coefficients to a one sample t-test but choosing a constant independent variable and making the sample data the dependent variable.  However, Sheffet's method only works when all variables are significantly spread out, so this method fails.  

Here we propose what we believe is the first private version of a one sample t-test, with two arguable exceptions.  The first is simultaneous work by Gaboardi et al.~\cite{gaboardi2018locally} in the local privacy model.  We compare our results to theirs in more detail in Section \ref{sec:ttestexp}.  The other work is that of Solea \cite{solea2014differentially}, but according to Solea's own experiments that test often gives type 1 error rates well above the chosen $\alpha$ for many parameter choices, so we don't consider it a usable test.

The database for a one sample t-test has observations $x_1, \ldots, x_n$ assumed to come from a normal distribution with mean $\mu$ and standard deviation $\sigma$. (For paired data, each observation is the differences between the observation in the two groups). The test statistic is given by $\T(\data) =  \frac{\bar{x}}{s/\sqrt{n}}$, where $\bar{x}$ is the mean of the data and $s$ is the standard deviation of the data.

\paragraph{A private t-test}

As before, we achieve privacy through the addition of Laplacian noise, but the sensitivity of \T is unbounded, so we instead release separate private estimates of the numerator and denominator.  For this analysis, similar to the private ANOVA tests \cite{new_anova}, we assume that the data is scaled such that all observations are on the interval $[-1,1]$. We first find the sensitivities of $\bar{x}$ and $s^2$ and then use post-processing, composition, and the Laplace Mechanism to combine these to obtain the private t-statistic.  In the case where $s^2$ is estimated to be negative, the test statistic cannot be computed as normal, and we return 0, indicating an unwillingness to reject the null hypothesis.

\begin{theorem} \label{thm:mean_sensitivity}
The sensitivity of $\bar{x}$ is $\frac{2}{n}$ .
\end{theorem}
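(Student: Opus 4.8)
The plan is to work directly from Definition \ref{def:gs} (global sensitivity), using the fact established earlier that neighboring databases $\data$ and $\data'$ differ in exactly one altered row, so $n$ is held fixed. Writing $\bar{x}(\data) = \frac{1}{n}\sum_{i=1}^n x_i$, I would suppose without loss of generality that $\data$ and $\data'$ agree on every row except row $k$, where $\data$ has value $x_k$ and $\data'$ has value $x_k'$. All other terms in the two means cancel, so the difference of the means telescopes to a single term.

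Concretely, the key computation is
$$\left| \bar{x}(\data) - \bar{x}(\data') \right| = \left| \frac{1}{n}\sum_{i=1}^n x_i - \frac{1}{n}\sum_{i=1}^n x_i' \right| = \frac{1}{n}\left| x_k - x_k' \right|.$$
Then I would invoke the standing assumption that the data are scaled so that every observation lies in $[-1,1]$. Under this constraint, both $x_k$ and $x_k'$ lie in $[-1,1]$, so $|x_k - x_k'| \le 2$, giving the upper bound $\frac{2}{n}$. To see that this bound is tight (and hence that the sensitivity equals $\frac{2}{n}$ rather than merely being bounded by it), I would exhibit a neighboring pair achieving equality: take $x_k = 1$ and $x_k' = -1$, for which $|x_k - x_k'| = 2$ exactly. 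Taking the maximum over all neighboring pairs then yields $GS_{\bar{x}} = \frac{2}{n}$.

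There is no serious obstacle here; the only point requiring care is conceptual rather than computational. The bound depends entirely on the range assumption $[-1,1]$ and on the chosen neighboring relation in which a row is \emph{altered} but not inserted or deleted (so that $n$ is public and constant). If one instead allowed the addition or removal of a row, the $\frac{1}{n}$ normalization would itself change between $\data$ and $\data'$ and the analysis would differ; I would note explicitly that we rely on the altered-row definition fixed in Section \ref{bkg}. This sensitivity of $\frac{2}{n}$ is then what feeds the Laplace mechanism used to privatize the numerator $\bar{x}$ of the t-statistic.
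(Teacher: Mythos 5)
Your proposal is correct and follows essentially the same route as the paper's proof: both telescope the difference of means to the single altered term $\frac{1}{n}|x_k - x_k'|$ and bound it by $\frac{2}{n}$ using the assumption that data are scaled to $[-1,1]$. Your additional remarks on tightness (via $x_k = 1$, $x_k' = -1$) and on the dependence on the altered-row neighboring relation are fine but not present in the paper, which only establishes the upper bound.
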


\begin{theorem} \label{thm:s_squ_sensitivity}
The sensitivity of $s^2$ is $\frac{5}{n-1}$.
\end{theorem}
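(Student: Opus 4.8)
The plan is to reduce the claim to the standard computational form of the sample variance and then track the effect of a single changed row through a difference-of-squares factoring. Write $S_1 = \sum_i x_i$ and $S_2 = \sum_i x_i^2$, so that $(n-1)s^2 = S_2 - S_1^2/n$. Because the Laplace mechanism only needs an upper bound on sensitivity and the multiplier $n-1$ is a public, data-independent constant, it suffices to show that changing a single row alters the quantity $(n-1)s^2$ by at most $5$; dividing by $n-1$ then yields $GS_{s^2} \le 5/(n-1)$.

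First I would fix neighboring databases $\data, \data'$ that differ only in row $k$, where $x_k$ is replaced by $x_k'$, with every entry in $[-1,1]$. Letting $\delta_1 = x_k' - x_k$ denote the induced change in $S_1$ and $S_1'$ the new sum, the change in $(n-1)s^2$ splits as $(x_k'^2 - x_k^2) - \tfrac{1}{n}(S_1'^2 - S_1^2)$, and I would bound these two pieces separately by the triangle inequality. The first piece is immediate: since $x_k, x_k' \in [-1,1]$ their squares lie in $[0,1]$, so $|x_k'^2 - x_k^2| \le 1$.

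The second piece is the crux. I would factor the difference of squares as $S_1'^2 - S_1^2 = (S_1' - S_1)(S_1' + S_1) = \delta_1(S_1 + S_1')$. The purpose of this factoring is to pair the \emph{small} factor $\delta_1$, which satisfies $|\delta_1| \le 2$, against the \emph{large} factor $S_1 + S_1'$, which satisfies $|S_1 + S_1'| \le 2n$ since each of $S_1, S_1'$ is a sum of $n$ terms in $[-1,1]$. The leading $1/n$ then cancels the order-$n$ size of the large factor, giving $\tfrac{1}{n}|S_1'^2 - S_1^2| \le \tfrac{1}{n}\cdot 2 \cdot 2n = 4$. Combining the two bounds, the total change in $(n-1)s^2$ is at most $1 + 4 = 5$, establishing $GS_{s^2} \le 5/(n-1)$.

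The main obstacle is precisely this second bound: a careless estimate that tries to control $(\sum_i x_i)^2$ directly (a quantity of order $n^2$) is hopelessly lossy, and the key realization is that the difference-of-squares factoring matches the bounded increment $\delta_1$ with the order-$n$ sum $S_1 + S_1'$ so that the $1/n$ prefactor leaves only a constant. I would also remark that $5/(n-1)$ is a valid but not tight upper bound---a more careful optimization over the single varying coordinate yields the smaller value $4/n$---yet the clean constant $5$ is entirely sufficient for calibrating the Laplace noise and is what is used in the remainder of the construction.
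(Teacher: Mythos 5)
Your proof is correct and takes essentially the same route as the paper: both start from the computational form $(n-1)s^2 = \sum_i x_i^2 - \tfrac{1}{n}\bigl(\sum_i x_i\bigr)^2$, bound the change in the squared term by $1$, and handle the second term by a difference-of-squares factoring, where your bound $\tfrac{1}{n}\lvert \delta_1\rvert\,\lvert S_1+S_1'\rvert \le \tfrac{1}{n}\cdot 2\cdot 2n = 4$ is exactly the paper's $n\lvert\bar{x}-\bar{x}'\rvert\,\lvert\bar{x}+\bar{x}'\rvert \le n\cdot\tfrac{2}{n}\cdot 2 = 4$ written in sum notation rather than mean notation. Your side remark that the constant is not tight (the exact worst case is $4/n$, attained by flipping one entry from $-1$ to $1$ when all others equal $-1$) is also correct, and as you note the looser bound $5/(n-1)$ suffices for calibrating the Laplace noise.
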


See Appendix \ref{sec:t_test_pf} for proof of Theorem \ref{thm:mean_sensitivity} and \ref{thm:s_squ_sensitivity}.

\begin{algorithm}
\DontPrintSemicolon
\algrule
\textbf{Algorithm  }$\privT$\textbf{:}  Private t-Test Statistic\;
\algrule
\KwIn{$\data$, $\epsilon_{\bar{x}}$, $\epsilon_{s^2}$}
$\widetilde{\bar{x}} = \bar{x} +  \lap(\frac{1/n}{\epsilon_{\bar{x}}})$\\
$\widetilde{s^2} = s^2+  \lap(\frac{5/(n-1)}{\epsilon_{s^2}})$\\
\eIf{$\widetilde{s^2} < 0$}{$\widetilde{T}=0$}{
$\widetilde{T}= \frac{\widehat{\bar{x}}/n}{\sqrt{\widehat{s^2}}/\sqrt{n}}$  }
\KwOut{$\widetilde{T}$}
\algrule
\label{alg:wcp}
\end{algorithm}

\begin{theorem}
Algorithm $\privT$ is $(\epsilon_{\bar{x}}+\epsilon_{s^2})$-differentially private. 
\end{theorem}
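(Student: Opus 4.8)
The plan is to decompose the privacy guarantee into three standard ingredients: the two sensitivity bounds of Theorems~\ref{thm:mean_sensitivity} and~\ref{thm:s_squ_sensitivity}, the Laplace mechanism (Theorem~\ref{thm:lm}), and sequential composition followed by post-processing (Theorem~\ref{thm:pp}). The structure of $\privT$ makes this natural: the only two points at which the algorithm accesses $\data$ are in forming $\widetilde{\bar{x}}$ and $\widetilde{s^2}$; everything afterward is a function of these two noisy values alone.

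First I would treat the two noisy releases individually. By Theorem~\ref{thm:mean_sensitivity} the global sensitivity of $\bar{x}$ is $2/n$, so by Theorem~\ref{thm:lm} releasing $\widetilde{\bar{x}}$ with Laplace noise scaled to the sensitivity $2/n$ divided by $\epsilon_{\bar{x}}$ is $\epsilon_{\bar{x}}$-differentially private. Symmetrically, by Theorem~\ref{thm:s_squ_sensitivity} the sensitivity of $s^2$ is $5/(n-1)$, so releasing $\widetilde{s^2}$ with noise scaled to $5/(n-1)$ divided by $\epsilon_{s^2}$ is $\epsilon_{s^2}$-differentially private. The one thing to check carefully here is that each noise scale used in the algorithm is exactly the corresponding sensitivity divided by its privacy parameter, so that each individual release attains precisely its claimed level.

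Next I would combine the two releases. Since $\widetilde{\bar{x}}$ and $\widetilde{s^2}$ are computed from the same database with independently drawn Laplace noise, I would invoke sequential (basic) composition to conclude that the joint output $(\widetilde{\bar{x}}, \widetilde{s^2})$ is $(\epsilon_{\bar{x}} + \epsilon_{s^2})$-differentially private. Because the excerpt does not state composition as a named theorem, this is the single step where I would appeal to the standard external result rather than a result internal to the paper.

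Finally, the returned statistic $\widetilde{T}$---whether it is set to $0$ in the branch $\widetilde{s^2} < 0$ or computed as the appropriate ratio built from $\widetilde{\bar{x}}$ and $\widetilde{s^2}$ in the other branch---is a function of the pair $(\widetilde{\bar{x}}, \widetilde{s^2})$ that requires no further access to $\data$. Hence by Theorem~\ref{thm:pp} (post-processing), $\widetilde{T}$ inherits the $(\epsilon_{\bar{x}} + \epsilon_{s^2})$ privacy of the pair, completing the proof. I expect no genuine obstacle; the only subtlety is bookkeeping---confirming that the conditional case split depends only on the (already private) sign of $\widetilde{s^2}$ and therefore leaks no additional budget, so that the entire branching computation is legitimate post-processing.
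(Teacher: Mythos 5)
Your proposal is correct and takes essentially the same route as the paper's own proof, which likewise applies the Laplace mechanism (via Theorems \ref{thm:mean_sensitivity} and \ref{thm:s_squ_sensitivity}) to each of $\widetilde{\bar{x}}$ and $\widetilde{s^2}$, sums the two budgets, and treats the computation of $\widetilde{T}$ as post-processing under Theorem \ref{thm:pp}; the paper is merely terser, leaving the composition step implicit. Incidentally, the check you flag is worth carrying out: as printed, Algorithm $\privT$ adds noise of scale $\frac{1/n}{\epsilon_{\bar{x}}}$ to $\bar{x}$, whereas Theorem \ref{thm:mean_sensitivity} gives sensitivity $\frac{2}{n}$, so either the algorithm's noise scale or the theorem's constant contains a typo that would have to be fixed for the stated guarantee to hold.
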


\begin{proof}
By the Laplace mechanism, the computation of $\widetilde{\bar{x}}$ is $\epsilon_{\bar{x}}$-differentially private and the computation of $\widetilde{s^2}$ is $\epsilon_{s^2}$-differentially private.  Since the computation of $\widetilde{T}$ does not require access to the database, it is only post-processing and its release is $(\epsilon_{\bar{x}}+\epsilon_{s^2})$-differentially private.
\end{proof}

To carry out the full paired t-test, we estimate the reference distribution through simulation and release a private p-value.

\begin{algorithm} [!htb]\label{alg:complete_t}
\DontPrintSemicolon
\algrule
\textbf{Algorithm } $\privTP$ \textbf{:} Complete t-Test\;
\algrule
\KwIn{\data, $\epsilon_{\bar{x}}$, $\epsilon_{s^2}$, $z$}
	$\widetilde{t} := \privT(\data, \epsilon_{\bar{x}}, \epsilon_{s^2})$\;
    \For{$k = 1$ \textup{to} $z$}{
                $\data^* \longleftarrow $ a database with $n$ independent draws from $N(\mu = 0, \sigma \approx 0.3)$, each truncated to $[-1, 1]$ \\
                $t_k \longleftarrow \privT(\data^*)$
    }
    $p \longleftarrow$ fraction of $t_k$ more extreme than $\widetilde{t}$ \;
\KwOut{$\widetilde{t}, p$}
\algrule
\end{algorithm}

\begin{theorem}
Algorithm \privTP is $\epsilon_{\bar{x}} + \epsilon_{s^2}$-differentially private. 
\end{theorem}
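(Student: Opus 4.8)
The plan is to reuse the post-processing template that established privacy for the earlier complete tests \privKWP, \privMWP, and \privWPP, since \privTP has exactly the same two-stage structure: a single data-touching call to the private test statistic, followed by a reference-distribution simulation that never looks at the sensitive database.

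First I would invoke the privacy of the test-statistic subroutine. The theorem immediately above establishes that a single call to \privT is $(\epsilon_{\bar{x}}+\epsilon_{s^2})$-differentially private: it releases $\widetilde{\bar{x}}$ and $\widetilde{s^2}$ through independent Laplace mechanisms calibrated to the sensitivities of $\bar{x}$ and $s^2$ (Theorems \ref{thm:mean_sensitivity} and \ref{thm:s_squ_sensitivity}) under budgets $\epsilon_{\bar{x}}$ and $\epsilon_{s^2}$, and then forms $\widetilde{T}$ by post-processing and basic composition. Consequently the line $\widetilde{t} := \privT(\data, \epsilon_{\bar{x}}, \epsilon_{s^2})$ is the only step of \privTP that accesses \data, and it does so within the claimed budget.

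Second I would argue that every remaining line is a post-processing of $\widetilde{t}$. Each simulated database $\data^*$ is drawn from the fixed, data-independent distribution $N(\mu=0, \sigma\approx 0.3)$ truncated to $[-1,1]$, and the loop evaluates $\privT(\data^*)$ on that synthetic data rather than on \data. The reported p-value is the fraction of the resulting $t_k$ values more extreme than $\widetilde{t}$, a deterministic function of the private output $\widetilde{t}$ together with randomness independent of \data. Hence the map sending $\widetilde{t}$ to $(\widetilde{t}, p)$ is a randomized post-processing, and by Theorem \ref{thm:pp} the full algorithm \privTP remains $(\epsilon_{\bar{x}}+\epsilon_{s^2})$-differentially private.

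The argument is routine, and the only point that genuinely warrants a check is the claim that the simulation carries no hidden dependence on \data. Here \privTP is in fact cleaner than \privMWP: that algorithm fed the released group-size estimate $\widetilde{m}$ into its reference distribution, whereas the parameters $\mu=0$ and $\sigma\approx 0.3$ used here are constants fixed in advance, so no released quantity — not even $\widetilde{t}$ — influences how $\data^*$ is generated. Confirming this data-independence is the main (and essentially only) obstacle, after which post-processing immediately closes the proof.
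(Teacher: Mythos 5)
Your proposal is correct and follows exactly the paper's own argument: the privacy of $\privT$ covers the single data-touching step, and everything else (the simulation of $\data^*$ from fixed, data-independent parameters and the computation of the p-value) is post-processing, so Theorem \ref{thm:pp} completes the proof. Your added verification that the reference-distribution parameters $\mu=0$, $\sigma\approx 0.3$ are constants independent of \data is a worthwhile explicit check, but it is the same argument the paper makes implicitly.
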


\begin{proof}
The computation of $\widetilde{t}$ was already shown to be private.  The remaining computation needed to find the p-value does not need access to the database---it is simply post-processing.  By Theorem \ref{thm:pp}, it follows that the \privTP algorithm is also private.
\end{proof}

\subsection{Experimental t-Test evaluation}\label{sec:ttestexp}

We first must set a parameter in our \privTP algorithm.  In particular, for a given total $\epsilon$, we must decide how to allocate the budget between $\epsilon_{\bar{x}}$ and $\epsilon_{s^2}$.  We choose this allocation experimentally, deciding to allocate 50\% of the budget towards each value.  This is nontrivial, and Appendix \ref{sec:t_appendix} contains experimental results and further discussion.  Luckily, the exact choice of this allocation does not seem to have a large effect on the power of the test.

We then evaluate the power and validity of the final \privTP test.

\paragraph{Comparison to other work} Simultaneous to our work, Gaboardi et al.~\cite{gaboardi2018locally} developed a private one sample t-test under the more restrictive local differential privacy model.  As one might expect, our test in the more standard setting is much higher power.  They develop both a t-test and a z-test, which is equivalent to the t-test except that the variance of the data is assumed to be already known.  Only the z-test is given experimental evaluation, but with an effect size three times the size we use in our experiments, their test (at $\epsilon=1$) requires roughly 4000 data points to reach 80\% power, while our test requires roughly 100.  Their t-test would presumably require even more data.

\paragraph{Comparison to nonparametric test} Since we have already developed a test for the paired-data use case, we assessed the power of \privTP in comparison to \privWPP by simulating synthetic data as described in Section \ref{sec:wc_results}.  Just as in the many groups and two groups scenarios, the nonparametric test substantially outperforms its parametric counterpart, as shown in Figure \ref{fig:paired_test_comparison}. In this case, \privWPP needs 8\% of the data required by \privTP to reach the same power.

\begin{figure} [!htb] 
    \centering
    \includegraphics[width=\linewidth]{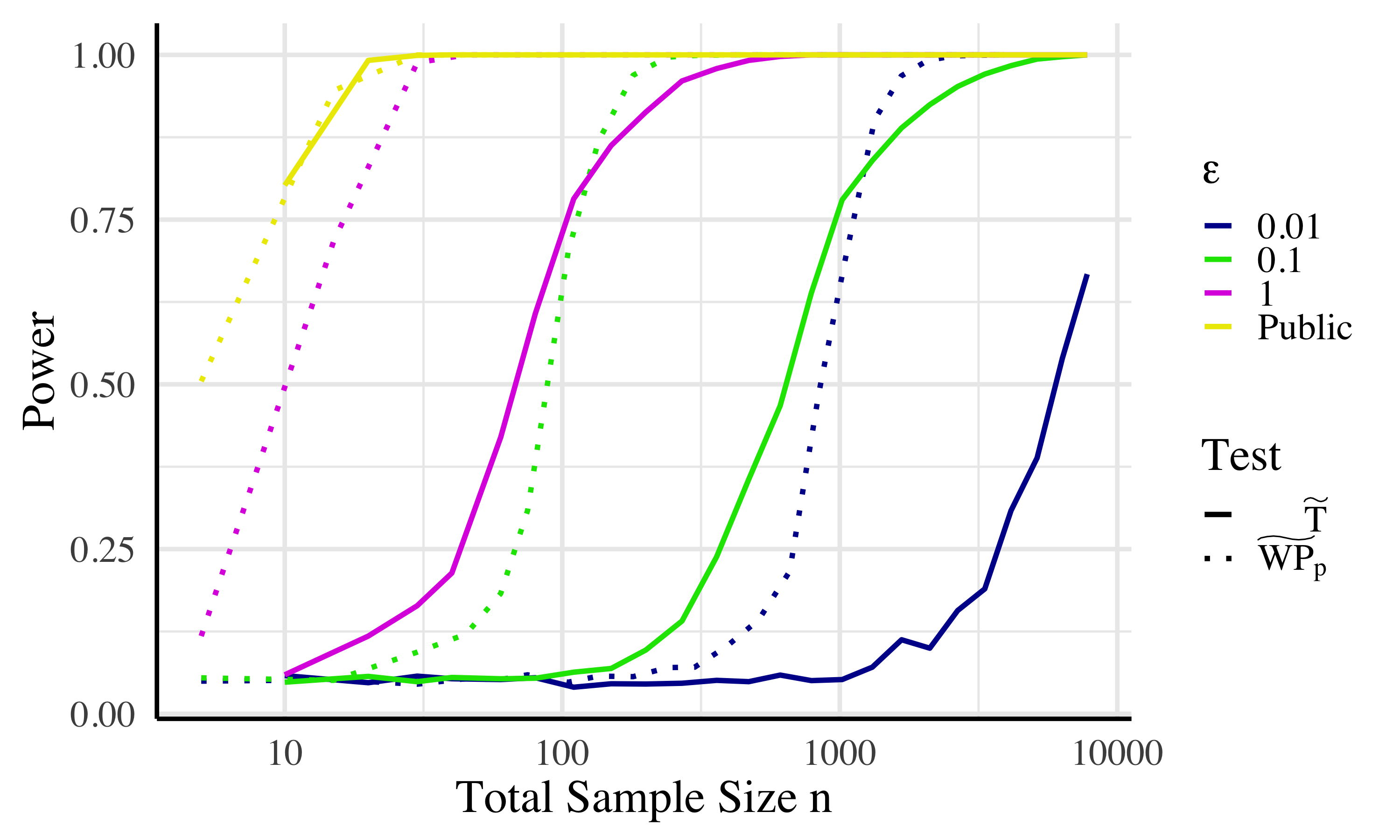}
    \caption{Power of \privTP and \privWPP at various $\epsilon$ and $n$. (Effect size: $\mu_u - \mu_v = 1 \sigma$; $\alpha = .05$; normally distributed sample data)}
    \label{fig:paired_test_comparison}
\end{figure}

\paragraph{Uniformity of p-values} As with all of our tests, we experimentally ensure that type I error rate is bounded by $\alpha$ in Figure \ref{fig:t_qqplot1}. This figure confirms the fact that our type I error rate is bounded above by $\alpha$.  For small sample sizes, the line on the quantile-quantile plot goes above the diagonal.  This is the acceptable direction, the sign of a conservative test.  In this case it occurs because some test statistics in the reference distribution are set to zero (as a result of noise added for privacy overwhelming $\widetilde{s^2}$).  If, for example, 10\% of the reference distribution samples are at zero, then p values below 10\% are impossible.  As shown by the $n = 1000$ line, at sufficiently large sample sizes this effect essentially vanishes.

\begin{figure} [!htb] 
    \centering
    \includegraphics[width=\linewidth, keepaspectratio]{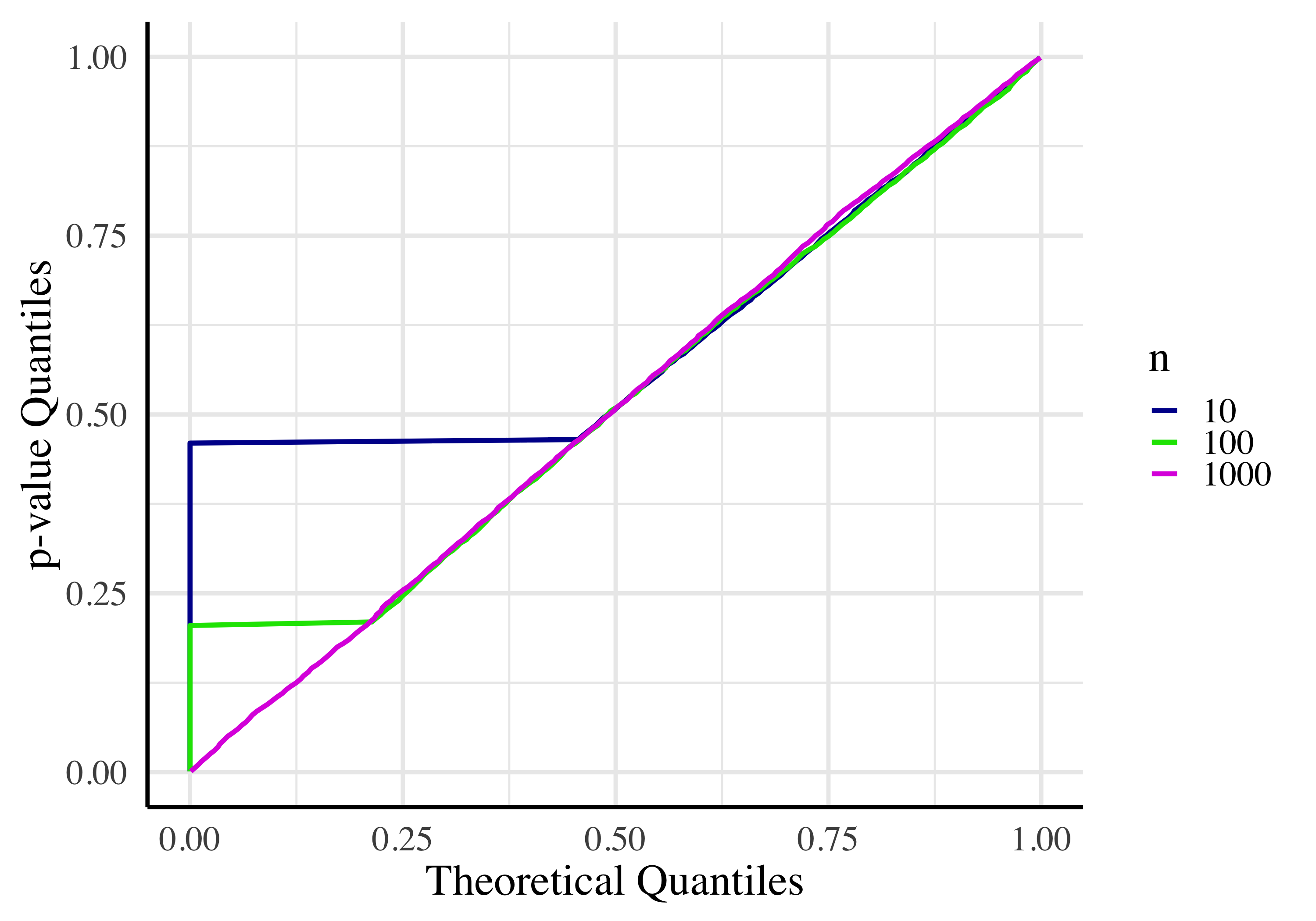}
    \caption{A quantile-quantile plot of \privTP at various $n$. ($\epsilon = 1$; equal $\epsilon$ allotment to each statistic)}\label{fig:t_qqplot1}
\end{figure}

\section{Conclusion}

We have introduced several new tests, of which three (\privKWPA, \privMWP, and \privWPP) are improvements on the state of the art. These allow researchers to address inferential questions using nonparametric methods while preserving the privacy of the data. More broadly, we found that the basic idea of using ranks in the private setting is potent. Not only do they remove the need to assume a bound on the data, they also directly increase statistical power. When working with many groups, two group, or with paired data, rank-based tests are more powerful than their parametric analogues and can be made yet more powerful through sensible adaptations.  We hope others will push this technique forward --- we have no reason to believe that our tests are optimal.


\subsection*{Acknowledgments}
We would like to thank Christine Task and Chris Clifton for generous and enlightening discussions regarding their previous work. This material is based upon work supported by the National Science Foundation under Grant No. SaTC-1817245 and the Richter Funds.

\clearpage

\bibliographystyle{plain}
\bibliography{sources}

\clearpage

\appendix

\section{Supplementary Proofs}\label{sec:pfs}

\subsection{Proof of Theorem \ref{thm:squaresensi}}\label{sec:kw_sq_sens}

We begin by introducing the following lemma and corollaries, which will be useful in the proof.

\begin{lemma}
    For integers $a$ and $b$: $$\sum_{k = a}^b k = \frac{(b-a+1)(b+a)}{2}$$ 
\end{lemma}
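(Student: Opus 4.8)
The plan is to prove this by the classical pairing (Gauss) argument, which avoids induction entirely and makes the two factors of the claimed formula appear transparently. First I would set $S = \sum_{k=a}^{b} k$ and note that, assuming $a \le b$, the sum has exactly $b - a + 1$ terms. (If $a > b$ the sum is empty; one either treats the identity by convention or simply restricts to $a \le b$, which is the only case the later sensitivity calculation needs.)

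The key step is to write the same sum with its index reversed, $S = \sum_{k=a}^{b} (a + b - k)$. This is valid because the map $k \mapsto a+b-k$ is a bijection of the index set $\{a, a+1, \ldots, b\}$ onto itself. Adding the two expressions for $S$ term by term makes every paired term collapse to the constant $a+b$:
$$2S = \sum_{k=a}^{b} \big( k + (a+b-k) \big) = \sum_{k=a}^{b} (a+b) = (b-a+1)(a+b).$$
Dividing by $2$ yields the claimed identity.

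I do not expect any genuine obstacle here, as the result is entirely elementary. The only point requiring care is the bookkeeping: counting the number of summands as $b-a+1$ (not $b-a$) and confirming that the reversal $k \mapsto a+b-k$ is indeed a bijection of the index set. As an alternative I could induct on $b$ for fixed $a$, with base case $b=a$ giving $S = a = \frac{1 \cdot 2a}{2}$ and inductive step verifying $\frac{(b-a+1)(b+a)}{2} + (b+1) = \frac{(b-a+2)(b+1+a)}{2}$ by expanding both sides; but since the pairing argument is shorter and more illuminating, I would present that one.
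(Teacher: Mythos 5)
Your pairing argument is correct and complete: reversing the index via the bijection $k \mapsto a+b-k$, adding the two expressions for $S$, and dividing by $2$ is the standard Gauss argument, and your bookkeeping (the count $b-a+1$ of summands, the bijectivity of the reversal) is accurate. For comparison, the paper does not prove this lemma at all --- it is stated as a known arithmetic-series identity and used immediately in the corollary that bounds $\sum_{j=1}^{n_i} r_{ij}$, so there is no proof in the paper to measure yours against; any valid elementary argument suffices. One detail you handled well and that is worth keeping: the restriction to $a \le b$ is genuinely necessary, since for an empty sum the left side is $0$ while the formula gives $\frac{(b-a+1)(b+a)}{2}$, which is nonzero whenever $a \ge b+2$ (e.g.\ $a=3$, $b=1$ gives $-2$); the paper's applications only ever invoke the lemma with $1 \le a \le b$, so this restriction costs nothing.
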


\begin{corollary}
    For group $i$ which has $n_i$ elements with rankings $r_{i1}, \ldots, r_{in_i}$, we can bound the sum of the ranks by:
    $$\frac{n_i (n_i + 1)}{2} \leq \sum_{j = 1}^{n_i} r_{ij} \leq \frac{n_i (2n - n_i + 1)}{2}$$
\end{corollary}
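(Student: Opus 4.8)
The plan is to exploit the fact that we are working in the no-ties case, so the ranks $r_{i1}, \ldots, r_{in_i}$ of group $i$ are $n_i$ \emph{distinct} integers drawn from $\{1, 2, \ldots, n\}$. Consequently $\sum_{j=1}^{n_i} r_{ij}$ is the sum over a size-$n_i$ subset of $\{1, \ldots, n\}$, and I would bound it by identifying the subsets that minimize and maximize such a sum and then evaluating those two extreme configurations directly with the lemma above.

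First I would establish the lower bound. Among all size-$n_i$ subsets of $\{1, \ldots, n\}$, the sum is minimized when the subset is $\{1, 2, \ldots, n_i\}$, i.e., when group $i$ receives the $n_i$ smallest ranks. Applying the lemma with $a = 1$ and $b = n_i$ gives $\sum_{k=1}^{n_i} k = \frac{n_i(n_i+1)}{2}$, which is the claimed lower bound. Symmetrically, the sum is maximized when group $i$ holds the $n_i$ largest ranks, namely the subset $\{n - n_i + 1, \ldots, n\}$. Applying the lemma with $a = n - n_i + 1$ and $b = n$ --- noting that this interval has exactly $n_i$ terms and endpoint-sum $2n - n_i + 1$ --- yields $\frac{n_i(2n - n_i + 1)}{2}$, the claimed upper bound.

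The only nontrivial point is the extremal-subset claim, and I would justify it with a short exchange argument. If the chosen ranks are not exactly $\{1, \ldots, n_i\}$, then some chosen rank strictly exceeds some value in $\{1, \ldots, n_i\}$ that was not chosen; swapping the larger element out for the smaller one strictly decreases the sum, and iterating drives the subset to $\{1, \ldots, n_i\}$, which proves minimality. The maximality argument is identical with the inequalities reversed. I expect this exchange step to be the main (though still routine) obstacle, since everything else is a direct substitution of the appropriate endpoints into the lemma.
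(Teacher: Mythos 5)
Your proposal is correct and follows essentially the same route as the paper: both identify the extremal rank assignments $\{1,\ldots,n_i\}$ and $\{n-n_i+1,\ldots,n\}$ and evaluate them via the arithmetic-series lemma. The only difference is that you make the extremality step rigorous with an exchange argument, whereas the paper simply asserts that the minimum/maximum occur when the ranks are as low/high as possible.
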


\begin{proof}
The maximum occurs when all the terms in the sum are as high as possible. I.e., they are $n - n_i + 1, n - n_i + 2, \ldots , n - 1, n$. Thus, 
\begin{align*}
    \sum_{j = 1}^{n_i} r_{ij} &\leq \sum_{k = n - n_i + 1}^n k \\
    &= \frac{(n - (n - n_i + 1) + 1)(n + (n - n_i + 1))}{2} \\ 
    &= \frac{n_i (2n - n_i + 1)}{2}
\end{align*}

The minimum occurs when all the terms in the sum are as low as possible. I.e., they are $1, 2, \ldots, n_i - 1, n_i$. Thus,
\begin{align*}
    \sum_{j = 1}^{n_i} r_{ij} &\geq \sum_{k = 1}^{n_i} k \\
    &= \frac{(n_i - 1 + 1)(n_i + 1)}{2} \\ 
    &= \frac{n_i (n_i + 1)}{2}
\end{align*}

\end{proof}

\begin{corollary}
    Suppose there are $g$ groups in the database, then $\forall \; 1 \leq i \leq g$, $$\frac{1+n_i}{2} \leq \bar{r}_i \leq \frac{2n-n_i+1}{2},$$ where $n_i$ is the group size.
\end{corollary}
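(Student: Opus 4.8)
The plan is to derive this corollary as an immediate consequence of the preceding corollary, which already bounds the \emph{sum} of the ranks within a group. I would begin by recalling the definition given in the Kruskal--Wallis setup, namely that $\bar{r}_i = \frac{1}{n_i}\sum_{j=1}^{n_i} r_{ij}$ is the mean rank of group $i$. The entire argument then reduces to dividing an already-established inequality by a positive quantity.

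Concretely, the previous corollary establishes the chain
$$\frac{n_i (n_i + 1)}{2} \leq \sum_{j = 1}^{n_i} r_{ij} \leq \frac{n_i (2n - n_i + 1)}{2}.$$
Since every group has at least one observation we have $n_i > 0$, so dividing all three expressions by $n_i$ preserves the direction of both inequalities. The left bound becomes $\frac{n_i+1}{2}$, the middle term becomes exactly $\bar{r}_i$ by definition, and the right bound becomes $\frac{2n-n_i+1}{2}$. This yields precisely the claimed bounds $\frac{1+n_i}{2} \leq \bar{r}_i \leq \frac{2n-n_i+1}{2}$.

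There is really no substantive obstacle here: the content of the statement is carried entirely by the earlier corollary bounding the rank sum, and the present claim is just its per-element normalization. The only point worth a sentence of care is noting that $n_i \geq 1$ so that division is legitimate and the inequality orientation is unchanged. I would therefore keep the proof to two or three lines, framing it explicitly as a corollary of the rank-sum bound rather than re-deriving the extremal rank configurations, since those were already handled in the proof of the previous corollary.
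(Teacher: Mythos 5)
Your proof is correct and matches the paper's approach exactly: the paper likewise treats this as an immediate consequence of the rank-sum corollary together with the definition $\bar{r}_i = \frac{1}{n_i}\sum_{j=1}^{n_i} r_{ij}$, stating only that it ``follows directly from this definition.'' Your version is just a more explicit write-up of the same one-line division argument, including the (harmless) observation that $n_i \geq 1$.
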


\begin{proof}
    We define $\bar{r}_i$ to be $\frac{1}{n_i} \sum_{j=1}^{n_i} r_{ij}$. The proof follows directly from this definition.
\end{proof}

We now prove the main theorem:

\begin{theorem*}
The sensitivity of $\KW$ is bounded by 87.
\end{theorem*}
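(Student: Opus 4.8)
The plan is to reduce the claim to controlling how much the single quantity $S(\data) = \sum_{i=1}^g R_i^2/n_i$ can change between neighbors, where $R_i = \sum_j r_{ij} = n_i \bar{r}_i$ is the rank-sum of group $i$. Rewriting the simplified (no-ties) formula as $\statsqu = \frac{12}{n(n+1)}\, S(\data) - 3(n+1)$ and using that $n$ is identical for neighboring databases, the additive term $-3(n+1)$ cancels, so the global sensitivity of $\KW$ equals $\frac{12}{n(n+1)}$ times the maximum of $|S(\data)-S(\data')|$ over neighboring $\data,\data'$. Since the prefactor is of order $n^{-2}$, it suffices to show that a single-row change moves $S$ by only $O(n^2)$.

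I would split one row-change into two elementary moves through a hybrid database $\data''$: first alter only the continuous value (keeping the group label fixed), then alter only the group label (keeping the value, hence the rank, fixed), and bound $|S(\data)-S(\data'')|$ and $|S(\data'')-S(\data')|$ separately. For the value move, the altered point slides from rank $p$ to rank $q$ and every intervening rank shifts by one, inducing changes $\Delta_i$ in the rank-sums subject to the crucial constraint $\sum_i \Delta_i = 0$ (the ranks are merely permuted) together with $\sum_i |\Delta_i| \le 2|q-p| \le 2(n-1)$. Expanding $S(\data'')-S(\data) = \sum_i \bigl(2\bar{r}_i \Delta_i + \Delta_i^2/n_i\bigr)$ and using $\sum_i \Delta_i = 0$ to recenter each $\bar{r}_i$ to $\bar{r}_i - \tfrac{n+1}{2}$, the corollary bound $|\bar{r}_i - \tfrac{n+1}{2}| \le \tfrac{n-n_i}{2} \le \tfrac n2$ controls the cross term, while $\sum_i \Delta_i^2/n_i \le (\sum_i |\Delta_i|)^2$ controls the quadratic term; both are $O(n^2)$.

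For the group move, only the two affected groups $a$ (losing the point) and $b$ (gaining it) have their terms change, via $(R_a,n_a) \mapsto (R_a-q,\, n_a-1)$ and $(R_b,n_b) \mapsto (R_b+q,\, n_b+1)$. The subtlety here is that each individual term $R_i^2/n_i = n_i \bar{r}_i^2$ can itself be as large as $O(n^3)$, so a term-by-term triangle-inequality bound is hopeless; instead I would compute the finite differences exactly, for instance $\frac{(R_b+q)^2}{n_b+1} - \frac{R_b^2}{n_b} = \frac{n_b(2\bar{r}_b q - \bar{r}_b^2) + q^2}{n_b+1}$, and bound the numerator using $\bar{r}_b \le n$ and $q \le n$ to obtain $O(n^2)$, symmetrically for group $a$ (treating the degenerate cases $n_a=1$ or an emptied group directly). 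Summing the two elementary moves and multiplying by $\frac{12}{n(n+1)}$ then produces a universal constant, and a careful accounting of the constants pins it at $87$.

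The main obstacle is precisely this constant-tracking. The rank-sums and individual group terms grow like $n^2$ and $n^3$, and only their \emph{differences} are $O(n^2)$, so every estimate must exploit either the permutation identity $\sum_i \Delta_i = 0$ or an exact finite-difference cancellation rather than bounding summands in isolation—any loose step inflates the bound by a factor of $n$. Landing on $87$ rather than merely ``some constant'' therefore requires keeping the corollary bounds on $\bar{r}_i$ and $R_i$ tight throughout and handling the small-group edge cases explicitly.
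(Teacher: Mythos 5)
Your strategy is correct and takes a genuinely different route from the paper, and if executed tightly it actually proves a \emph{stronger} bound. The paper never passes through a hybrid database: it analyzes two cases directly --- a rank change within a group (bounded by $48$ via a term-by-term triangle inequality on $\sum_i n_i(\bar{r}_i'^2-\bar{r}_i^2)$, with no use of the permutation identity), and a combined rank-plus-group change (three sub-cases whose worst-case constants, located by differentiating in $n_1$ and $n_2$, come out at $69$--$87$), so its final bound is $\max(48,87,78,78)=87$. Your two key devices --- the identity $\sum_i\Delta_i=0$ used to recenter the cross term at $\bar{r}_i-\tfrac{n+1}{2}$, and freezing the ranks during the group move so the $q^2$ terms in the two finite differences cancel exactly --- are both absent from the paper. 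They buy a shorter, more modular argument with no calculus and almost no sub-cases: done carefully, the value move contributes at most $\tfrac{12}{n(n+1)}\bigl(3(n-1)^2+(n-1)\bigr)\le 36$ and the group move at most $\tfrac{12}{n(n+1)}(n-1)^2\le 12$, so the triangle inequality through the hybrid closes at roughly $48\le 87$. Note also that you do not need to ``land on $87$'': any constant at most $87$ proves the stated theorem, and your decomposition naturally undershoots it, whereas the paper's combined-move analysis is what forces the constant up to $87$.

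One concrete warning: the estimates as you literally wrote them are too loose to stay under $87$. The quadratic bound $\sum_i\Delta_i^2/n_i\le\bigl(\sum_i|\Delta_i|\bigr)^2\le 4(n-1)^2$ by itself contributes $48$ to the final constant (making the value move cost about $72$), and bounding the group-move numerator via $\bar{r}_b\le n$, $q\le n$ gives about $2n^2$ per affected group (another $\approx 48$), for a total near $120$. The repairs live entirely inside your framework. For the value move, split the quadratic term: the moving group satisfies $\Delta_1^2/n_1\le (d-a_1)^2\le(n-1)^2$, while every other group has $|\Delta_i|=a_i\le n_i$, so $\sum_{i\ge 2}\Delta_i^2/n_i\le\sum_{i\ge 2} a_i\le n-1$. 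For the group move, after the $q^2$ cancellation you have
\begin{equation*}
S(\data')-S(\data'') \;=\; \frac{n_a(\bar{r}_a-q)^2}{n_a-1}\;-\;\frac{n_b(\bar{r}_b-q)^2}{n_b+1},
\end{equation*}
a difference of two nonnegative quantities, so its magnitude is at most the larger of the two; the second is at most $(n-1)^2$ outright, and for the first use that $q$ is itself one of group $a$'s ranks, whence $|\bar{r}_a-q|\le\tfrac{(n_a-1)(n-1)}{n_a}$ and the term is again at most $(n-1)^2$ (the emptied-group and $n_a=1$ cases only shrink this). With those substitutions your plan closes at about $36+12=48$, and the theorem follows with room to spare.
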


\begin{proof}
The proof proceeds as a series of cases. Neighboring databases $\data, \ \data'$ differ by a change in one row, and this difference will cause either the row's rank, group, or both to change. The first case bounds the change in the test statistic when only the row's rank changes. The second case is when the row's group changes.  Because our algorithm first randomly breaks any ties, we assume there are no tied values in the database when $h$ is computed.

\medskip \noindent \textbf{Case 1: }Changing row retains group

Consider two databases that differ only in that one row is ``moved" from rank $p$ to rank $q$. Let us assume without loss of generality that the row is an element of group 1, so it follows that $n_1 \geq 1$. We define the distance between the rows to be $d = |q-p|$ and $a_i$ to be the number of elements in group $i$ between row $p$ and row $q$ (excluding row $p$ and including row $q$). 

\begin{align*}
    |\statsqu' - \statsqu| &= \bigg | \frac{12}{n(n+1)}\sum_{i=1}^g n_i \bar{r'}_i^2 - 3(n+1) - \frac{12}{n(n+1)}\sum_{i=1}^g n_i \bar{r}_i^2 + 3(n+1) \bigg | \\
    &= \frac{12}{n(n+1)} \bigg | \sum_{i=1}^g n_i \big (\bar{r'}_i^2 - \bar{r}_i^2 \big) \bigg | \\
    &= \frac{12}{n(n+1)} \bigg | \sum_{i=1}^g n_i \bigg ( \frac{1}{n_i^2}\bigg (\sum_{j=1}^{n_i} r'_{ij} \bigg )^2 -  \frac{1}{n_i^2}\bigg (\sum_{j=1}^{n_i} r_{ij} \bigg )^2 \bigg) \bigg | \\
    &= \frac{12}{n(n+1)} \bigg | \sum_{i=1}^g \frac{1}{n_i} \bigg ( \bigg (\sum_{j=1}^{n_i} r'_{ij} \bigg )^2 - \bigg (\sum_{j=1}^{n_i} r_{ij} \bigg )^2 \bigg) \bigg | \\
    &\leq \frac{12}{n(n+1)} \sum_{i=1}^g \bigg | \frac{1}{n_i} \bigg ( \bigg (\sum_{j=1}^{n_i} r'_{ij} \bigg )^2 - \bigg (\sum_{j=1}^{n_i} r_{ij} \bigg )^2 \bigg) \bigg |
\end{align*}

The final step follows by the triangle inequality. Let us define $z_i$ to be $$z_i =  \frac{1}{n_i} \bigg | \bigg (\sum_{j=1}^{n_i} r'_{ij} \bigg )^2 - \bigg (\sum_{j=1}^{n_i} r_{ij} \bigg )^2 \bigg|$$

The ``movement" of the row causes all of the elements of group 1 between $p$ and $q$ to change in the opposite direction of the movement (if $q>p$, then all the ranks decrease by 1, while if $q<p$, then all the ranks increase by 1). This yields a total change in the rankings of group 1 of $d - a_1$. Thus, 
\begin{align*}
    z_1 &= \frac{1}{n_1} \bigg|\Big( \sum_{j=1}^{n_i} r_{ij} \pm (d - a_1) \Big)^2 -\Big (\sum_{j=1}^{n_i} r_{ij}\Big)^2 \bigg|\\
    &= \frac{1}{n_1} \Big| \pm 2 (d - a_1)\sum_{j=1}^{n_i} r_{ij} + (d - a_1)^2 \Big| \\
    &= \frac{1}{n_1} \Big|\pm 2 (d - a_1)n_1\bar{r}_1 + (d - a_1)^2 \Big|\\
    &= \Big| \frac{(d - a_1)^2}{n_1} \pm 2\cdot (d - a_1) \cdot \bar{r}_1 \Big| \\
    &\leq \frac{(n-1)^2}{n_1}+2\cdot (n-1) \cdot \bar{r}_1 \\
    &\leq (n-1)^2 + 2\cdot (n-1) \cdot \frac{2n-n_1+1}{2} \\
    &\leq (n-1)^2 + 2\cdot (n-1) \cdot n
\end{align*}

Similarly,
\begin{align*}
    \sum_{i=2}^g |z_i| &= \sum_{i=2}^g \frac{1}{n_i} \bigg| \Big (\sum_{j=1}^{n_i} r_{ij}\pm a_i\Big)^2 - \Big (\sum_{j=1}^{n_i} r_{ij} \Big)^2 \bigg| \\
    &= \sum_{i=2}^g \frac{1}{n_i} \Big|a_i^2 \pm 2a_i\sum_{j=1}^{n_i} r_{ij} \Big| \\
    &= \sum_{i=2}^g \frac{1}{n_i} \Big|a_i^2 \pm 2a_in_i\bar{r}_i \Big| \\
    &= \sum_{i=2}^{g} \Big |\frac{a_i^2}{n_i} \pm 2a_i\bar{r}_i \Big| \\
    &\leq \sum_{i=2}^{g}\frac{a_i^2}{n_i} + 2\sum_{i=2}^{g}a_i\bar{r}_i \\
    &\leq \sum_{i=2}^{g}\frac{n_i^2}{n_i} + 2\sum_{i=2}^{g} n_i\bar{r}_i \\
    &\leq n + 2\sum_{i=1}^{g} \sum_{j=1}^{n_i} r_{ij} \\
    &\leq n + (n+1)n
\end{align*}

Then,
\begin{align*}
|\statsqu' - \statsqu|&\leq \frac{12}{n(n+1)} \bigg ((n-1)^2 + 2\cdot (n-1) \cdot n+n + (n+1)n \bigg )\\
    &\leq  \frac{12}{n(n+1)} \bigg (4(n+1)n\bigg )\\
    &\leq 48
\end{align*}

Thus with only a row change, $\Delta \statsqu = 48$.

\medskip \noindent \textbf{Case 1: }Changing row changes group

Consider two databases that differ only in that one row's rank is changed from $p$ to $q$ and its group is changed from, without loss of generality, group 1 to group 2. 

We define $a_i$ as previously. We note again that the sign of $a_i$ depends on the relationship between $p$ and $q$. If $q>p$, then the sign is negative; if $q<p$, then the sign is positive.

We begin by introducing a lemma we will use throughout this argument:
\begin{lemma} \label{lma:restterms}
    We can bound this summation as follows:
    $$\sum_{i=3}^g \frac{1}{n_i} \bigg ( \Big ( \sum_{j=1}^{n_i} r_{ij} \pm a_i  \Big)^2 - \Big ( \sum_{j=1}^{n_i} r_{ij} \Big)^2 \bigg ) \leq n^2+2n$$ 
\end{lemma}

\begin{proof}
\begin{align*}
 \sum_{i=3}^g \frac{1}{n_i} \bigg ( \Big ( \sum_{j=1}^{n_i} r_{ij} &\pm a_i  \Big)^2 - \Big ( \sum_{j=1}^{n_i} r_{ij} \Big)^2 \bigg ) \\
 &\leq  \sum_{i=3}^g \frac{1}{n_i} \bigg ( \Big ( \sum_{j=1}^{n_i} r_{ij} + a_i  \Big)^2 - \Big ( \sum_{j=1}^{n_i} r_{ij} \Big)^2 \bigg ) \\
  &= 2\sum_{i=3}^g \sum_{j=1}^{n_i} \frac{a_i r_{ij}}{n_i}+ \sum_{i=3}^g \frac{a_i^2}{n_i}\\
 &\leq 2\sum_{i=3}^g \sum_{j=1}^{n_i} r_{ij}+ \sum_{i=3}^g a_i\\
 &\leq 2\Big(\frac{n(n+1)}{2}\Big) +\sum_{i=3}^g n_i\\
 &\leq (n^2+n)+n \leq n^2 + 2n
\end{align*}
\end{proof}

When we change the group of the moving row, we must consider the sensitivity in each of three sub-cases: \\

\textbf{Case 2a: $n_1>1, \ n_2>0$}
\begin{align*}
    |\statsqu' - \statsqu| &= \bigg | \frac{12}{n(n+1)}\sum_{i=1}^g n'_i \bar{r'}_i^2 - 3(n+1) - \frac{12}{n(n+1)}\sum_{i=1}^g n_i \bar{r}_i^2 + 3(n+1) \bigg | \\
    &= \frac{12}{n(n+1)} \bigg |\sum_{i=1}^g n_i' \bar{r'}_i^2 - \sum_{i=1}^g n_i \bar{r}_i^2 \bigg | \\
    &= \frac{12}{n(n+1)} \bigg | \sum_{i=1}^g \bigg( \frac{1}{n_i'} \bigg (\sum_{j=1}^{n_i} r'_{ij} \bigg )^2 -\frac{1}{n_i} \bigg (\sum_{j=1}^{n_i} r_{ij} \bigg )^2 \bigg) \bigg |\\
\end{align*}   

Let us define $z_i$ to be $$\frac{1}{n_i'} \bigg (\sum_{j=1}^{n_i} r'_{ij} \bigg )^2 -\frac{1}{n_i} \bigg (\sum_{j=1}^{n_i} r_{ij} \bigg )^2 .$$

When the expression inside the absolute value is less than zero, $\Delta h$ can be written as
$$|\statsqu' - \statsqu|=  \frac{12}{n(n+1)} \bigg ( \sum_{i=1}^g -z_i\bigg ).$$  

Then we can bound this quantity as follows.

\begin{align*}
   -z_1 &= - \frac{1}{n_1 -1} \Big (\sum_{j=1}^{n_1} r_{1j} - p - a_1 \Big )^2 + \frac{1}{n_1} \Big (\sum_{j=1}^{n_1} r_{1j} \Big )^2\\
   &\leq -\frac{1}{n_1 -1} \Big (\sum_{j=1}^{n_1} r_{1j} - n - (n_1-1) \Big )^2 + \frac{1}{n_1} \Big (\sum_{j=1}^{n_1} r_{1j} \Big )^2 \\
   &= - \frac{ \Big (\sum_{j=1}^{n_1} r_{1j}\Big )^2}{(n_1 -1)n_1} - \frac{ (n+n_1-1)^2 }{n_1-1}+\frac{ 2 (n+n_1-1)\sum_{j=1}^{n_1} r_{1j} }{n_1-1}\\
   &\leq -0-0+\frac{  (n+n_1-1)(2n-n_1+1)n_1}{n_1-1}\\
   &= (n+n_1-1)(2n-n_1+1)+\frac{ (n+n_1-1)(2n-n_1+1)}{n_1-1}
\end{align*}

Through differentiation of the first term with respect to $n_1$, we get that the maximum occurs when $n_1=\frac{n+2}{2}$. This gives that the first term is bounded by $\frac{9n^2}{4}.$ We then differentiate the second term with respect to $n_1$ and get that the slope is always negative, which means that the maximum of the second term maximum occurs when $n_1=2$. Then the second term is bounded by $2n^2+n-1$.
 
Therefore,  
\begin{align*}
   -z_1 &\leq \frac{9n^2}{4} + 2n^2+n-1 \leq \frac{17n^2}{4}+n-1.
\end{align*}

When $i=2$:
\begin{align*}
    -z_2 &= - \frac{1}{n_2 + 1} \Big (\sum_{j=1}^{n_2} r_{2j} + q - a_2 \Big )^2 + \frac{1}{n_2} \Big (\sum_{j=1}^{n_2} r_{2j} \Big )^2\\
    &\leq - \frac{1}{n_2 + 1} \Big (\sum_{j=1}^{n_2} r_{2j} + 1 - n_2 \Big )^2 + \frac{1}{n_2} \Big (\sum_{j=1}^{n_2} r_{2j} \Big )^2 \\
    &=\frac{\Big (\sum_{j=1}^{n_2} r_{2j} \Big )^2}{n_2(n_2+1)} -\frac{(1-n_2)^2}{n_2+1}+\frac{2(n_2-1)\Big (\sum_{j=1}^{n_2} r_{2j} \Big )}{n_2+1}\\
    &\leq \frac{(2n-n_2+1)^2n_2}{4(n_2+1)}-0 +\frac{(n_2-1) (2n-n_2+1)n_2}{n_2+1}\\
    &\leq \frac{(2n-n_2+1)^2}{4}  +(n_2-1) (2n-n_2+1)\\
\end{align*}

It is clear that the first term is bounded by $n^2$. Then for the other term, we can take the derivative with respect to $n_1$, which gives that the maximum occurs when $n_2=n+1$. But as $n_2 \leq n$ and the graph is concave down, we can use $n_2=n$. Thus, that term is bounded by $(n-1) (2n-n+1) = n^2-1.$

Therefore,  
\begin{align*}
   -z_2 &\leq 2n^2-1.
\end{align*}

When $i>2$, we consider the sum of them:
\begin{align*}
-\sum_{i=3}^{g} z_i &= -\sum_{i=3}^{g} \frac{1}{n_i} \bigg ( \Big ( \sum_{j=1}^{n_i} r_{ij} - a_i  \Big)^2 - \Big ( \sum_{j=1}^{n_i} r_{ij} \Big)^2 \bigg ) \\
&=  -\sum_{i=3}^g n_i+2\sum_{i=3}^g \sum_{j=1}^{n_i} r_{ij}\\
&\leq  -(n-n_1-n_2)+2\sum_{i=3}^g \sum_{j=1}^{n_i} r_{ij}\\
&\leq - (n-n_1-n_2)+2\cdot \frac{(n_1+n_2+1+n)(n-n_1-n_2)}{2} \\
&= - (n-n_1-n_2)+n^2-(n_1+n_2)^2 + (n-n_1-n_2)\\
&\leq n^2-9
\end{align*}




Then combining them together, it follows that 
\begin{align*}
    |\statsqu' - \statsqu| &\leq \frac{12}{n(n+1)} \bigg (\frac{17n^2}{4}+n-1+ n^2 +n^2-1 + n^2-9 \bigg )\\
    &=\frac{12}{n(n+1)}\bigg(\frac{29n^2}{4}+n-1\bigg)\\
    &=87
\end{align*}

When the expression inside the absolute value bars is greater than zero, we can bound this quantity as follows:

\begin{align*}
z_1 &= \frac{1}{n_1 -1} \Big (\sum_{j=1}^{n_1} r_{1j} - p \pm a_1 \Big )^2 - \frac{1}{n_1} \Big (\sum_{j=1}^{n_1} r_{1j} \Big )^2 \\
&\leq \frac{1}{n_1 -1} \Big (\sum_{j=1}^{n_1} r_{1j} - p + a_1 \Big )^2 - \frac{1}{n_1} \Big (\sum_{j=1}^{n_1} r_{1j} \Big )^2
\end{align*}

Then consider $-p+a_1$. As it is $+a_1$, it matches the case $p>q$, and thus $-p+a_1$ is negative and we can drop it directly.

It follows that 
\begin{align*}
z_1 &\leq \frac{1}{n_1 -1} \Big (\sum_{j=1}^{n_1} r_{1j} \Big )^2 - \frac{1}{n_1} \Big (\sum_{j=1}^{n_1} r_{1j} \Big )^2\\
&=\frac{\Big (\sum_{j=1}^{n_1} r_{1j}\Big )^2 }{n_1(n_1-1)}\\
&\leq \frac{(2n-n_1+1)^2n_1^2}{4(n_1-1)n_1}\\
&\leq \frac{(2n-2+1)^2n_1^2}{2}\\
&\leq \frac{4n^2-4n+1}{2}
\end{align*}

Similarly,
\begin{align*}
z_2 &= \frac{1}{n_2 + 1} \Big (\sum_{j=1}^{n_2} r_{2j} + q \pm a_2 \Big )^2-\frac{1}{n_2}\Big (\sum_{j=1}^{n_2} r_{2j} \Big )^2 \\
&\leq \frac{1}{n_2 + 1} \Big (\sum_{j=1}^{n_2} r_{2j} + q + a_2 \Big )^2-\frac{1}{n_2}\Big (\sum_{j=1}^{n_2} r_{2j} \Big )^2 \\
&= \frac{1}{n_2 + 1} \Big (\sum_{j=1}^{n_2} r_{2j} + (q+n_2) \Big )^2-\frac{1}{n_2}\Big (\sum_{j=1}^{n_2} r_{2j} \Big )^2\\
&= -\frac{\Big (\sum_{j=1}^{n_2} r_{2j} \Big )^2}{n_2(n_2+1)}+\frac{(q+n_2)^2}{n_2+1}+\frac{2(q+n_2)\sum_{j=1}^{n_2} r_{2j}}{n_2+1}\\
&\leq -\frac{((1+n_2)n_2)^2}{4(1+n_2)n_2}+\frac{q^2}{n_2+1}+\frac{n_2^2}{n_2+1} \\
&\hspace{.5cm}+\frac{2 q \cdot n_2}{n_2+1}+\frac{(n+n_2)(2n-n_2+1)n_2}{n_2+1}\\
&\leq -\frac{((1+n_2)n_2)^2}{4(1+n_2)n_2}+\frac{n^2}{n_2+1}+n_2+2n+(n+n_2)(2n-n_2+1)\\
\end{align*}

Consider $(n+n_2)(2n-n_2+1)$. The derivative with respect to $n_2$ is $1 + n - 2 n_2$. As the graph of the equation is concave down, when the derivative is $0$, it reaches its maximum. Thus the maximum occurs when $n_2=\frac{n+1}{2}$, which leads to the maximum $\frac{(3n+1)^2}{4}$.\\

Then it follows that,
\begin{align*}
z_2 &\leq -\frac{(1+n_2)n_2}{4}+\frac{n^2}{2}+n+2n+\frac{(3n+1)^2}{4}\\
&= -\frac{1}{2}+\frac{n^2}{2}+3n+\frac{9n^2}{4}+\frac{6n}{4}+\frac{1}{4}\\
&=\frac{11n^2}{4}+\frac{9n}{2}-\frac{1}{4}
\end{align*}

The rest of the terms, by Lemma \ref{lma:restterms}, are bounded by $n^2-2n$. It follows that
\begin{align*}
    |\statsqu' - \statsqu| &\leq \frac{12}{n(n+1)} \Big(\frac{4n^2-4n+1}{2}+\frac{11n^2}{4}+\frac{9n}{2}-\frac{1}{4}+ n^2+2n\Big)\\
    &=\frac{12}{n(n+1)} \Big( \frac{23n^2}{4}+\frac{9n}{2} +\frac{1}{4}\Big)\\
    &\leq 69
\end{align*}

\textbf{Case 2b: $n_1=1, \ n_2>0$}
\begin{align*}
    |\statsqu' - \statsqu| &= \bigg | \frac{12}{n(n+1)}\sum_{i=1}^g n'_i \bar{r'}_i^2 - 3(n+1) \\
    & \qquad - \frac{12}{n(n+1)}\sum_{i=1}^g n_i \bar{r}_i^2 + 3(n+1) \bigg | \\
    &= \frac{12}{n(n+1)} \bigg |\sum_{i=1}^g n_i' \bar{r'}_i^2 - \sum_{i=1}^g n_i \bar{r}_i^2 \bigg | \\
    &= \frac{12}{n(n+1)} \bigg |(n_1-1)\bar{r'}_1^2 -n_1\bar{r}_1^2+(n_2+1)\bar{r'}_2^2 -n_2\bar{r}_2^2 \\
    & \qquad +\sum_{i=3}^g n_i (\bar{r'}_i^2 - \bar{r}_i^2) \bigg |\\
    &\leq \frac{12}{n(n+1)} \bigg( |-p^2|+|(n_2+1)\bar{r'}_2^2 -n_2\bar{r}_2^2 | +n^2+2n\bigg)\\
    &\leq  \frac{12}{n(n+1)} \bigg( |(n_2+1)\bar{r'}_2^2 -n_2\bar{r}_2^2 | +2n^2+2n\bigg)
\end{align*}

Now consider the term $|(n_2+1)\bar{r'}_2^2 -n_2\bar{r}_2^2|$,
\begin{align*}
    |(n_2+1)\bar{r'}_2^2 -n_2\bar{r}_2^2| &=\Big| \frac{1}{n_2 + 1} \Big (\sum_{j=1}^{n_2} r_{2j} + q \pm a_2 \Big )^2-\frac{1}{n_2}\Big (\sum_{j=1}^{n_2} r_{2j} \Big )^2\Big|\\
    &\leq \bigg|-\frac{\Big (\sum_{j=1}^{n_2} r_{2j} \Big )^2}{n_2(n_2+1)} \bigg|+\frac{(q\pm a_2)^2}{n_2+1} \\
    & \qquad +\frac{2|q\pm a_2|\sum_{j=1}^{n_2} r_{2j}}{n_2+1}\\
    &\leq \frac{(2n-n_2+1)^2n_2^2}{4n_2(n_2+1)}+\frac{(q+ a_2)^2}{n_2+1} \\
    & \qquad +\frac{2(q+ a_2)}{n_2+1}\cdot \frac{(2n-n_2+1)n_2}{2}\\
    &\leq n^2 + \frac{(n+ n_2)^2}{n_2+1}+(2n-n_2+1)(n+ n_2)\\
    &= n^2 + \frac{n^2}{n_2+1}+\frac{n_2}{n_2+1}+\frac{2n\cdot n_2}{n_2+1} \\
    & \qquad + 2n^2+nn_2-n_2^2+n_2+n\\
    &\leq n^2+\frac{n^2}{2}+1+2n+2n^2+n^2+2n\\
    &\leq \frac{9}{2}n^2+4n+1
\end{align*}

It follows that 
\begin{align*}
    |\statsqu' - \statsqu| &\leq \frac{12}{n(n+1)} \bigg( \frac{9}{2}n^2+4n+1 +(2n^2+2n)\bigg)\\
    &= \frac{12}{n(n+1)} \bigg( \frac{13}{2}n^2+6n+1\bigg)\\
    &\leq \frac{12}{n(n+1)} \bigg( \frac{13}{2}n(n+1)-\frac{n}{2} + \frac{n}{2} \bigg)\\
    &\leq 78\\
\end{align*}

\textbf{Case 2c: $n_2=0, \ n_1 > 1$}

\begin{align*}
    |\statsqu' - \statsqu| &= \bigg | \frac{12}{n(n+1)}\sum_{i=1}^g n'_i \bar{r'}_i^2 - 3(n+1) - \frac{12}{n(n+1)}\sum_{i=1}^g n_i \bar{r}_i^2 + 3(n+1) \bigg | \\
    &= \frac{12}{n(n+1)} \bigg |\sum_{i=1}^g n_i' \bar{r'}_i^2 - \sum_{i=1}^g n_i \bar{r}_i^2 \bigg | \\
    &= \frac{12}{n(n+1)} \bigg |(n_1-1)\bar{r'}_1^2 -n_1\bar{r}_1^2+(n_2+1)\bar{r'}_2^2 -n_2\bar{r}_2^2 \\
    & \qquad +\sum_{i=3}^g n_i (\bar{r'}_i^2 - \bar{r}_i^2) \bigg |\\
    &\leq \frac{12}{n(n+1)} \bigg(|(n_1-1)\bar{r'}_1^2 -n_1\bar{r}_1^2 |+q^2 +\sum_{i=3}^{g} n_i (\bar{r'}_i^2 - \bar{r}_i^2)\bigg)\\
    &\leq  \frac{12}{n(n+1)} \bigg( |(n_1-1)\bar{r'}_1^2 -n_1\bar{r}_1^2 | +(n^2+2n+n^2)\bigg)
\end{align*}

Now consider the term $|(n_1-1)\bar{r'}_1^2 -n_1\bar{r}_1^2|$,
\begin{align*}
    |(n_1-&1)\bar{r'}_1^2 -n_1\bar{r}_1^2| \\
    &=\Big| \frac{1}{n_1 - 1} \Big (\sum_{j=1}^{n_1} r_{1j} -p \pm a_1 \Big )^2-\frac{1}{n_1}\Big (\sum_{j=1}^{n_1} r_{1j} \Big )^2\Big|\\
    &\leq \bigg|\frac{\Big (\sum_{j=1}^{n_1} r_{1j} \Big )^2}{n_1(n_1-1)}+\frac{(-p\pm a_1)^2}{n_1-1}  \\
    &\qquad +\frac{2(-p\pm a_1)\sum_{j=1}^{n_1} r_{1j}}{n_1-1}\bigg|
\end{align*}

note that $-p\pm a_1$ is always negative. If the sign is $-$, then this directly follows. If the sign is $+$, then $p>q$, which indicates $p>a_1$ and thus it again follows that the term is negative. 

Then if the whole inside expression is negative, 
\begin{align*}
    |(n_1-1)\bar{r'}_1^2 -n_1\bar{r}_1^2| &\leq \frac{2(p\pm a_1)\sum_{j=1}^{n_1} r_{1j}}{n_1-1} \\
    &\leq \frac{2(-p+ a_1)\sum_{j=1}^{n_1} r_{1j}}{n_1-1} \\
    &\leq \frac{(n+n_1)(2n-n_1+1)n_1}{n_1-1}\\
    &\leq 2(n+n_1)(2n-n_1+1)\\
    &\leq 2 \cdot \frac{(3n+1)^2}{4}\\
    &\leq \frac{9n^2}{2}+3n+\frac{1}{2}
\end{align*}

Then if the whole inside expression is positive, 
\begin{align*}
    |(n_1-1)\bar{r'}_1^2 -n_1\bar{r}_1^2| &\leq \frac{\Big (\sum_{j=1}^{n_1} r_{1j} \Big )^2}{n_1(n_1-1)}+\frac{(-p\pm a_1)^2}{n_1-1}\\
    &\leq \frac{n_1}{n_1-1}\cdot \frac{(2n-n_1+1)^2}{4}+\frac{(n+n_1)^2}{n_1-1}\\
    &\leq 2n^2+\frac{n^2}{n_1-1}+\frac{2nn_1}{n_1-1}+\frac{n_1^2}{n_1-1}\\
    &\leq 2n^2+n^2+4n+2n\\
    &\leq 3n^2+6n
\end{align*}

As $n\geq 2$, then $\frac{3n^2}{2}-3n+\frac{1}{2} \geq 0$. It follows that $\frac{9n^2}{2}+3n+\frac{1}{2} \geq 3n^2+6$, and thus $$|(n_1-1)\bar{r'}_1^2 -n_1\bar{r}_1^2| \leq \frac{9n^2}{2}+3n+\frac{1}{2}.$$

Therefore, \begin{align*}
    |\statsqu' - \statsqu| &\leq  \frac{12}{n(n+1)} \bigg( \frac{9n^2}{2}+3n+\frac{1}{2} +(n^2+2n+n^2)\bigg)\\
    &\leq \frac{12}{n(n+1)} \bigg( \frac{13n^2}{2}+5n+\frac{1}{2} \bigg)\\
    &\leq \frac{12}{n(n+1)} \bigg( \frac{13n(n+1)}{2}-\frac{3n}{2}+n\bigg)\\
    &\leq 78
\end{align*}

\end{proof}

\subsection{Proof of Theorem \ref{thm:pKW}}\label{sec:kw_privacy_pf}

\begin{theorem*}
    Algorithm $\privKW$ is $\epsilon$-differentially private.
\end{theorem*}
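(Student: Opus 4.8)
The plan is to recognize Algorithm $\privKW$ as nothing more than the Laplace mechanism (Definition \ref{def:lm}) applied to the Kruskal-Wallis statistic, and then to discharge the one genuine complication — the random tie-breaking — with a coupling argument. The algorithm computes $\statsqu = \KW(\data)$ after breaking ties and releases $\statsqu + \lap(87/\epsilon)$. Since Theorem \ref{thm:squaresensi} bounds the sensitivity of $\KW$ by $87$ precisely for databases with no tied values, the noise scale $87/\epsilon$ is exactly what the Laplace mechanism prescribes; so if $\statsqu$ were a genuine deterministic function of sensitivity $87$, the result would follow immediately from Theorem \ref{thm:lm}.

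First I would record the tie-free case. For neighboring $\data, \data'$ with all values distinct, Theorem \ref{thm:squaresensi} gives $|\KW(\data) - \KW(\data')| \le 87$, and Theorem \ref{thm:lm} then yields $\Pr[\widetilde{\statsqu} \in S] \le e^\epsilon \Pr[\widetilde{\statsqu}' \in S]$ for every measurable $S$, which is exactly Definition \ref{def:dp} with $\delta = 0$. The remaining work is to show that the random ordering of ties does not spoil this.

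The main obstacle is therefore the tie-breaking step, since the sensitivity bound was only established for tie-free data, whereas an arbitrary neighboring pair $\data, \data'$ may contain ties and is randomized before $\statsqu$ is computed. I would handle this by modeling tie-breaking as the addition of an independent, data-independent perturbation $\eta_i$ to each row $i$, followed by ranking the perturbed values, and then coupling the two runs: for a pair differing only in row $k$, use the identical perturbation $\eta_i$ on every common row $i \neq k$. Under this coupling, both perturbed databases almost surely have distinct values and differ in exactly one row, so for every realization of the shared randomness they form a tie-free neighboring pair and Theorem \ref{thm:squaresensi} applies conditionally, giving $|\statsqu - \statsqu'| \le 87$.

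Finally I would integrate over the coupling. Writing $\Pr[\privKW(\data) \in S] = \mathbb{E}_{\omega}\big[\Pr[\statsqu + Y \in S \mid \omega]\big]$ with $Y \sim \lap(87/\epsilon)$ and $\omega$ the coupled tie-breaking randomness, the conditional Laplace-mechanism bound $\Pr[\statsqu + Y \in S \mid \omega] \le e^\epsilon \Pr[\statsqu' + Y \in S \mid \omega]$ holds pointwise in $\omega$ by the previous step. Because each marginal law of the perturbations is the correct one — the coupling only correlates the two runs without changing either marginal — taking expectations recovers the true release probabilities on both sides and preserves the factor $e^\epsilon$, establishing $\epsilon$-differential privacy via Definition \ref{def:dp}. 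I expect the coupling/conditioning bookkeeping to be the only delicate part; everything else is a direct appeal to the already-proved sensitivity bound and the Laplace mechanism.
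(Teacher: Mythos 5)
Your proposal is correct and follows essentially the same route as the paper's proof: the paper also conditions on the tie-breaking randomness (fixing $r_1^*$ so both runs share it, which is exactly your coupling), observes that the conditional mechanism is a Laplace mechanism with sensitivity $87$ by Theorem \ref{thm:squaresensi} and Theorem \ref{thm:lm}, and then concludes by noting that a data-independent mixture of $\epsilon$-differentially private mechanisms is $\epsilon$-differentially private, which is your integration step. Your write-up merely makes explicit the details the paper leaves implicit (almost-sure distinctness after perturbation and the marginal-preserving coupling), so no substantive difference exists.
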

\begin{proof}
There are two sources of randomness inside Algorithm $\privKW$: the random breaking of ties and the Laplace noise. So let us write $\privKW(\data) := \privKW(\data;r_1,r_2),$ where $r_1$ indicates the randomness of breaking ties and $r_2$ indicates the randomness of the Laplace noise.

For an arbitrary choice of $r_1^*$, let $\privKW^{r_1^*}(\data;r_2)$, be the function obtained by fixing $r_1 = r_1^*$ in \privKW. $\privKW^{r_1^*}$ is simply a Laplace mechanism, so $\epsilon$-differential privacy follows by Theorems \ref{thm:squaresensi} and \ref{thm:lm}.

Now \privKW is simply randomly outputting one of a large (but finite) set of possible  $\epsilon$-differentially private outputs.  Doing so is always private, so \privKW is private.
\end{proof}


\subsection{Simplification of \statabs}\label{sec:kw_abs_deriv}

Changing the squares to absolute values in the Kruskal-Wallis yields the formula:
\begin{equation}
    \statabs = (n-1)\frac{\sum_{i=1}^{g}n_i|\bar{r}_i-\bar{\bar{r}}|}{\sum_{i=1}^{g}\sum_{j=1}^{n_i}|r_{ij}-\bar{\bar{r}}|}
\end{equation}

When there are no tied rows in the data, this formula can be simplified a bit. The simplification depends on whether $n$ is even or odd.

\textbf{When $n$ is even:}

\begin{align*}
    \sum_{i=1}^{g}\sum_{j=1}^{n_i}|r_{ij}-\bar{\bar{r}}| &= \Big(\frac{n+1}{2} - 1 \Big) + \Big(\frac{n+1}{2} - 2 \Big) + \ldots + \Big(\frac{n+1}{2} - \frac{n}{2} \Big) \\
    & \quad + \Big(- \frac{n+1}{2} + \frac{n}{2} + 1 \Big) + \Big(- \frac{n+1}{2} + \frac{n}{2} + 2 \Big) + \ldots\\
    & \quad + \Big(- \frac{n+1}{2} + \frac{n}{2} + \frac{n}{2} \Big) \\
    &= -1 -2 - \ldots - \frac{n}{2} + \Big(\frac{n}{2} + 1 \Big) + \Big(\frac{n}{2} + 2 \Big) + \ldots \\
    & \quad + \Big(\frac{n}{2} + \frac{n}{2} \Big) \\
    &= \frac{n}{2} \cdot \frac{n}{2} = \frac{n^2}{4}
\end{align*}

Thus, the simplified form of $\statabs$ for even $n$ is:
\begin{equation}
    \statabs = \frac{4(n-1)}{n^2}\sum_{i=1}^{g}n_i|\bar{r}_i-\bar{\bar{r}}|
\end{equation}

\textbf{When $n$ is odd}

\begin{align*}
    \sum_{i=1}^{g}\sum_{j=1}^{n_i}|r_{ij}-\bar{\bar{r}}|&= \Big(\frac{n+1}{2}-1\Big)+\ldots+\Big(\frac{n+1}{2}-(\frac{n+1}{2}-1)\Big)\\
    & \quad +\Big(\frac{n+1}{2}-\frac{n+1}{2}\Big)+\Big(\Big(\frac{n+1}{2}+1\Big)-\frac{n+1}{2}\Big)\\
    & \quad +\ldots+\Big(n-\frac{n+1}{2}\Big)\\
    &=\Big(\frac{n-1}{2}\Big)+\ldots+1+0+1+\ldots +\Big(\frac{n-1}{2}\Big)\\
    &= 2 \cdot\bigg(1+2+\ldots+\Big(\frac{n-1}{2}\Big)\bigg)\\
    &=  \frac{n-1}{2}\cdot\frac{n+1}{2}
\end{align*}

Thus, the simplified form of $\statabs$ for odd $n$ is:
\begin{equation}
    \statabs = \frac{4}{n+1}\sum_{i=1}^{g}n_i|\bar{r}_i-\bar{\bar{r}}|
\end{equation}

As $\frac{4}{n+1} \geq \frac{4(n-1)}{n^2}$, we will use $\tilde{h}= \tilde{h}_{O} = \frac{4}{n+1}\sum_{i=1}^{g}n_i|\bar{r}_i-\bar{\bar{r}}|$ for sensitivity analysis.

\subsection{Proof of Theorem \ref{thm:abssensi}}\label{sec:kw_abs_sens}

\begin{proof}
Here we bound the sensitivity of our \KWA test statistic.

\begin{theorem*} 
The sensitivity of $\KWA$ is bounded by 8.
\end{theorem*}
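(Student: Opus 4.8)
The plan is to reduce the sensitivity computation to a clean bound on a sum of group rank-sums. Using the dominant odd-$n$ form (which upper-bounds the even case since $\frac{4}{n+1} \ge \frac{4(n-1)}{n^2}$, as derived in Appendix~\ref{sec:kw_abs_deriv}), I would write $R_i = \sum_{j=1}^{n_i} r_{ij} = n_i \bar{r}_i$ and set $T_i = R_i - \frac{n_i(n+1)}{2}$, so that $\statabs = \frac{4}{n+1}\sum_{i=1}^g |T_i|$. The crucial observation is that for neighboring databases $n$ is unchanged (a row is altered, not added or removed), so both the prefactor $\frac{4}{n+1}$ and the centering term $\frac{n_i(n+1)}{2}$ depend only on the public value $n$ and on $n_i$. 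Writing $T_i$ through $R_i$ rather than $\bar{r}_i$ also sidesteps the division-by-zero issue when a group becomes empty, since $n_i=0$ forces $R_i=0$ and hence $T_i=0$.

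Next I would apply the reverse triangle inequality term by term, $\big|\sum_i |T_i'| - \sum_i |T_i|\big| \le \sum_i \big||T_i'| - |T_i|\big| \le \sum_i |T_i' - T_i|$, so it suffices to bound $\frac{4}{n+1}\sum_i |T_i' - T_i|$. I would then split into the two cases used in the square-statistic proof, taking $q>p$ without loss of generality by symmetry. In Case~1 the moved row keeps its group, so every $n_i$ is fixed and $T_i' - T_i = R_i' - R_i$. Letting the row move from rank $p$ to rank $q$ with $d = q-p$, and letting $a_i$ count the group-$i$ rows among the $d$ shifted positions (so $\sum_i a_i = d$ and $a_1 \le d$), the moving row's group (say group~1) changes its rank-sum by $\pm(d - a_1)$ while each other group changes by $\mp a_i$. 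Hence $\sum_i |T_i'-T_i| = (d-a_1) + \sum_{i\ge 2} a_i = 2(d-a_1) \le 2d \le 2(n-1)$.

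For Case~2 the row also changes group, say from group~1 to group~2, so $n_1' = n_1-1$, $n_2' = n_2+1$, and the two affected terms acquire an extra $\pm\frac{n+1}{2}$: explicitly $T_1'-T_1 = (R_1'-R_1) + \frac{n+1}{2} = \frac{n+1}{2} - p - a_1$ and $T_2'-T_2 = (R_2'-R_2) - \frac{n+1}{2} = q - a_2 - \frac{n+1}{2}$, while the remaining groups behave exactly as in Case~1. The key inequalities I would establish are $p + a_1 \le q \le n$ and $q - a_2 \ge p \ge 1$, which (with $1\le p$ and $q \le n$) confine each of $T_1'-T_1$ and $T_2'-T_2$ to $[\frac{1-n}{2}, \frac{n-1}{2}]$, giving $|T_1'-T_1|, |T_2'-T_2| \le \frac{n-1}{2}$; the remaining groups contribute $\sum_{i\ge 3} a_i \le d \le n-1$. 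Summing again yields $\sum_i|T_i'-T_i| \le 2(n-1)$.

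In both cases $|\statabs' - \statabs| \le \frac{4}{n+1}\cdot 2(n-1) = \frac{8(n-1)}{n+1} < 8$, which proves the claimed bound. The main obstacle is the bookkeeping in Case~2: carefully verifying the signs and the ranges of $p$, $q$, and each $a_i$ so that the two $\frac{n+1}{2}$ terms do not push the per-group bound above $\frac{n-1}{2}$, and checking the degenerate sub-cases where group~1 started with a single element or group~2 was empty (the analogues of the sub-cases in the square proof). Once those range constraints are pinned down, the arithmetic collapses to the same $2(n-1)$ bound as Case~1, and the clean cancellation is what makes the absolute-value statistic's sensitivity so much smaller than the $87$ of the squared version.
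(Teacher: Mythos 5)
Your proposal is correct, and it reaches the paper's bound by the same overall strategy---a case split on whether the altered row keeps or changes its group, with identical $p$, $q$, $a_i$ bookkeeping for how ranks shift---but your algebraic packaging is a genuine improvement in execution. The paper works directly with the terms $n_i'\,\big|\bar{r}_i' - \tfrac{n+1}{2}\big| - n_i\,\big|\bar{r}_i - \tfrac{n+1}{2}\big|$ and, because $\bar{r}_i$ involves division by $n_i$, its group-change case must be split into three subcases ($n_1>1,\ n_2>0$; $n_1=1$; $n_2=0$), each handled with its own triangle-inequality manipulation. Your reformulation via centered rank sums $T_i = R_i - \tfrac{n_i(n+1)}{2}$, combined with the reverse triangle inequality $\big|\sum_i |T_i'| - \sum_i |T_i|\big| \le \sum_i |T_i' - T_i|$, reduces everything to bounding the total change in rank sums, and since $T_i$ is well defined (and zero) for an empty group, the degenerate subcases collapse: the identities $T_1'-T_1 = \tfrac{n+1}{2} - p - a_1$ and $T_2'-T_2 = q - a_2 - \tfrac{n+1}{2}$ hold uniformly, with $a_1 = 0$ automatic when $n_1 = 1$ and $a_2 = 0$ automatic when $n_2 = 0$. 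Your interval bounds, driven by $p + a_1 \le q$ and $q - a_2 \ge p$ (both consequences of $\sum_i a_i = q-p$), also yield the slightly sharper constant $\tfrac{8(n-1)}{n+1} < 8$ in both cases, whereas the paper's looser term-by-term estimates give $\tfrac{8n}{n+1}$ in its subcase 2a. The only point to make fully explicit is the symmetry argument justifying $q > p$ without loss of generality in the group-change case (it requires swapping the roles of $\data$ and $\data'$ together with relabeling groups 1 and 2), and the trivial $p = q$ group-change instance; both are immediate, so nothing essential is missing.
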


As $\frac{4}{n+1} \geq \frac{4(n-1)}{n^2}$ for $n>0$, we will use \\$\statabs= \frac{4}{n+1}\sum_{i=1}^{g}n_i|\bar{r}_i-\bar{\bar{r}}|$ for sensitivity analysis. As in the proof of Theorem \ref{thm:squaresensi}, the proof proceeds in cases.  Again, we consider neighboring databases $\data, \ \data'$ that differ by a change in one row, and our cases are differentiated by whether or not the group of that row changes.

\medskip \noindent \textbf{Case 1: }Changing row retains group

Consider a ``movement" of a row from row $p$ to row $q$. Assume $p \leq q$ (the other case is symmetric).

\begin{align*}
    \Delta \statabs = \frac{4}{n+1} \left | \sum_{i=1}^g n_i \left(\left | \bar{r}_i' - \frac{n+1}{2} \right | - \left | \bar{r}_i - \frac{n+1}{2} \right |\right)\right |
\end{align*}

Then we can analyze the expression inside the outer absolute value.

When $i=1$, 
\begin{align*}
    n_1 \bigg(\Big | \bar{r}_1' - &\frac{n+1}{2} \Big | - \Big | \bar{r}_1 - \frac{n+1}{2} \Big |\bigg) \\
    &=n_1 \bigg(\Big | \frac{n_1\bar{r}_1-p+q-a_1}{n_1} - \frac{n+1}{2} \Big | \\
    & \qquad - \Big | \bar{r}_i - \frac{n+1}{2} \Big |\bigg)\\
    &= n_1 \bigg(\Big | \bar{r}_1+\frac{-p+q-a_1}{n_1} - \frac{n+1}{2} \Big | \\ 
    & \qquad - \Big | \bar{r}_i - \frac{n+1}{2} \Big |\bigg)\\
    &\leq  n_1\Big |\frac{-p+q-a_1}{n_1} \Big |\\
    &=q-p-a_1
\end{align*}

When $i \neq 1$,
\begin{align*}
    n_i \bigg(\Big | \bar{r}_i' - \frac{n+1}{2} \Big | - \Big | \bar{r}_i - \frac{n+1}{2} \Big |\bigg) &= n_i \bigg(\Big|\frac{n_i\bar{r}_i-a_i}{n_i}-\frac{n+1}{2}\Big|\\
    & \qquad -\Big|\bar{r}_i-\frac{n+1}{2}\Big|\bigg)\\
    &\leq \frac{a_i}{n_i} \cdot n_i\\
    &= a_i
\end{align*}

It follows that,
\begin{align*}
    \Delta \statabs &= \frac{4}{n+1} \bigg(q-p-a_1+\sum_{i=2}^{g}a_i\bigg)\\
    &= \frac{4}{n+1} \bigg(q-p-2a_1+(q-p)\bigg)\\
    &\leq \frac{8}{n+1} \cdot (n-1)\\
    &\leq 8 - \frac{16}{n+1}
\end{align*}

\medskip \noindent \textbf{Case 2: }Changing row changes group

Consider a ``movement" of a row from row $p$ to row $q$, where, without loss of generality, its group changes from group 1 to group 2. As before, we must consider three cases: \\

\textbf{Case 2a: $n_1>1, \ n_2>0$}

\begin{align*}
    \Delta \statabs &= \frac{4}{n+1} \bigg | \sum_{i=1}^g \bigg(n_i' \Big | \bar{r}_i' - \frac{n+1}{2} \Big | - n_i \Big | \bar{r}_i - \frac{n+1}{2} \Big | \bigg) \bigg |
\end{align*} 

For $i = 1$,

\begin{align*}
     n_1' \Big | \bar{r}_1' - &\frac{n+1}{2} \Big | - n_1 \Big | \bar{r}_1 - \frac{n+1}{2} \Big | \\
     &= (n_1 - 1) \Big | \frac{n_1 \bar{r}_1 - p - a_1}{n_1 - 1} - \frac{n+1}{2} \Big | \\
     & \qquad - n_1 \Big | \frac{n_1 \bar{r}_1}{n_1} - \frac{n+1}{2} \Big | \\
     &= \Big | n_1 \bar{r}_1 - p - a_1 - \frac{(n+1)(n_1 - 1) }{2} \Big | \\
     & \qquad - \Big | n_1 \bar{r}_1 - \frac{n_1(n+1)}{2} \Big | \\
     &= \Big | n_1 \bar{r}_1 - \frac{(n+1)n_1}{2} - p - a_1 + \frac{n+1}{2} \Big | \\
     & \qquad - \Big | n_1 \bar{r}_1 - \frac{n_1(n+1)}{2} \Big | \\
     &\leq \Big | n_1 \bar{r}_1 - \frac{(n+1)n_1}{2}\Big | + \Big | - p - a_1 + \frac{n+1}{2} \Big | \\
     & \qquad - \Big | n_1 \bar{r}_1 - \frac{n_1(n+1)}{2} \Big | \\
     &= \Big | - p - a_1 + \frac{n+1}{2} \Big |
\end{align*}

For $i = 2$,

\begin{align*}
     n_2' \Big | \bar{r}_2' - \frac{n+1}{2} \Big | - &n_2 \Big | \bar{r}_2 - \frac{n+1}{2} \Big | \\
     &= (n_2 + 1) \Big | \frac{n_2 \bar{r}_2 + q - a_2}{n_2 + 1} - \frac{n+1}{2} \Big | \\
     & \qquad - n_2 \Big | \frac{n_2 \bar{r}_2}{n_2} - \frac{n+1}{2} \Big | \\
     &= \Big | n_2 \bar{r}_2 + q - a_2 - \frac{(n_2 + 1)(n+1)}{2} \Big | \\
     & \qquad - \Big | n_2 \bar{r}_2 - \frac{n_2(n+1)}{2} \Big | \\
     &= \Big | n_2 \bar{r}_2 - \frac{n_2(n+1)}{2} + q - a_2 - \frac{n+1}{2} \Big | \\
     & \qquad - \Big | n_2 \bar{r}_2 - \frac{n_2(n+1)}{2} \Big | \\
     &\leq \Big | n_2 \bar{r}_2 - \frac{n_2(n+1)}{2} \Big | + \Big | q - a_2 - \frac{n+1}{2} \Big | \\
     & \qquad - \Big | n_2 \bar{r}_2 - \frac{n_2(n+1)}{2} \Big | \\
     &= \Big | q - a_2 - \frac{n+1}{2} \Big |
\end{align*}

For $i > 2$,

\begin{align*}
    n_i \bigg(\Big | \bar{r}_i' - \frac{n+1}{2} \Big | &- \Big | \bar{r}_i - \frac{n+1}{2} \Big |\bigg) \\
    &= n_i \bigg(\Big|\frac{n_i\bar{r}_i-a_i}{n_i}-\frac{n+1}{2}\Big|-\Big|\bar{r}_i-\frac{n+1}{2}\Big|\bigg)\\
    &\leq \frac{a_i}{n_i} \cdot n_i\\
    &= a_i
\end{align*}

Thus, we can bound the sensitivity by:
\begin{align*}
    \Delta \statabs &\leq \frac{4}{n+1} \bigg ( \, \Big | - p - a_1 + \frac{n+1}{2} \Big | + \Big | q - a_2 - \frac{n+1}{2} \Big | + \sum_{i=3}^g a_i \; \bigg ) \\
    &\leq \frac{4}{n+1} \bigg ( \, \Big | - p + \frac{n+1}{2} \Big | + \Big | q - \frac{n+1}{2} \Big | + \sum_{i=1}^g a_i \; \bigg ) \\
    &\leq \frac{4}{n+1} \bigg ( \, \Big | - p + \frac{n+1}{2} \Big | + \Big | q - \frac{n+1}{2} \Big | + q - p \; \bigg )\\
    &< \frac{4}{n+1} \bigg ( \,  \frac{n+1}{2} +\frac{n+1}{2} + q - p \; \bigg )\\
    &\leq \frac{4}{n+1} \cdot 2n\\
    &\leq 8
\end{align*}

\textbf{Case 2b: $n_1=1, \ n_2>0$}

As before, 
\begin{align*}
    \Delta \statabs &= \frac{4}{n+1} \bigg | \sum_{i=1}^g \bigg(n_i' \Big | \bar{r}_i' - \frac{n+1}{2} \Big | - n_i \Big | \bar{r}_i - \frac{n+1}{2} \Big | \bigg) \bigg |
\end{align*} 

For $i = 1$,
\begin{align*}
     n_1' \Big | \bar{r}_1' - \frac{n+1}{2} \Big | - n_1 \Big | \bar{r}_1 - \frac{n+1}{2} \Big | &= 0 - 1 \cdot \Big | p - \frac{n+1}{2} \Big |\\
     &\leq - \frac{n-1}{2} 
\end{align*}

The $i \neq 1$ cases are as in case I. Thus,
\begin{align*}
    \Delta \statabs &= \frac{4}{n+1} \bigg | - \frac{n-1}{2} + \Big | q \pm a_2 - \frac{n+1}{2} \Big | + \sum_{i=3}^g a_i \bigg |\\
    &\leq \frac{4}{n+1} \bigg (\Big| - \frac{n-1}{2}\Big| + \Big | q - \frac{n+1}{2} \Big | + \sum_{i=2}^g a_i \bigg )\\
    &\leq \frac{4}{n+1} \bigg ( \frac{n-1}{2} +  \frac{n-1}{2} + n\bigg )\\
    &\leq \frac{4}{n+1}  ( n-1 + n)\\
    &\leq 8
\end{align*} 

\textbf{Case 2c: $n_2=0, \ n_1 > 1$}

As before,
\begin{align*}
    \Delta \statabs &= \frac{4}{n+1} \bigg | \sum_{i=1}^g \bigg(n_i' \Big | \bar{r}_i' - \frac{n+1}{2} \Big | - n_i \Big | \bar{r}_i - \frac{n+1}{2} \Big | \bigg) \bigg |
\end{align*}

For $i=2$,$$n_2' \Big | \bar{r}_2' - \frac{n+1}{2} \Big | - n_2 \Big | \bar{r}_2 - \frac{n+1}{2} \Big | = \Big |q - \frac{n+1}{2} \Big | \leq \frac{n-1}{2}.$$

For $i \neq 2$, it is the same as in case II. Thus,
\begin{align*}
    \Delta \statabs &= \frac{4}{n+1} \bigg | \Big | - p \pm a_1 + \frac{n+1}{2} \Big |+\frac{n-1}{2}+\sum_{i=3}^{g}a_i \bigg |\\
    &\leq \frac{4}{n+1} \bigg ( \Big | - p  + \frac{n+1}{2} \Big |+\frac{n-1}{2}+\sum_{i=2}^{g}a_i \bigg )\\
    &\leq \frac{4}{n+1} \bigg ( \frac{n-1}{2} +\frac{n-1}{2}+n \bigg )\\
    &\leq \frac{4}{n+1} \bigg ( n-1+n )\\
    &\leq 8
\end{align*}
\end{proof}

\subsection{Type I Error Rate of $\privKWA$}\label{sec:kw_abs_t1}
When we compute the reference distribution for $\statabs$ in the public setting, we require the size of each group to be known. In the private setting, however, the group sizes are not public and therefore cannot be used in the reference distribution. Thus, we must determine the worst case group sizes and use these to compute the reference distribution in Algorithm $\privKWP$.

\begin{theorem} \label{thm:exp_val}
Let $H_{abs}$ be the random variable of which $\statabs$ is an instance. Then the expected value of $H_{abs}$ is maximized when the sizes of the groups are $n/g$.
\end{theorem}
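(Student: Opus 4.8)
The plan is to reduce the claim to a statement about the mean absolute deviation of the group rank-sums and then exploit concavity. First I would rewrite the statistic in terms of rank-sums. Writing $S_i = \sum_{j=1}^{n_i} r_{ij}$ for the sum of ranks falling in group $i$, we have $n_i\bigl(\bar r_i - \tfrac{n+1}{2}\bigr) = S_i - n_i\tfrac{n+1}{2} = S_i - \mathbb{E}[S_i]$, since under $H_0$ each rank is equally likely to land in any group and so $\mathbb{E}[S_i] = n_i\tfrac{n+1}{2}$ by linearity. Hence $n_i\,\bigl|\bar r_i - \tfrac{n+1}{2}\bigr| = |S_i - \mathbb{E}[S_i]|$, and because the leading constant $\tfrac{4}{n+1}$ does not depend on the allocation $(n_1,\dots,n_g)$,
\[
\mathbb{E}[H_{abs}] = \frac{4}{n+1}\sum_{i=1}^{g}\mathbb{E}\bigl|S_i - \mathbb{E}[S_i]\bigr|.
\]
So maximizing $\mathbb{E}[H_{abs}]$ is equivalent to maximizing the sum of the mean absolute deviations of the $S_i$. (The even-$n$ case is identical, since its constant $\tfrac{4(n-1)}{n^2}$ likewise factors out of the allocation problem.)

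Next I would compute the relevant moments. Under $H_0$ the ranks assigned to group $i$ form a uniformly random size-$n_i$ subset of $\{1,\dots,n\}$, so $S_i$ is the sum of a simple random sample drawn without replacement from a population with variance $\sigma^2 = \tfrac{n^2-1}{12}$. The finite-population variance of a sample sum then gives
\[
\operatorname{Var}(S_i) = n_i\,\sigma^2\,\frac{n-n_i}{n-1} = \frac{(n+1)\,n_i(n-n_i)}{12}.
\]
For $n_i$ not too small, a central limit theorem for sampling without replacement makes $S_i$ approximately normal, and the mean absolute deviation of a normal variable equals $\sigma\sqrt{2/\pi}$. Therefore
\[
\mathbb{E}\bigl|S_i - \mathbb{E}[S_i]\bigr| \approx \sqrt{\tfrac{2}{\pi}}\,\sqrt{\tfrac{n+1}{12}}\,\sqrt{n_i(n-n_i)},
\]
so that $\mathbb{E}[H_{abs}]$ equals, up to a factor independent of the allocation, the quantity $\sum_{i=1}^g \sqrt{n_i(n-n_i)}$.

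Finally I would maximize this sum subject to $\sum_i n_i = n$. The map $\phi(x) = \sqrt{x(n-x)}$ is strictly concave on $[0,n]$ (its second derivative is negative), so by Jensen's inequality $\tfrac{1}{g}\sum_{i=1}^g \phi(n_i) \le \phi\bigl(\tfrac{1}{g}\sum_i n_i\bigr) = \phi(n/g)$, i.e.\ $\sum_i \phi(n_i) \le g\,\phi(n/g)$, with equality exactly when all $n_i$ are equal. This shows the expected value is maximized at $n_i = n/g$, as claimed. One could equivalently run a Lagrange-multiplier argument, the concavity guaranteeing that the symmetric critical point is the global maximum.

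The step I expect to be the main obstacle---and the one that keeps this result from being fully exact---is the normal approximation used for the mean absolute deviation. The exact law of $S_i$ arises from sampling without replacement, its MAD has no clean closed form, and it is not exactly proportional to its standard deviation; consequently the reduction to $\sum_i\sqrt{n_i(n-n_i)}$ holds only approximately, and most accurately when the group sizes are moderately large. This is precisely why the claim is phrased for the expected value (rather than the critical value itself) and why the paper supplements the argument with experimental evidence that equal group sizes yield the worst-case reference distribution.
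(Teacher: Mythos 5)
Your proposal is correct and takes essentially the same route as the paper's proof: both replace the group rank statistic by its normal approximation under $H_0$, use the half-normal mean $\sqrt{2/\pi}$ to turn the expected absolute deviation into a constant multiple of the standard deviation, and conclude that the resulting concave, symmetric function of $(n_1,\dots,n_g)$ is maximized at equal group sizes. The only differences are refinements on your side --- you work with rank sums and keep the finite-population correction, so you maximize $\sum_i \sqrt{n_i(n-n_i)}$ where the paper (using $\operatorname{Var}(\bar{R}_i)=\tfrac{n^2-1}{12 n_i}$, without that correction) maximizes $\sum_i \sqrt{n_i}$, and you make the Jensen/concavity step and the approximate nature of the normality assumption explicit --- but both reductions yield the maximum at $n_i = n/g$.
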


\begin{proof}
Let $\bar{R}_i$ be the random variable of which $\bar{r}_i$ is an instance. The distribution of $\bar{R}_i$ is known to be $\bar{R}_i \sim \mathcal{N} \left(\mu = \frac{n+1}{2}, \sigma^2 = \frac{n^2 - 1}{12n_i} \right)$. Let $Z_i$ be a standard normal random variable: $Z_i \sim \mathcal{N} \left( \mu = 0, \sigma^2 = 1 \right)$. We can relate $\bar{R}_i$ and $Z_i$ by $$\bar{R}_i = \sqrt{\frac{n^2 - 1}{12n_i}}Z_i + \frac{n+1}{2}$$.

Additionally, it is known that a half normal distribution from a standard normal, $|Z_i|$, has mean $\sqrt{\frac{2}{\pi}}$ and variance $1-\frac{2}{\pi}$.

We will use the formulation of $H_{abs}$ when $n$ is odd. The algebra is identical for an even $n$.
\begin{align*}
    H_{abs} &= \frac{4}{n+1}\sum_{i=1}^{g}n_i \left|\bar{R}_i-\frac{n+1}{2} \right| \\
    &= \frac{4}{n+1}\sum_{i=1}^{g}n_i \left|\sqrt{\frac{n^2 - 1}{12n_i}}Z_i + \frac{n+1}{2}-\frac{n+1}{2} \right| \\
    &= \sqrt{\frac{4(n-1)}{3(n+1)}}\sum_{i=1}^{g}\sqrt{n_i} \left|Z_i \right|
\end{align*}

Let us consider the expected value of $H_{abs}$:
\begin{align*}
    \mathbf{E}(H_{abs}) &= \mathbf{E} \left(\sqrt{\frac{4(n-1)}{3(n+1)}}\sum_{i=1}^{g}\sqrt{n_i} \left|Z_i \right| \right) \\
    &= \sqrt{\frac{4(n-1)}{3(n+1)}}\sum_{i=1}^{g}  \sqrt{n_i} \; \mathbf{E}( | Z_i | ) \\
    &= \sqrt{\frac{4(n-1)}{3(n+1)}}\sum_{i=1}^{g}  \sqrt{n_i} \; \mathbf{E}( | Z_i | ) \\
    &= \sqrt{\frac{4(n-1)}{3(n+1)}}\sum_{i=1}^{g}  \sqrt{n_i} \cdot \sqrt{\frac{2}{\pi}} \\
    &= \sqrt{\frac{8(n-1)}{3\pi(n+1)}}\sum_{i=1}^{g}  \sqrt{n_i}
\end{align*}

The expected value of $H_{abs}$ is maximized when $\sum_{i=1}^{g}  \sqrt{n_i}$ is maximized. This will occur when the sizes of each of the groups are $n/g$.
\end{proof}

\subsection{Proof of Theorem \ref{thm:mwsensi}}\label{sec:mw_sens}

Recall that $$\MW = U = \min \{ U_1, U_2 \},$$ where $$U_1 = r_1 - \frac{n_1(n_1 + 1)}{2} \qquad \textup{and} \qquad U_2 = r_2 - \frac{n_2(n_2 + 1)}{2}. $$

We now prove the following:

\begin{theorem*} [Sensitivity of \MW] 
The local sensitivity is given by $LS_{\MW}(\data) = \max \{ n_1, n_2 \}$, where $n_1$ and $n_2$ are the sizes of the two groups in $\data$.
\end{theorem*}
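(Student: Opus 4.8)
The plan is to establish the stated value as a tight bound on the local sensitivity: first show $LS_{\MW}(\data) \le \max\{n_1,n_2\}$ over all neighbors of $\data$, then exhibit a neighbor attaining this change. The workhorse is the counting interpretation $U_1 = |\{(a,b) \in G_1 \times G_2 : x_a > x_b\}|$, which for distinct values equals the rank-sum form $R_1 - n_1(n_1+1)/2$, together with the identity $U_1 + U_2 = n_1 n_2$. Since $\MW = U = \min\{U_1, U_2\}$, I would first record the elementary fact that $|\min\{U_1', U_2'\} - \min\{U_1, U_2\}| \le \max\{|U_1' - U_1|,\, |U_2' - U_2|\}$, which reduces the whole problem to separately bounding the change in $U_1$ and in $U_2$. (This is exactly the reduction sketched in the commented-out lemma: only the case where the same statistic is the minimum in both databases can matter.)

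Next I would split on whether the altered row keeps its group. If the row stays in its group --- say $G_1$ --- then $n_1, n_2$ are unchanged, so $U_1 + U_2 = n_1 n_2$ is fixed and only the $n_2$ comparisons involving the moved point can flip; hence $|U_1' - U_1| \le n_2$ and $|U_2' - U_2| = |U_1' - U_1| \le n_2$, giving $|U' - U| \le n_2$. The symmetric subcase (row in $G_2$) gives $\le n_1$, so this case is bounded by $\max\{n_1, n_2\}$.

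The harder case, and the main obstacle, is when the row changes group, say from $G_1$ to $G_2$, so that $(n_1, n_2) \to (n_1 - 1, n_2 + 1)$ and the convenient identity becomes $U_1' + U_2' = (n_1 - 1)(n_2 + 1)$. Decomposing $U_1$ by isolating the moving point's contribution, I would write $U_1 = U_1^{\mathrm{rest}} + w$ and $U_1' = U_1^{\mathrm{rest}} + v$, where $w \in [0, n_2]$ counts the $G_2$ points the moving point beat and $v \in [0, n_1 - 1]$ counts the surviving $G_1$ points above its new value; thus $U_1' - U_1 = v - w \in [-n_2,\, n_1 - 1]$. Feeding this into $U_2' - U_2 = (n_1 - n_2 - 1) - (v - w)$ shows $U_2' - U_2$ lies in the same interval, so both changes are bounded by $\max\{n_1 - 1, n_2\} \le \max\{n_1, n_2\}$, and the min-inequality finishes the case (the $G_2 \to G_1$ direction being symmetric). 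The difficulty here is precisely that changing the group sizes destroys the clean $U_1 + U_2 = n_1 n_2$ bookkeeping, so one cannot read $U_2$'s change off from $U_1$'s and must re-derive it under the new product.

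Finally, for tightness I would take the configuration with every $G_1$ value below every $G_2$ value, so $U_1 = 0$, $U_2 = n_1 n_2$, and $U = 0$; moving a single $G_2$ point (keeping its group) below all of $G_1$ sends $U_1 \to n_1$ while $U_2 \to n_1(n_2 - 1) \ge n_1$, so $U \to n_1 = \max\{n_1, n_2\}$ when $n_1 \ge n_2$. This witnesses that the bound is attained in the worst case, which is what justifies its use as the local-sensitivity value in \privMW.
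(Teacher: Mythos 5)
Your proposal is correct, and it takes a genuinely different route from the paper's proof. The paper works entirely with the rank-sum form $U_i = R_i - n_i(n_i+1)/2$ and tracks, region by region (including blocks of tied values), how every rank shifts when the altered row moves, summing five separate effects and then doing a sign-based case analysis; this is laborious but handles ties natively. You instead use the pair-counting interpretation $U_1 = |\{(a,b)\in G_1\times G_2 : x_a > x_b\}|$ together with the identity $U_1 + U_2 = n_1 n_2$, plus the elementary inequality $|\min\{U_1',U_2'\} - \min\{U_1,U_2\}| \le \max\{|U_1'-U_1|,\,|U_2'-U_2|\}$, which the paper's compiled proof merely asserts ("we note that this bound also applies to their minimum"). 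This makes the same-group case immediate (only the $n_2$ pairs involving the changed row can flip, and the fixed sum forces $|U_2'-U_2| = |U_1'-U_1|$), and it reduces the group-switching case to the interval computation $U_1'-U_1 = v-w \in [-n_2,\,n_1-1]$ and $U_2'-U_2 = (n_1-n_2-1)-(v-w) \in [-n_2,\,n_1-1]$, both of which check out. What your approach buys is brevity and transparency; what it gives up is ties: $\privMW$, unlike $\privKW$, does not randomly break ties before computing the statistic, so the sensitivity bound must cover tied data, and you restrict to distinct values. The patch is easy---with average ranks, $U_1$ equals the pair count with tied pairs contributing $1/2$, the identity $U_1+U_2=n_1n_2$ still holds, and each pair involving the altered row still changes by at most $1$---but it should be said explicitly. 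Finally, you go beyond the paper by exhibiting a neighbor attaining the bound (the paper proves only the upper bound, which is all the privacy argument needs); note, however, that your witness is one particular database, so it does not establish the stated equality for every $\data$, and indeed exact equality can fail (e.g., $n_1=n_2=1$, where the minimum can change by at most $1/2$).
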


\begin{proof}  

We bound the sensitivity of $U_1$ and $U_2$, and note that this bound also applies to their minimum.  We consider two cases, one where the row that differs between $\data$ and $\data'$ retains its group and where it changes its group.

\textbf{``Changing" row retains group}

Consider two databases $\data$ and $\data'$ that differ only in row $j$ and where row $j$ is in the same group in both $\data$ and $\data'$. Without loss of generality, suppose $j$ is in group $1$ and that $v_{j}' > v_{j}$ (where $v_{j}$ and $v_{j}'$ are the values of row $j$ in $\data$ and $\data'$ respectively). Suppose that the row was originally (in $\data$) part of a tie in ranks with $b$ other rows and it becomes part of a tie in ranks with $d$ other rows (in $\data'$). Let the number of rows between these two groups of ties be $c$ and let the number of rows before the first tie be $a$ and the number of rows after the second tie be $e$. This setup is shown in Figure \ref{fig:ls}.

\begin{figure} [!htb]
    \centering
    \includegraphics[width=\linewidth]{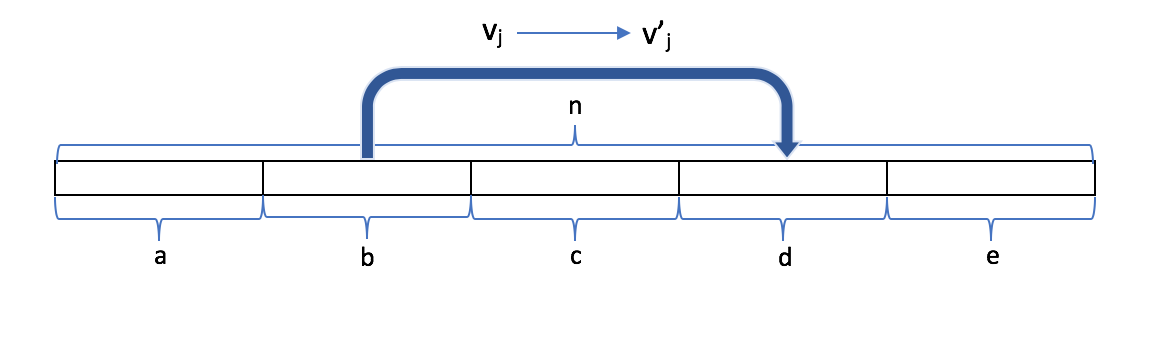}
    \caption{ $\data$ and $\data'$ differ only in that $v_j$ is ``changed" such that its rank is ``moved" from region $b$ to region $d$.} \label{fig:ls}
\end{figure}

The rank of $v_j$ in $\data$ is $\frac{a+1+a+b}{2}$ and the rank of $v_j'$ in $\data'$ is $\frac{a+b+c+a+b+c+d}{2}$. Note that when an element is removed from a tie, the rank of every element in the tie will change by $\frac{1}{2}$.

Let $b_1$ and $b_2$ be the number of the $b$ elements (i.e., those tied with row $j$) in database $\data$ that are in groups 1 and 2, respectively. Let $c_1$, $c_2$, $d_1$, and $d_2$ be similiarly defined for the regions of $c$ and $d$ elements.

We find that the sensitivity, $LS_{U_1}$, is the sum of five effects. The first is the removal of the row in $\data$, which decreases the sum by $\frac{a+1+a+b}{2}$. The second is the addition of the row in $\data'$, which increases the sum by $\frac{a+b+c+a+b+c+d}{2}$. The third is the effect on the rank of the $b-1$ tied elements (excluding row $j$ itself), which causes the sum to decrease by $\frac{b_1-1}{2}$. The fourth is the effect on the $d$ tied elements, which causes the sum to decrease by $\frac{d_1}{2}$. The fifth is decrease in rank for the $c$ elements that are now below instead of above row $j$, which causes the sum to decrease by $c_1$.

\begin{align*}
 LS_{U_1}  &= \bigg|-\frac{a+1+a+b}{2}+\frac{a+b+c+a+b+c+d}{2}\\
 & \qquad-\frac{b_1-1}{2}-\frac{d_1}{2}-c_1 \bigg|\\
 &=\frac{(b-b_1)+(2c-2c_1)+(d-d_1)}{2}\\
 &= \frac{b_2}{2}+c_2+\frac{d_2}{2}\\
 &\leq n_2\\
\end{align*}

Similarly, considering how ranks of group $2$ elements get changed, we get 
\begin{align*}
 LS_{U_2}  &= \left|-\frac{b_2}{2} -c_2-\frac{d_2}{2}\right|\\
 &\leq n_2
\end{align*}

The proof is identical for the movement of a row in group 2 and yields a bound of $n_1$.\\

\textbf{"Changing" row switches groups}

Now let us consider $\data$ and $\data'$ such that one row is moved to a new group.  Without loss of generality, we assume the row moves from group $1$ to group $2$. We also assume as before that $v_j' > v_j$ and we define $a$, $b$, $c$, $d$ and $e$ as before.

For $U_1$, the removal of the element itself from group $1$ first causes the rank sum to decrease by $\frac{a+1+a+b}{2}$. The decrease in $n_1$ causes the second term of $U_1$ change from $\frac{n_1(n_1+1)}{2}$ to $\frac{(n_1-1)n_1}{2}$, which leads to the contribution to $LS_{U_1}$ to be $-\frac{(n_1-1)n_1}{2}+\frac{n_1(n_1+1)}{2}$. The rest of the effects on $U_1$ is the same as in the previous part. Thus,

\begin{align*}
 LS_{U_1}  &= \bigg|-\frac{a+1+a+b}{2}-\frac{(n_1-1)n_1}{2}+\frac{n_1(n_1+1)}{2} \\
 &\qquad\qquad\qquad -\frac{b_1-1}{2}-\frac{d_1}{2}-c_1 \bigg|\\
 &=\left|-\frac{2a+b+1}{2}+n_1-\frac{b_1-1}{2}-\frac{d_1}{2}-c_1\right|\\
\end{align*}
 
If the expression inside the absolute value is positive, it follows that 
 \begin{align*} 
 LS_{U_1}  &= n_1-\frac{2a+b+1}{2}-\frac{b_1-1}{2}-\frac{d_1}{2}-c_1\\ &\leq n_1 
\end{align*}

If the expression inside the absolute value is negative, it follows that
\begin{align*}
LS_{U_1} &=\frac{2a+b+1}{2}+\frac{b_1-1}{2}+\frac{d_1}{2}+c_1 - n_1 \\
&\leq \frac{2a+b+b+d+2c}{2}-n_1\\
&\leq a+b+c+d-n_1\\
&\leq n_1+n_2-n_1\\
&= n_2 
\end{align*}

Then similarly for $U_2$,
\begin{align*}
 LS_{U_2} &= \bigg|\frac{a+b+c+a+b+c+d}{2}-\frac{(n_2+2)n_2}{2}+\frac{n_2(n_2+1)}{2}\\
     & \qquad-\frac{b_2-1}{2}-\frac{d_2}{2}-c_2\bigg|\\
 &=\left|\frac{2a+2b+2c+d}{2}-n_2-1-\frac{b_2-1}{2}-\frac{d_2}{2}-c_2\right|
 \end{align*}
 
If the expression inside the absolute value is positive, it follows that 
\begin{align*}
   LS_{U_2} &= \frac{2a+2b+2c+d}{2}-n_2-1-\frac{b_2-1}{2}-\frac{d_2}{2}-d_2\\
   &\leq \frac{2(n_1+n_2)}{2}-n_2\\
   &\leq n_1 
\end{align*} 

If the expression inside the absolute value is positive, it follows that 
\begin{align*}
LS_{U_2} &= n_2+1+\frac{b_2-1}{2}+\frac{d_2}{2}+c_2 -\frac{2a+2b+2c+d}{2}\\
&\leq n_2+1+\frac{b-1}{2}+\frac{d}{2}+c -\frac{2a+2b+2c+d}{2} \\
&=  n_2+1-\frac{2a+b+c+1}{2} \\
&\leq n_2+1-1\\
&\leq n_2
\end{align*}

The proof is identical for the movement of a row in group 2 and yields a bound of the opposite row's group size in each case. The proof is symmetric for the movement to a lower row and yields the same bound as the movement to a higher row case.

Thus, the overall sensitivity is bounded by

\begin{equation*}
    LS_{U} = \max \{ n_1, n_2 \}
\end{equation*}

\end{proof}

\subsection{Privacy of $\privMW$}\label{sec:mw_pf}

\begin{theorem} \label{thm:dpprrof}
$\privMW(\data)$ is $(\epsilon_{m} + \epsilon_U$, $\delta$)-differentially private.
\end{theorem}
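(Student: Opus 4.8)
The plan is to treat this as an instance of the private-upper-bound-on-local-sensitivity technique: the released noise scale $n-m^*$ is engineered to dominate the local sensitivity $LS_{\MW}(\data)=\max\{n_1,n_2\}=n-m$ (Theorem \ref{thm:mwsensi}) except with probability at most $\delta$, and that failure probability is absorbed into the additive term. First I would note that $m=\min\{n_1,n_2\}$ has global sensitivity $1$: altering a single row can only move it between the two groups, shifting each group size by at most one, so $\min\{n_1,n_2\}$ changes by at most one. Hence $\widetilde{m}=m+\lap(1/\epsilon_m)$ is $\epsilon_m$-differentially private by Theorem \ref{thm:lm}, and since $m^*$ is computed from $\widetilde{m}$ and the public constant $c$, it is a deterministic post-processing of the released $\widetilde{m}$.

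Next I would study the joint output density $f(\widetilde{m},\widetilde{U}\mid\data)=p_m(\widetilde{m}\mid\data)\,p_U(\widetilde{U}\mid\widetilde{m},\data)$ for neighboring $\data,\data'$ and bound the two factors separately. The first factor is at most $e^{\epsilon_m}$ by the sensitivity-$1$ argument above. For the second factor the crucial observation is that, once we condition on the value of $\widetilde{m}$, the noise scale $(n-m^*)/\epsilon_U$ is \emph{identical} under $\data$ and $\data'$, because $m^*$ is a function of $\widetilde{m}$ alone. The conditional ratio is therefore at most $\exp\!\big(\epsilon_U\,|\MW(\data)-\MW(\data')|/(n-m^*)\big)$, and since $|\MW(\data)-\MW(\data')|\le LS_{\MW}(\data)=n-m$, this is bounded by $e^{\epsilon_U}$ exactly when $n-m^*\ge n-m$, i.e.\ when $m^*\le m$.

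I would then quantify the failure event $\{m^*>m\}$. Because $m$ is a nonnegative integer, one checks that $m^*>m$ holds iff the Laplace draw $\widetilde{m}-m$ exceeds $c$; the upper tail of $\lap(1/\epsilon_m)$ gives $\Pr[\widetilde{m}-m>c]=\tfrac12 e^{-\epsilon_m c}$, and the calibration $c=-\ln(2\delta)/\epsilon_m$ makes this exactly $\delta$. Finally I would assemble the pieces with the standard good-event/bad-event decomposition: for any measurable $S$, split $\Pr[\privMW(\data)\in S]$ according to whether $\widetilde{m}$ lies in the good region $\mathcal{G}=\{\widetilde{m}:m^*\le m\}$ or its complement. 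On $\mathcal{G}$ the pointwise density ratio is at most $e^{\epsilon_m+\epsilon_U}$, so $\Pr[\privMW(\data)\in S,\ \widetilde{m}\in\mathcal{G}]\le e^{\epsilon_m+\epsilon_U}\Pr[\privMW(\data')\in S]$, while the complement contributes at most $\Pr[m^*>m]\le\delta$; adding these yields the claimed $(\epsilon_m+\epsilon_U,\delta)$-guarantee.

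The main obstacle is the conceptual one flagged in the paper: the noise scale of $\widetilde{U}$ depends on the data through $\widetilde{m}$, so a naive appeal to the Laplace mechanism or to sequential composition is invalid. The decomposition works only because that dependence is channeled entirely through the \emph{released} value $\widetilde{m}$, which turns $m^*$ into a shared post-processing constant once $\widetilde{m}$ is fixed, and because the tail calibration of $c$ pins the failure probability to exactly $\delta$. Care is also needed in the split to define the controlled region using $m$ from the numerator database $\data$, and to check that the integral over $\mathcal{G}$ is still upper-bounded by $\Pr[\privMW(\data')\in S]$.
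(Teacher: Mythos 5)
Your proposal is correct and follows essentially the same route as the paper's own proof: the same two-stage factorization of the joint density with $m^*$ treated as a deterministic function of the released $\widetilde{m}$, the same bad event $\{m^* > m\}$ calibrated to one-sided Laplace tail probability $\delta$ via $c = -\ln(2\delta)/\epsilon_m$, and the same good-event/bad-event decomposition giving the pointwise ratio $e^{\epsilon_m + \epsilon_U}$ on the good region. If anything, your write-up is slightly more explicit than the paper's (stating the sensitivity-$1$ bound on $m = \min\{n_1, n_2\}$ and the fact that the conditional noise scale is identical for $\data$ and $\data'$ once $\widetilde{m}$ is fixed), but the underlying argument is the same.
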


\begin{proof}

We first divide $\privMW$ into two steps.  First a function $f$ outputs $\widetilde{m}$.  Then, for a given value of $\tilde{m}$, a function $g$ outputs $\widetilde{U}$.  We let $g_w$ be the function $g$ when we have fixed $\widetilde{m} = w$.  We consider an arbitrary output set $S \in \mathbb{R}^2$ and let $B$ be the set of outputs where $n-m^*$ was not a correct bound on the local sensitivity of $U$.  That is, $B=\{(w,z)\mid w > m+c \}$.

First we show that output in $B$ (i.e., a bad bound for $m$) occurs only with probability $\delta$.

\begin{lemma} \label{lma:cdelta}
$\Pr[\privMW(\data) \in B] \leq \delta$ .
\end{lemma}

\begin{proof}
First, recall that by definition $\privMW(\data) \in B$ exactly when $f(\data) > m+c$.  All that remains is a simple calculation using the cumulative density function of the Laplace distribution.

Let $F(t)$ be the cumulative density function of the Laplace distribution. Then,
\begin{align*}
   \Pr\Big [(\mathcal{M}(\data),\mathcal{N}_{\mathcal{M}(\data)}(\data)) \in B] &= \Pr[\;|\widetilde{m}-m|>c]\\ &= \Pr\Big[\;|\widetilde{m}-m|>-\frac{\ln (2\delta)}{\epsilon_{m}}\Big] \\
    &=  \Pr\Big[\; \Big|\lap \Big( \frac{1}{\epsilon_{m}} \Big)\Big|>-\frac{\ln (2\delta)}{\epsilon_{m}}\Big]\\
    &= 1 - F \left(\frac{\ln(2\delta)}{\epsilon_m} \right) \\
    &= 1 - \left( 1 - \frac{1}{2} e^{\frac{\ln(2\delta)}{\epsilon_m} \cdot \epsilon_m} \right) \\
    &= \frac{1}{2} e^{\ln(2\delta)} \\
    &= \delta
\end{align*}
\end{proof}

As a result of Lemma \ref{lma:cdelta}, we can restrict ourselves to $S\setminus B$.  Conditioned on the output being in $S \setminus B$, we know that the noise added to $U$ is sufficient to maintain privacy.  Below we write $\Pr[f(\data) = w]$ to denote the probability density function of $f(\data$) at $w$ (and analogously for $g$).

\begin{align}
    \Pr[\privMW(\data) \in S] &\leq \Pr[\privMW(\data) \in S \setminus B] + \Pr[\privMW(\data) \in B] \nonumber\\ 
    &= \Pr[\privMW(\data) \in S \setminus B] + \delta \nonumber \\
    &= \int_{S \setminus B} \Pr[f(\data) = w] \cdot \Pr[g_w(\data) = z] \, dw\, dz + \delta \nonumber \\
    &\leq \int_{S \setminus B} e^{\epsilon_m}\Pr[f(\data') = w] \nonumber\\
    & \qquad \cdot e^{\epsilon_U}\Pr[g_w(\data') = z] \, dw\, dz + \delta  \label{eq:lapsub}\\
    &= e^{\epsilon_{m} + \epsilon_U} \Pr[\privMW(\data') \in S \setminus B] + \delta \nonumber \\
    &\leq e^{\epsilon_{m} + \epsilon_U} \Pr[\privMW(\data') \in S] + \delta \nonumber
\end{align}

Equation \ref{eq:lapsub} isn't technically an application of Theorem \ref{thm:lm}, but it follows from an almost identical argument.  (The values of $f$ and $g_w$ before the addition of Laplace noise differ by a bounded amount and the noise added is scaled so that the bound on the ratio follows immediately from the formula for the Laplace distribution.)

\end{proof}

\subsection{Normal approximation for \privMWP}\label{sec:mw_t1}

Here we prove the acceptability of a normal approximation that increases the efficiency of computing the reference distribution for \privMWP.

\begin{theorem}\label{thm:mw_t1}
Using ${\sf Normal}(\frac{m^*(n-m^*)}{2},\frac{m^*(n-m^*)(n+1)}{12}) + \lap \Big( \frac{n-m^*}{\epsilon_{U}} \Big)$ to generate samples $U_k$ from the reference distribution will only make the type I error smaller compared to a full simulation.
\end{theorem}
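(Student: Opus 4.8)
The plan is to reduce the statement to a comparison of critical values. Since $\privMWP$ rejects exactly when the observed statistic $\widetilde U$ falls below the lower $\alpha$-quantile $t^*$ of whatever reference distribution is used, and since the type I error $\Pr[\widetilde U < t^* \mid H_0]$ is monotone increasing in $t^*$, it suffices to show that the critical value $t^*_{\mathrm{norm}}$ obtained from the normal reference is no larger than the value $t^*_{\mathrm{sim}}$ obtained from the full simulation. Equivalently, writing $F$ for a CDF, I would show that the normal reference places at least as much mass in its lower tail as the full simulation, i.e. $F_{\mathrm{norm}}(t) \ge F_{\mathrm{sim}}(t)$ over the relevant range, which gives $t^*_{\mathrm{norm}} \le t^*_{\mathrm{sim}}$ and hence the desired ordering of type I errors.

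The decisive ingredient is that the normal reference is parameterized by the reduced group-size estimate $m^*$, whereas the full simulation generates databases at the larger size $\widetilde m$. By construction $m^* = \max(\lceil \widetilde m - c\rceil,0) \le \widetilde m$, and (exactly as in Lemma~\ref{lma:cdelta}) with probability at least $1-\delta$ one has $m^* < m$, the true smaller group size. Both references are built from the null distribution of the same rank-sum statistic, differing only in the value of the group-size parameter and in the Laplace scale: on groups $(k,n-k)$ the statistic is centered at $\frac{k(n-k)}{2}$ with variance $\frac{k(n-k)(n+1)}{12}$, both of which increase in $k$ on $[0,n/2]$. First I would fix the noise and compare the noiseless references, using the representation of the minimized statistic as $\frac{k(n-k)}{2} - |W_k|$ with $W_k \sim \mathcal N(0,\sigma_k^2)$ and $\sigma_k = \sqrt{k(n-k)(n+1)/12}$, so that its lower $\alpha$-quantile is $\frac{k(n-k)}{2} - \sigma_k\,\Phi^{-1}(1-\tfrac{\alpha}{2})$.

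The comparison then reduces to verifying $\frac{\widetilde m(n-\widetilde m)}{2} - \frac{m^*(n-m^*)}{2} \ge (\sigma_{\widetilde m} - \sigma_{m^*})\,\Phi^{-1}(1-\tfrac{\alpha}{2})$. Writing the left side as $\frac{(\widetilde m - m^*)(n - \widetilde m - m^*)}{2}$ and factoring $\sigma_k = \sqrt{(n+1)/12}\,\sqrt{k(n-k)}$, this is equivalent to the clean inequality $\sqrt{\widetilde m(n-\widetilde m)} + \sqrt{m^*(n-m^*)} \ge 2\sqrt{(n+1)/12}\;\Phi^{-1}(1-\tfrac{\alpha}{2})$, which holds for all sample sizes of interest. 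Finally, convolving both references with their Laplace noise is a monotone operation, so I would argue that the ordering of lower-tail quantiles is preserved, giving $t^*_{\mathrm{norm}} \le t^*_{\mathrm{sim}}$ and closing the reduction from the first paragraph.

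The hard part is that the two effects of shrinking $k$ from $\widetilde m$ to $m^*$ pull in opposite directions: the location $\frac{k(n-k)}{2}$ drops (making the normal reference more conservative) while the scale $\sigma_k$ also drops (which on its own thins the lower tail and would make it less conservative). The argument survives only because the location moves quadratically in $k$ whereas the scale moves like $\sqrt{k(n-k)}$, and quantifying this gap directly — rather than through an almost-sure coupling, which does not exist since the Gaussian tails cross — is the crux. A secondary obstacle is the Laplace-scale mismatch: inside the full simulation each call to $\privMW$ re-estimates a fresh $m^*_{\mathrm{sim}}$, so its noise scale $\frac{n-m^*_{\mathrm{sim}}}{\epsilon_U}$ is random around the normal reference's fixed $\frac{n-m^*}{\epsilon_U}$, and I would need to confirm that this added variability cannot reverse the quantile ordering, checking any residual near-balanced cases numerically as is done elsewhere in the paper.
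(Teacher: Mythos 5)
Your proposal follows the same skeleton as the paper's own proof: reduce the claim to an ordering of lower-tail critical values, use Lemma~\ref{lma:cdelta} to place $m^*$ below the competing group size with probability $1-\delta$, and exploit the monotonicity of the mean $\frac{k(n-k)}{2}$ and variance $\frac{k(n-k)(n+1)}{12}$ in $k$ on $[0,n/2]$. Where you genuinely differ is at exactly the step the paper leaves unproved: the paper asserts that the $m^*$-reference has ``smaller expected value and smaller variance, and therefore it has smaller critical values,'' but for a lower-tail quantile of the form $\mu - z\sigma$ the two reductions pull in opposite directions, just as you say in your ``hard part'' paragraph. Your reduction to the inequality $\sqrt{\widetilde m(n-\widetilde m)}+\sqrt{m^*(n-m^*)}\ge 2\sqrt{(n+1)/12}\,\Phi^{-1}(1-\tfrac{\alpha}{2})$ supplies the justification the paper omits, so on the crux your argument is strictly more careful than the published one. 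Two further points of comparison. First, the paper's proof has a part you skip---that assuming no ties only inflates the reference variance---but that part is needed only because the paper really compares the normal reference against the \emph{true} null distribution of the statistic (a validity claim, type I error at most $\alpha$); under your framing, where the two reference distributions are compared against the same observed statistic, ties cancel out of the comparison, and your reading is the more faithful one to the literal phrase ``compared to a full simulation.'' Second, the residual gaps you flag are real but are shared, silently, by the paper: it too passes from the noiseless normals to the ``Normal Laplace'' distributions without verifying that convolution (with a noise scale that, inside the simulation, is itself random because each call to \privMW re-estimates $m^*$) preserves the single-level quantile ordering, and it too never confronts degenerate cases such as $m^*=0$, where your clean inequality can fail and only the dominant Laplace noise saves conservativeness. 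Neither you nor the paper squarely addresses that \MW outputs $\min\{U_1,U_2\}$, so the simulated reference is a folded statistic rather than a symmetric normal; your representation $\frac{k(n-k)}{2}-|W_k|$ is in fact closer to what the simulation produces than the paper's own plain-normal treatment. In short: same approach, with the paper's central assertion actually proved rather than asserted.
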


\begin{proof}
There are two parts that need to be proved in order to prove the theorem, and we consider them separately.
 
Initially, we need to show that it is acceptable for the reference distribution to be computed based on the assumption that there are no ties in the database.

When $n$ becomes very large, $U$ approximately follows a normal distribution with $\mu_U=\frac{n_1 n_2}{2}$ and $\sigma_U=\sqrt{\frac{n_1 n_2}{12}\Big((n+1)-\sum_{i=1}^{k}\frac{t_i^3-t_i}{n(n-1)}\Big)}$, where $n_1$ and $n_2$ are the group sizes, $k$ is the number of distinct ranks that are tied, and $t_i$ is the number of rows sharing rank $i$ \cite{MW}.

However, in the private setting, both $k$ and $t_i$ are private information and cannot be used to generate the null distribution. Since the standard deviation when there are no ties, $\sigma'_U=\sqrt{\frac{n_1 n_2}{12}(n+1)}$, is greater than $\sigma_U$, $$\Pr[Z<z'^*]<\Pr[Z<z^*]=\Pr[Z<z'^*],$$ where $Z,\ Z'$ are normal random variables when there are ties and no ties, respectively, $z^*, \ z'^*$ are critical values for the distributions of $Z$ and $Z'$, respectively.

Thus, according to the relation above, we can use the normal approximation as if there were no ties in the null distribution. 

The second part is that, when we generate the null distribution, we need $m$ to be used for the normal distribution. However, in the differential privacy setting, $m$ indicates one of the group sizes and is private information, so it cannot be used directly to generate the private null distribution. Instead, we use $m^*$ to replace $m$, but we need to ensure that this substitution does not increase the type I error rate.

By previous discussion, we know that with $1-\delta$ probability, $m^*\leq m$. Then compare  Norm$(\frac{m^*(n-m^*)}{2},\frac{m^*(n-m^*)(n+1)}{12})$ and \\Norm$(\frac{m(n-m)}{2},\frac{m(n-m)(n+1)}{12})$.  As $m^*(n-m^*) < m(n-m)$, \\Norm$(\frac{m^*(n-m^*)}{2},\frac{m^*(n-m^*)(n+1)}{12})$ has smaller expected value and smaller variance, and therefore it has smaller critical values. 

Thus, $$\Pr[Z<z^{**}]<\Pr[Z<z^{**}]=\Pr[Z^*<z^{**}]=\alpha,$$ where $Z,\ Z^*$ are random variables of the Normal Laplace distribution generated by $m$ and $m^*$, respectively, and $z^{**}$ is the critical value for the distribution $Z^*$, respectively.
 
Therefore, it is acceptable to use $m^*$ to generate the reference distribution (with $1-\delta$ probability).

In conclusion, we can generate the reference distribution using $${\sf Normal}(\frac{m^*(n-m^*)}{2},\frac{m^*(n-m^*)(n+1)}{12}) + \lap \Big( \frac{n-m^*}{\epsilon_{U}} \Big).$$ Doing so causes the type I error rate to decrease, which is acceptable, even if it means a decrease in power.

\end{proof}

\subsection{Proof of Theorem \ref{thm:wcalg}}\label{sec:wc_pf}

\begin{theorem*} 
Algorithm \privWP  is $\epsilon$-differentially private. 
\end{theorem*}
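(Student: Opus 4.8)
The plan is to recognize that Algorithm \privWP is simply the Laplace mechanism (Definition \ref{def:lm}) applied to the statistic computed by \WP: it releases $w + \lap(2n/\epsilon)$. By Theorem \ref{thm:lm}, the entire claim therefore reduces to showing that the global sensitivity of \WP is at most $2n$, matching the noise scale. First I would fix neighboring databases \data and $\data'$ differing in a single row $j$ and write the Pratt statistic as $w = \sum_i s_i r_i$, where each $s_i \in \{-1,0,1\}$ and $r_i$ is the (possibly averaged) magnitude rank. The goal is then to bound $|w' - w|$.

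Next I would decompose the change into the contribution of the altered row itself and the contribution of the rank shifts it induces on the remaining rows. Changing row $j$ moves its magnitude from ordering position $p$ to position $q$; taking, without loss of generality, $p \le q$, the $q-p$ rows whose magnitudes lie strictly between the old and new values of $d_j$ each have their rank decremented by one, while row $j$ itself moves from rank $p$ to rank $q$. Writing $M$ for this set of shifted rows, this yields $w' - w = s_j' q - s_j p - \sum_{i \in M} s_i$. Bounding each piece by its magnitude, namely $|s_j' q| \le q$, $|s_j p| \le p$, and $\bigl|\sum_{i \in M} s_i\bigr| \le |M| = q - p$ since every sign has absolute value at most $1$, the three bounds telescope to $q + p + (q-p) = 2q \le 2n$. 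The case $q < p$ follows by symmetry, and the case where only the sign of row $j$ changes is immediate, since then $|w' - w| = |(s_j' - s_j) r_j| \le 2n$.

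The main obstacle will be handling ties in magnitude, where the averaged-rank convention makes ranks shift by half-integer amounts rather than by one, so the clean decomposition above must be replaced by a careful accounting of how the tied blocks are split, crossed, and merged when row $j$ moves. This is essentially the same bookkeeping carried out in the local-sensitivity proof of \MW in Appendix \ref{sec:mw_sens}, with separate regions for the rows tied before, between, and after the move (and noting that the zero-magnitude rows contribute nothing since their $s_i = 0$). I expect that this more careful accounting still yields the bound $2n$, and that the bound is tight: flipping the sign of the top-ranked row while keeping it top-ranked changes $w$ by exactly $2n$. Once the global sensitivity of \WP is shown to be at most $2n$, the conclusion follows immediately from Theorem \ref{thm:lm}.
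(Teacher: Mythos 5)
Your proposal is correct and follows essentially the same route as the paper: reduce the claim to showing the global sensitivity of \WP is at most $2n$ and then invoke the Laplace mechanism (Theorem \ref{thm:lm}), bounding the sensitivity by splitting the change into the moved row's own contribution plus the induced rank shifts of the rows it passes over. The tie bookkeeping you defer is exactly what the paper's proof in Appendix \ref{sec:wc_pf} carries out by partitioning the rows into regions $A_1,\dots,A_5$ (below, tied with the old value, strictly between, tied with the new value, above), and that accounting confirms your expected bound of $2n$.
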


\begin{proof}

It is sufficient to show that the sensitivity of \WP is bounded by $2n$.  The privacy of the algorithm then follows directly from Theorem \ref{thm:lm}.

\begin{figure} [!htb]
    \centering
    \includegraphics[width=\linewidth]{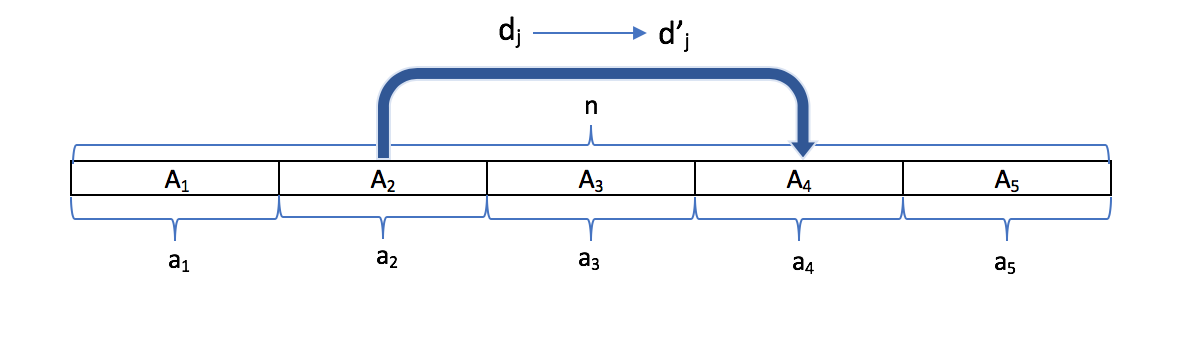}
    \caption{ $\data$ and $\data\mathbf{'}$ differ only in that $d_j$ is "moved" from region $A_2$ to region $A_4$.} \label{fig:Ggs}
\end{figure}

Let $\data$ and $\data\mathbf{'}$ be neighbouring databases. Let row $j$ be the row that differs between $\data$ and $\data\mathbf{'}$. Let $d_j$ and $d'_j$ indicate the difference between the data values in row $j$ in $\data$ and $\data\mathbf{'}$, respectively. Similarly, let $r_j$ and $r'_j$ indicate the rank of row $j$ in $\data$ and $\data\mathbf{'}$.  We assume without loss of generality that $d'_j > d_j$.

We divide the rows of $\data$ (or $\data\mathbf{'}$) into five regions, $A_1$ through $A_5$.  $A_1$ is the set of rows $i$ with $d_i < d_j$.  $A_2$ is the set with $d_i=d_j$.  $A_3$ has $d_i$ values between $d_j$ and $d'_j$.  $A_4$ has $d_i$ values equal to $d'_j$.  $A_5$ has values greater than $d'_j$ (though we don't need to use $A_5$ in the proof).  For convenience, we use $a_i$ to represent the size of the set $A_i$ in $\data$.

As shown in Figure \ref{fig:Ggs}, $\data$ and $\data\mathbf{'}$ differ only in that $d_j$ is ``moved'' from region $A_2$ to region $A_4$. Let $w$ and $w'$ be test statistics produced from databases $\data$ and $\data\mathbf{'}$, respectively.

Recall that $w = \sum_i s_i r_i$.  Let $w_i = s_i r_i$ be the contribution of row $i$ to this sum.  We then want to bound $\Delta w= |w-w'|$. We similarly set $\Delta w_i$ to be $|s_i r_i - s'_i r'_i|$.  Note that for $i$ in $A_1$ or $A_5$ we have $\Delta w_i = 0$.

Next, we consider $\Delta w_j$ the change in the term corresponding to the row where the data changes. As row $j$ is tied with all other rows in region $A_2$, $r_j = \frac{(a_1+1)+(a_1+a_2)}{2} = \frac{2a_1+a_2+1}{2}$ and \\$r'_j = \frac{(a_1+a_2+a_3)+(a_1+a_2+a_3+a_4)}{2} = \frac{2a_1+2a_2+2a_3+a_4}{2}$. In the worst case, $s_j$ changes when the move occurs, so 
\begin{align*}
\Delta w_j < r_j + r'_j &= \frac{2a_1+a_2+1}{2}+\frac{2a_1+2a_2+2a_3+a_4}{2} \\
&= \frac{4a_1+3a_2+2a_3+a_4+1}{2}.
\end{align*}

Now we consider $\Delta w_i$ where $i \in A_2$ (excluding row $j$). As there is one fewer row tied, each of their ranks (excluding row $j$) is changed from $\frac{2a_1+a_2+1}{2}$ to $\frac{2a_1+a_2}{2}$, which means that each row is changed by $\frac{1}{2}$. As there are $a_2-1$ rows in this region, the contribution to $\Delta w$ from this set of rows is bounded by $\frac{a_2-1}{2}$.

Similarly, each of the row ranks (excluding row $j$) in region $A_4$ is changed from $\frac{2a_1+2a_2+2a_3+a_4+1}{2}$ to $\frac{2a_1+2a_2+2a_3+a_4}{2}$ as one more row is tied with rows in $A_4$. As there are $a_4$ rows, $\Delta w$ from the rows in region $A_4$ is bounded by $\frac{a_4}{2}$. 

Now we consider region $A_3$. As $d_j$ has moved from before this region to after this region, each of the row ranks is changed by $1$. Therefore the contribution to $\Delta w$ from the rows in region $A_3$ is bounded by $a_3$.

It follows that 
\begin{align*}
    \Delta w &= |w-w'|\\
    &\leq \frac{4a_1+3a_2+2a_3+a_4+1}{2} + \frac{a_2-1}{2} + \frac{a_4}{2} + a_3\\
    &= 2a_1+2a_2+2a_3+a_4\\
    &\leq 2(a_1+a_2+a_3+a_4)\\
    &\leq 2n
\end{align*}

Having bounded the sensitivity, the proof is complete.

\end{proof}

\subsection{Proof of Theorems \ref{thm:mean_sensitivity} and Theorem \ref{thm:s_squ_sensitivity}}\label{sec:t_test_pf}

\begin{theorem*} 
The sensitivity of $\bar{x}$ is $\frac{2}{n}$ .
\end{theorem*}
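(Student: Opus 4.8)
The plan is to work directly from the definition of global sensitivity (Definition \ref{def:gs}), exploiting the two features of the setup: neighboring databases differ in exactly one row, and every observation has been scaled to lie in $[-1,1]$. First I would fix neighboring databases $\data$ and $\data'$ that agree on all rows except one, say row $i$, whose observation changes from $x_i \in [-1,1]$ to $x_i' \in [-1,1]$. Writing $\bar{x} = \frac{1}{n}\sum_{j=1}^n x_j$ for the mean of $\data$ and $\bar{x}'$ for the mean of $\data'$, every term of the two sums cancels except the $i$th, so that
\begin{equation*}
|\bar{x} - \bar{x}'| = \frac{1}{n}\left| x_i - x_i' \right|.
\end{equation*}

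The next step is to bound the single remaining difference. Since both $x_i$ and $x_i'$ lie in $[-1,1]$, the range constraint gives $|x_i - x_i'| \leq 2$, and hence $|\bar{x} - \bar{x}'| \leq \frac{2}{n}$. To confirm that the sensitivity equals $\frac{2}{n}$ rather than being merely bounded by it, I would exhibit a neighboring pair attaining the bound: hold all rows but row $i$ fixed and take $x_i = 1$, $x_i' = -1$, so that $|x_i - x_i'| = 2$ exactly. Taking the maximum over all neighboring pairs then yields $GS_{\bar{x}} = \frac{2}{n}$.

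There is essentially no technical obstacle here; the entire content is the one-line cancellation followed by the range bound. The only points requiring care are conceptual rather than computational: first, that neighboring databases \emph{alter} a row rather than add or delete one, so that $n$ is identical for $\data$ and $\data'$ and may be pulled out of both means cleanly; and second, that the $[-1,1]$ scaling assumption stated just before the theorem is precisely what makes the maximum spread of a single coordinate equal to $2$. Both facts are supplied by the surrounding setup, so the result follows by direct calculation.
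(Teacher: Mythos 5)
Your proof is correct and follows essentially the same approach as the paper's: cancel the $n-1$ unchanged terms so only $\frac{1}{n}|x_i - x_i'|$ remains, then bound this by $\frac{2}{n}$ using the $[-1,1]$ scaling assumption. Your additional exhibition of a tight pair ($x_i = 1$, $x_i' = -1$) is a small refinement the paper omits (it only proves the upper bound), but the argument is otherwise identical.
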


\begin{proof}
Consider two neighboring databases, $D$ and $D'$, of size $n$. Without loss of generality suppose that the observation that differs is $x_n$. Then,
\begin{align*}
    \Delta \bar{x} &= \left | \bar{x}' - \bar{x} \right | \\
    &= \left | \frac{x_1 + \ldots + x_n'}{n} - \frac{x_1 + \ldots + x_n}{n} \right | \\
    &= \left | \frac{x_n' - x_n}{n} \right | \\
    &\leq \frac{2}{n} 
\end{align*}
\end{proof}

\begin{theorem*}
The sensitivity of $s^2$ is $\frac{5}{n-1}$.
\end{theorem*}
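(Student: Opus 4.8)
The plan is to work from the computational form of the sample variance, $(n-1)s^2 = \sum_{i=1}^n x_i^2 - \frac{1}{n}\bigl(\sum_{i=1}^n x_i\bigr)^2$, which isolates the data dependence as a sum of squares minus a squared sum. As in the proof of Theorem~\ref{thm:mean_sensitivity}, I would take neighboring databases $D$ and $D'$ that differ only in the last observation, write $x_n$ and $x_n'$ for the two values, and let $S = \sum_{i=1}^{n-1} x_i$ denote the shared sum of the remaining rows. Multiplying through by $n-1$, the unchanged squares $\sum_{i=1}^{n-1} x_i^2$ cancel, leaving
\begin{align*}
(n-1)(s'^2 - s^2) &= (x_n'^2 - x_n^2) - \frac{1}{n}\left[(S+x_n')^2 - (S+x_n)^2\right].
\end{align*}

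Next I would factor the bracketed difference of squares as $(S+x_n')^2 - (S+x_n)^2 = (x_n' - x_n)(x_n' + x_n + 2S)$, which cleanly separates the size of the jump $x_n' - x_n$ from the magnitude of the totals. Then I would bound each piece using the standing assumption that every observation lies in $[-1,1]$: the first term satisfies $|x_n'^2 - x_n^2| \le 1$; the jump satisfies $|x_n' - x_n| \le 2$; and since $S+x_n'$ and $S+x_n$ are each a sum of $n$ values from $[-1,1]$, the factor $x_n' + x_n + 2S = (S+x_n') + (S+x_n)$ has magnitude at most $2n$. Combining these via the triangle inequality gives
\begin{align*}
(n-1)\,|s'^2 - s^2| &\le 1 + \frac{1}{n}\cdot 2 \cdot 2n = 5,
\end{align*}
and dividing by $n-1$ yields $|s'^2 - s^2| \le \frac{5}{n-1}$, as claimed.

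The calculation is routine once the variance is in sum-of-squares form, so the only real subtlety is the bookkeeping in the middle step: one must use the difference-of-squares factorization rather than expanding naively, since that is exactly what pairs the $1/n$ from the squared-sum term against a factor of size $2n$, producing the clean constant $4$ rather than a quantity that grows with $n$. I would also remark that this bound is mildly conservative — the configuration maximizing the first term (one observation at $0$, its neighbor at $\pm 1$) cannot simultaneously drive the total toward an extreme, so the exact sensitivity is somewhat smaller — but $\frac{5}{n-1}$ is all that is required to calibrate the Laplace noise added to $\widetilde{s^2}$ in Algorithm~$\privT$.
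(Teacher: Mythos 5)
Your proof is correct and is essentially the paper's own argument in different notation: the paper writes $s^2 = \frac{1}{n-1}\left(\sum_{k=1}^n x_k^2 - n\bar{x}^2\right)$, isolates the change as $(x_n')^2 - x_n^2 + n\left(\bar{x}^2 - (\bar{x}')^2\right)$, and bounds the latter term by $n\,|\bar{x}+\bar{x}'|\,|\bar{x}-\bar{x}'| \le n \cdot 2 \cdot \frac{2}{n} = 4$ using Theorem~\ref{thm:mean_sensitivity}. Since $S + x_n = n\bar{x}$ and $S + x_n' = n\bar{x}'$, your difference-of-squares factorization is the same decomposition phrased in terms of sums rather than means, producing the identical $1 + 4 = 5$ numerator.
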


\begin{proof}
Consider two neighboring databases, $D$ and $D'$, of size $n$. Without loss of generality suppose that the observation that differs is $x_n$. We first simplify $s^2$ for sensitivity analysis.

\begin{align*}
s^2 &= \frac{1}{n-1} \sum_{k=1}^n (x_k - \bar{x})^2\\
&= \frac{1}{n-1} \left(\sum_{k=1}^n x_k^2 -2\bar{x}\sum_{k=1}^n x_k+\sum_{k=1}^n \bar{x}^2\right)\\
&= \frac{1}{n-1} \left(\sum_{k=1}^n x_k^2 -2n\bar{x}^2+n\bar{x}^2\right)\\
&= \frac{1}{n-1} \left(\sum_{k=1}^n x_k^2 -n\bar{x}^2\right)\\
\end{align*}

Then we consider how much $s^2$ is differed with one row $x_n$ differed.
\begin{align*}
    \Delta s^2 &= \left | (s^2)' - s^2 \right | \\
    &= \left | \frac{1}{n-1} \left [ \sum_{i = 1}^{n} (x_i')^2 - n (\bar{x}')^2  \right ] - \frac{1}{n-1} \left [ \sum_{i = 1}^{n} x_i^2 - n \bar{x}^2  \right ] \right | \\
    &= \frac{1}{n-1} \left |   \sum_{i = 1}^{n} (x_i')^2 -  \sum_{i = 1}^{n} x_i^2 - n (\bar{x}')^2  + n \bar{x}^2  \right | \\
    &= \frac{1}{n-1} \left |  (x_n')^2 - (x_n)^2 + n \left(\bar{x}^2  - (\bar{x}')^2 \right) \right | 
\end{align*}

By the triangle inequality, 
\begin{align*}
    \Delta s^2 &\leq \frac{1}{n-1} \left(\left |  (x_n')^2 - (x_n)^2 \right|+\left| n \left(\bar{x}^2  - (\bar{x}')^2 \right) \right | \right)\\
    &\leq \frac{1}{n-1} \left ( 1+ n|\bar{x}+\bar{x'}||\bar{x}-\bar{x'}| \right ) \\
\end{align*}

By Theorem \ref{thm:mean_sensitivity}, $|\bar{x}-\bar{x'}| \leq \frac{2}{n}$. It follows that 
\begin{align*}
    \Delta s^2 &\leq \frac{1}{n-1} \left ( 1+ 2|\bar{x}+\bar{x'}| \right ) \\
    &\leq \frac{1}{n-1}(1+4)\\
    &= \frac{5}{n-1}
\end{align*}

\end{proof}

\clearpage

\section{Supplementary Results}\label{sec:supp_rslt}

\subsection{Many Groups}\label{sec:kw_appendix}
\paragraph{Application to Real-World Data} The Global Findex database maintains survey data from a large number of countries and years about financial inclusion around the world \cite{financial_inclusion_data}. We will use our differentially private Kruskal Wallis tests to examine whether income quintiles statistically significantly differ by age in the United States.

In Figure \ref{fig:kw_app}, we sample a large number of subsets of several different sizes in order to quantify how likely our differentially private variant of the Kruskal-Wallis would be to detect the effect in this dataset as compared to the public version. At an $\epsilon$ of $1$, the absolute value test variant follows extremely closely behind the public version. Even at an $\epsilon$ an order of magnitude smaller, the absolute value test outperforms the standard version. Because these values are randomly sampled, we can also learn from this figure that the test is robust to unequal sample sizes.

\begin{figure} [!htb] 
    \centering
    \includegraphics[width=\linewidth, keepaspectratio]{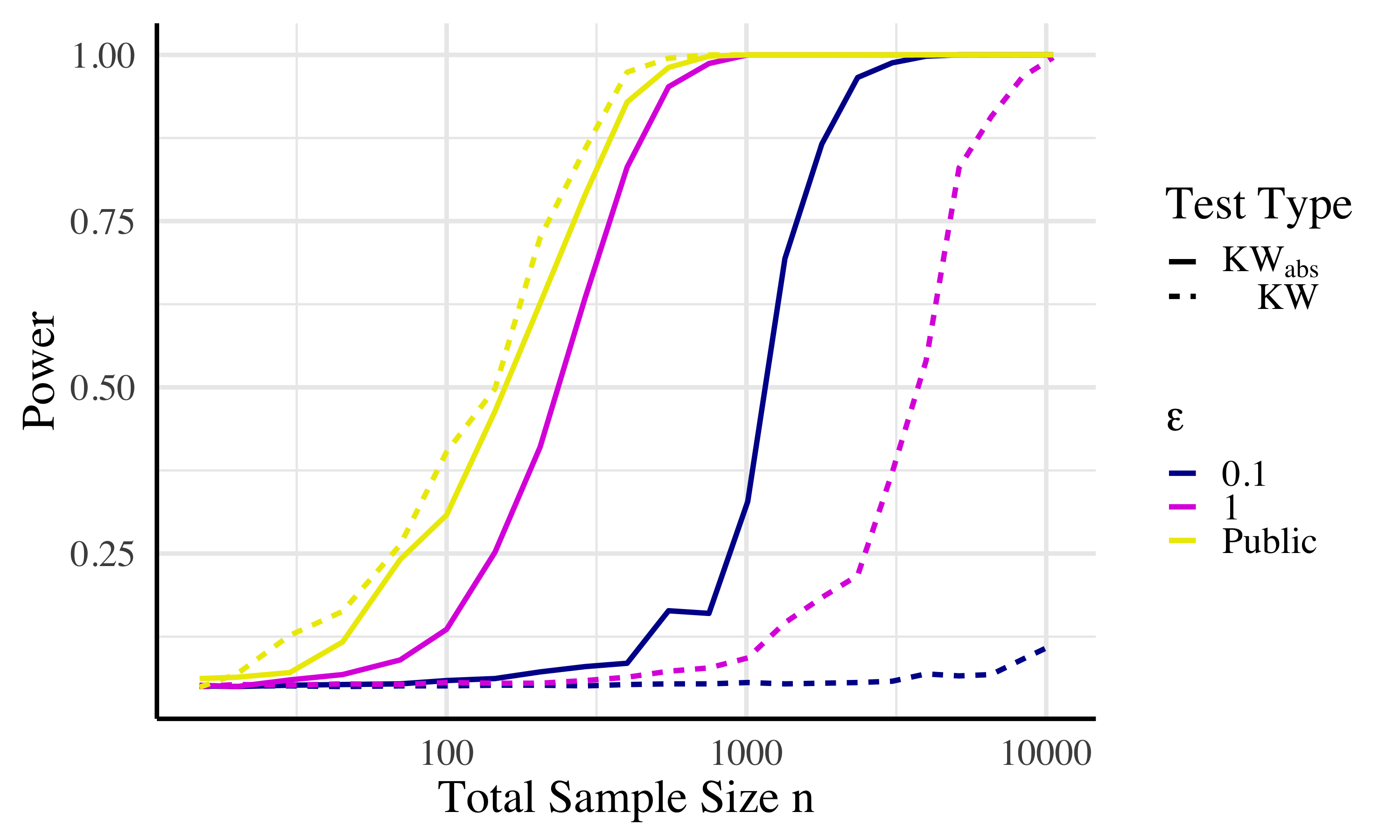}
    \caption{Power of \privKWPA at various values of $\epsilon$ and $n$ on subsets of the Global Financial Inclusion Data. ($g = 5$, $\alpha=.05$, unequal group sizes)}\label{fig:kw_app}
\end{figure}
\paragraph{Dealing with ties} We now examine the effect of our tiebreaking procedure (randomly ordering all tied values) on the power of the test. We subtract the Gini coefficient \cite{Gini}, a measure of the statistical spread of data, from 1, in order to quantify the proportion of ties per trial of the test procedure.

\begin{figure} [!htb] 
    \centering
    \includegraphics[width=\linewidth, keepaspectratio]{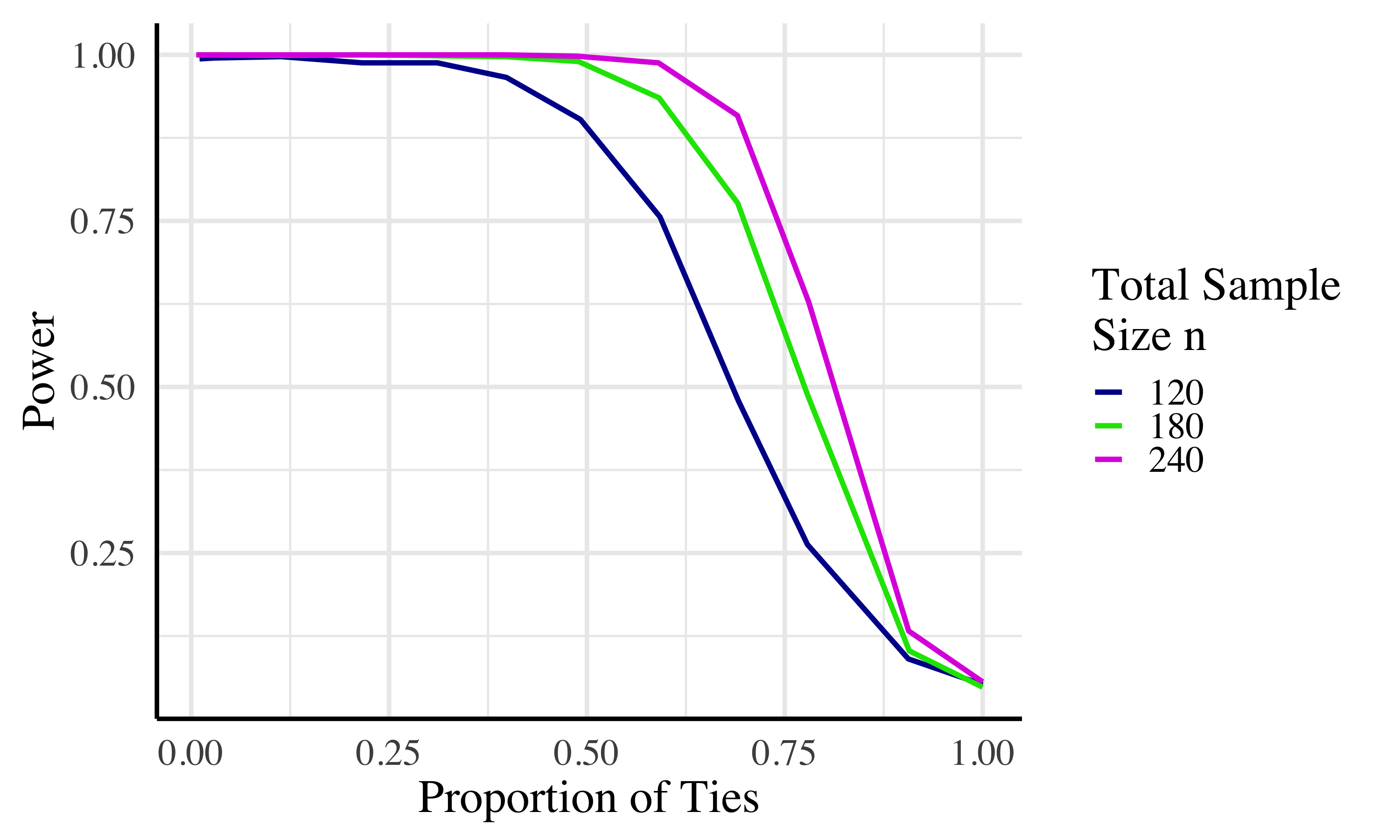}
    \caption{Power of \privKWPA at various proportions of tied values and $n$. (Effect size: $max(\mu_{n}) - min(\mu_{n}) = 2 \sigma$; $g = 3$; $\epsilon = 1$; $\alpha=.05$; equal group sizes; normally distributed sample data)}\label{fig:with_ties}
\end{figure}

\paragraph{Varying effect size} In Figure \ref{fig:varying_effect_size_kw}, we examine the use case of large sample sizes attempting to detect small effect sizes, as is increasingly common in the scientific literature \cite{bland2009tyranny}. We find that, at large sample sizes, the test at $\epsilon = 1$ needs only marginally more data than the public test to achieve the same power.

\begin{figure} [!htb] 
    \centering
    \includegraphics[width=\linewidth, keepaspectratio]{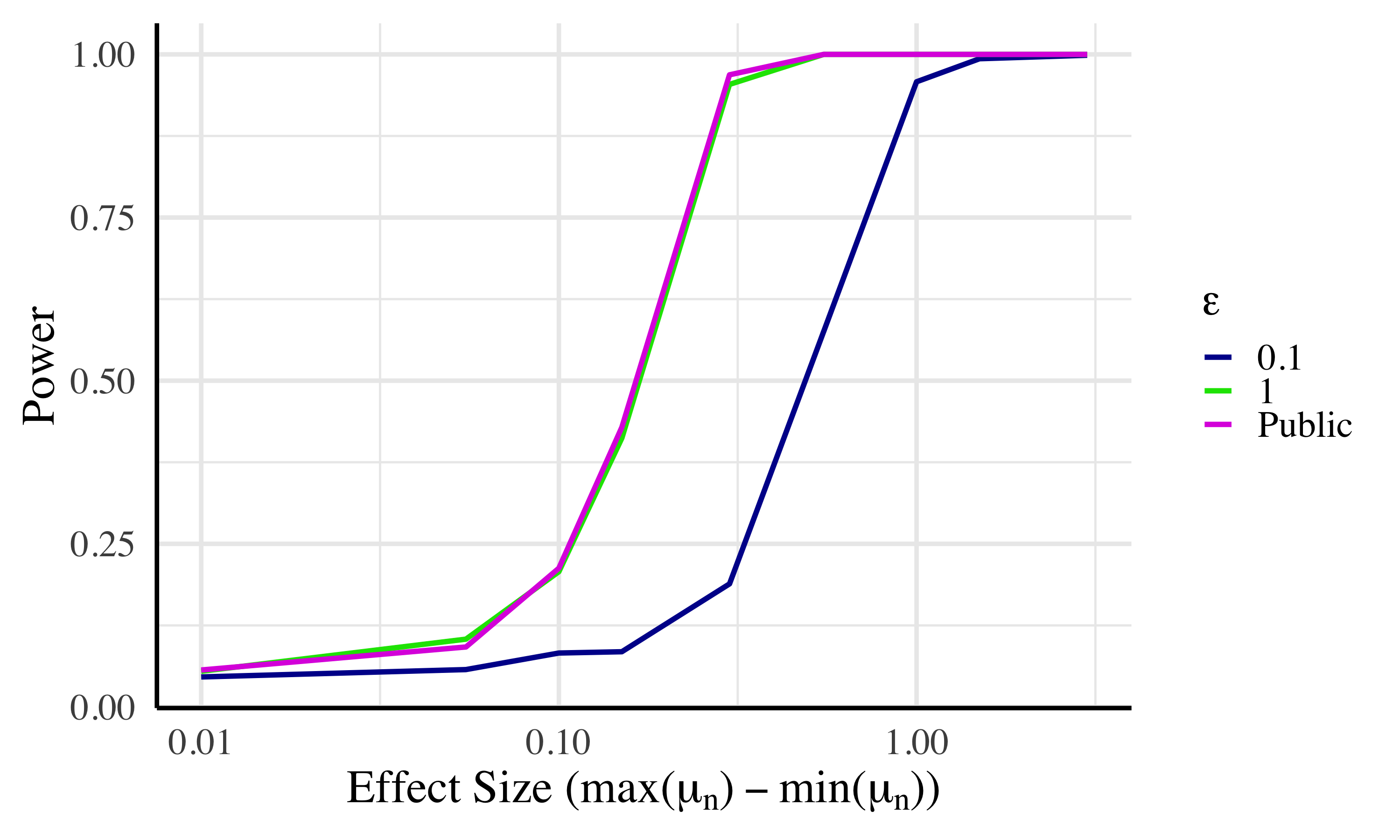}
    \caption{Power of \privKWPA at various effect sizes and $\epsilon$. ($n = 900$; $g = 3$; $\alpha=.05$; equal group sizes; normally distributed sample data)}\label{fig:varying_effect_size_kw}
\end{figure}

\paragraph{Uniformity of p-values (cont.)} Theorem \ref{thm:exp_val} shows that the expected value of the statistic is maximized when the groups are exactly equal in size. Thus, in place of the true group sizes, algorithm $\privKWPA$ assumes that the group sizes are of size $\lfloor\frac{n}{g}\rfloor$ when simulating the reference distribution (in order to maintain privacy.) One may wonder, then, if the type I error rate of our test increases when sample sizes are unequal. We thus fix the total sample size and vary the proportion of the total sample size $n$ that the maximum group sample size $n$ takes up. The results are shown in Figure \ref{fig:qqplot_uneq_kw}.

\begin{figure} [!htb] 
    \centering
    \includegraphics[width=\linewidth, keepaspectratio]{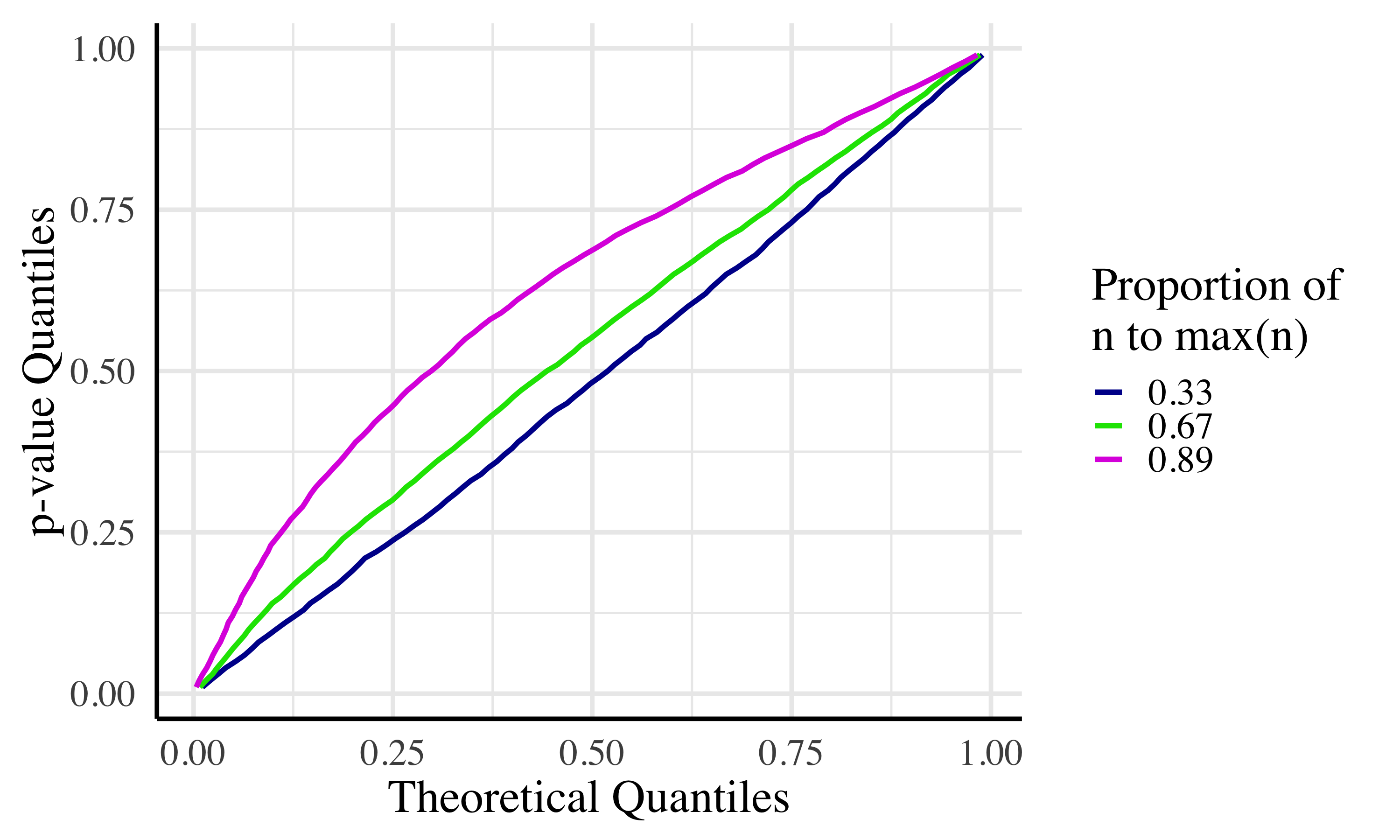}
    \caption{A quantile-quantile plot of \privKWPA comparing the distribution of simulated p-values to the uniform distribution with unequal group sizes. ($g = 3$; $\epsilon = 1$; $n = 100$)}\label{fig:qqplot_uneq_kw}
\end{figure}

As displayed in Figure \ref{fig:qqplot_uneq_kw}, the type I error rate of our test is consistently below alpha. Further, the power loss due to unequal sample sizes is minimal, even with drastically different group sample sizes.

\paragraph{Further Comparisons to Previous Work} We continue our comparisons to Swanberg et. al.'s test introduced in Section \ref{sec:kw} at varying parameterizations in order to better understand the relationship between their test and \privKWPA.

Due to the rank-based nature of \privKWPA, the public nonparametric test statistic (before the addition of noise) does not change once all of the ranks of one group are on one side of the distribution. Thus, we take a particular interest in sampling distributions with an effect size well beyond two standard deviations (the maximum studied in the main body) since the parametric statistic ought to perform better in such situations. We find, however, that \privKWPA persists in outperforming Swanberg et. al.'s test at the parameterizations chosen to reflect such a setting in Figure \ref{fig:anova_comp_big_effect}.

\begin{figure} [!htb] 
    \centering
    \includegraphics[width=\linewidth, keepaspectratio]{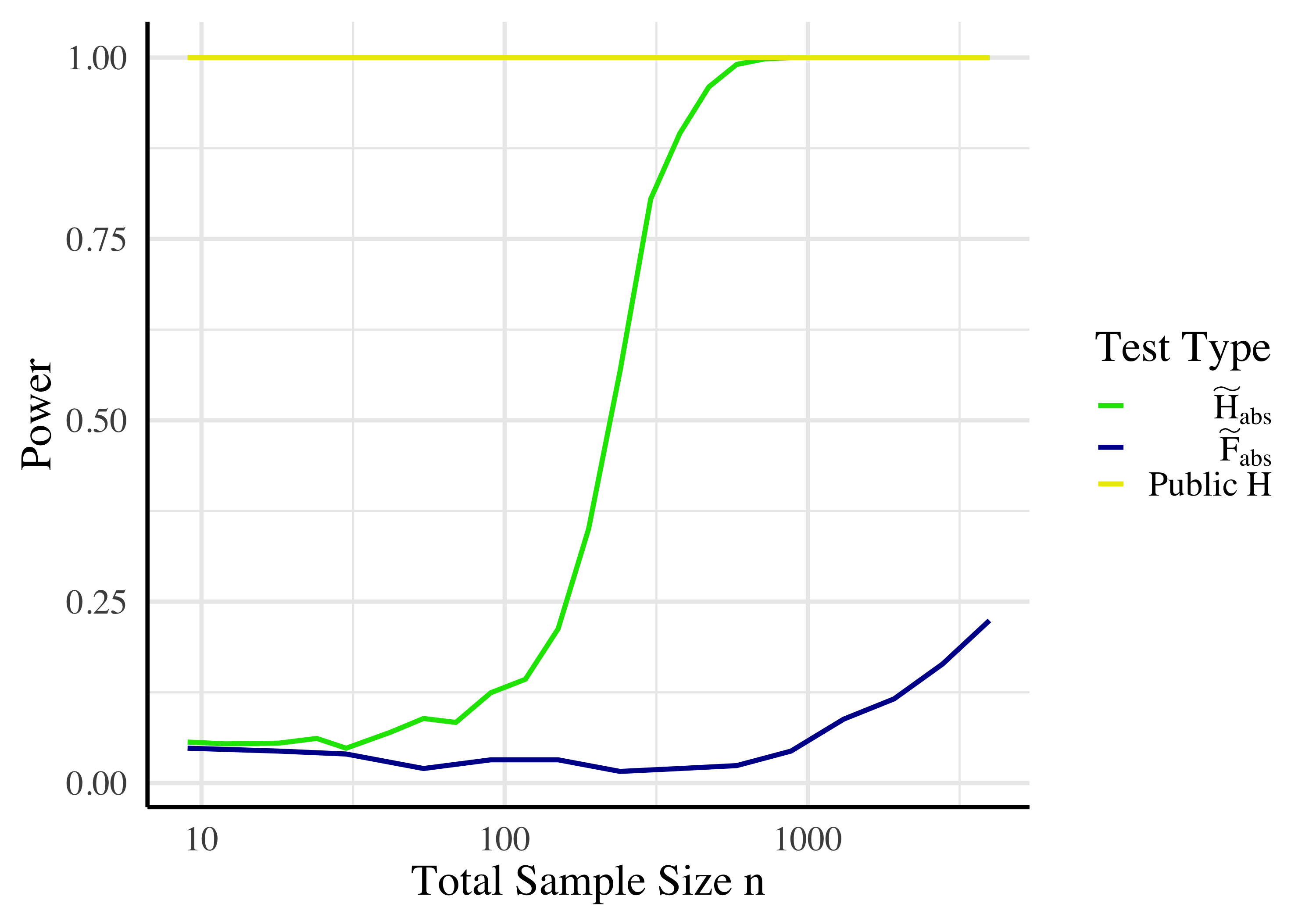}
    \caption{Power comparison of \privKWPA and Swanberg et. al.'s test at a large effect size with various $n$. ($g = 3$; $\alpha = .05$; $\epsilon = .1$; effect size: $max(\mu_{n}) - min(\mu_{n}) = 10 \sigma$; normally distributed sample data)}\label{fig:anova_comp_big_effect}
\end{figure}

We also consider settings where effect sizes are notably smaller than those studied in the main body in Figure \ref{fig:anova_comp_small_effect}.

\begin{figure} [!htb] 
    \centering
    \includegraphics[width=\linewidth, keepaspectratio]{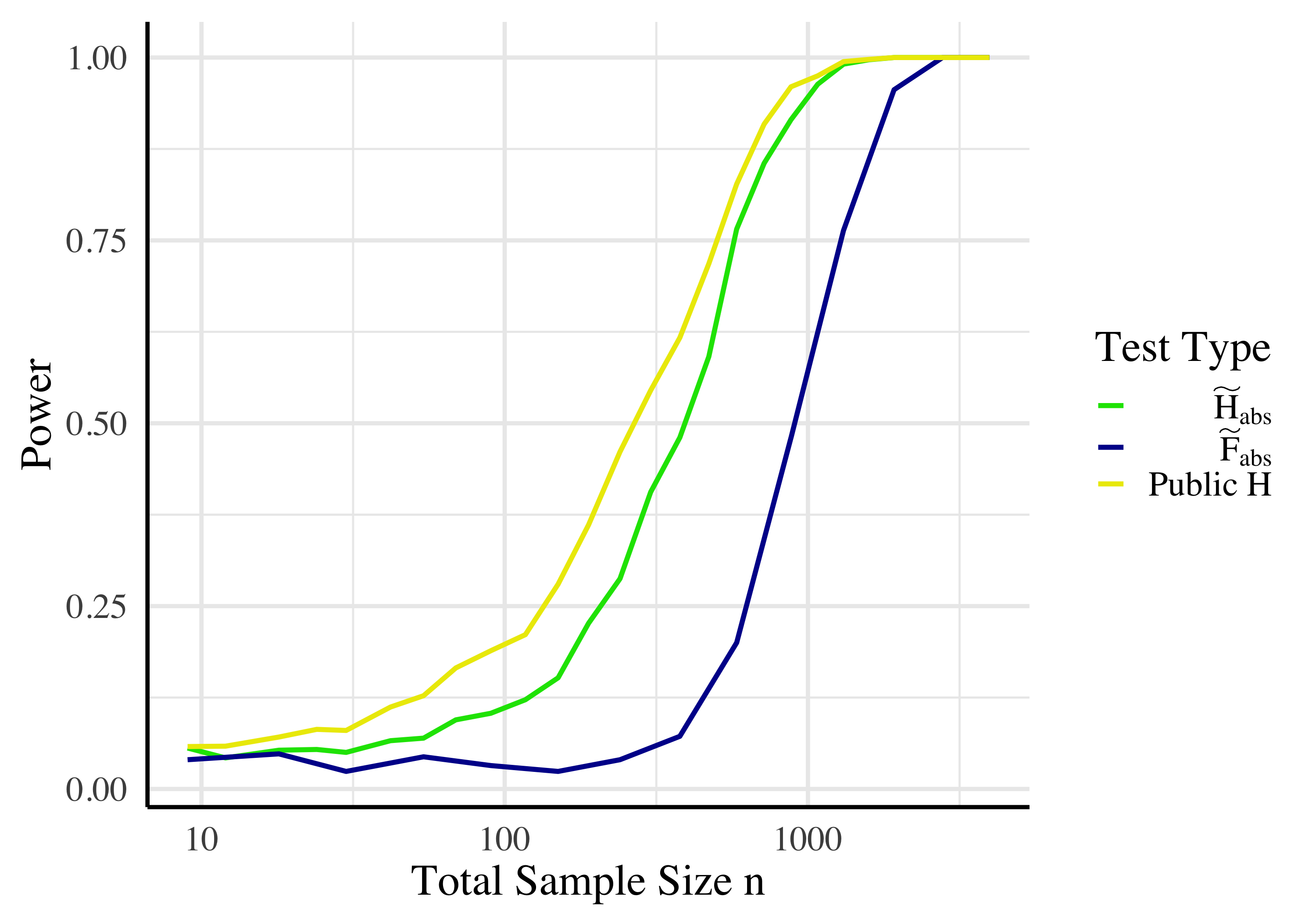}
    \caption{Power comparison of \privKWPA and Swanberg et. al.'s test at a small effect size with various $n$. ($g = 3$; $\epsilon = 1$; effect size: $max(\mu_{n}) - min(\mu_{n}) = \frac{\sigma}{3}$; normally distributed sample data)}\label{fig:anova_comp_small_effect}
\end{figure}

\clearpage

\subsection{Two Groups}\label{sec:mw_app}
\paragraph{Application to Real-World Data} In 2017, Andersen et al. conducted a study on patient's and community health volunteer's understanding of eligibility for a legal medical abortion in Nepal \cite{nepal_study}. The dataset collected for the study was released on the condition that users of the data not seek to deanonymize any of the study participants. We demonstrate that our test reaches a meaningful level of statistical power for a variety of parameters---and could have done so with smaller sample sizes than that taken in this study---and thus a differentially-private query interface would have been a sufficient tool for the release of this data while greatly decreasing the risk of reidentification.

Of the many variables in this dataset, we focus on two; the first is simply a categorical variable indicating whether or not a medical abortion was granted, and the second is the number of days since the patient's last period. One may wonder if, as the number of days since the patients last period increases, their likelihood of being granted a medical abortion changes. This question, naturally, can be answered by the Mann-Whitney U test. We have 472 cases where medical abortions were not granted (with an average number of days since last period of around 72), and 2620 cases where medical abortions were granted (with an average number of days since last period of around 49). The means of the two groups differ by slightly less than 1.5 standard deviations (where the number of days since last period is greater for subjects not granted a medical abortion).\footnote{These values come from a slightly filtered and tidied version of the original dataset---our source code can be found at: \textit{github.com/simonpcouch/non-pm-dpht}}

\begin{figure} [!htb] 
    \centering
    \includegraphics[width=\linewidth, keepaspectratio]{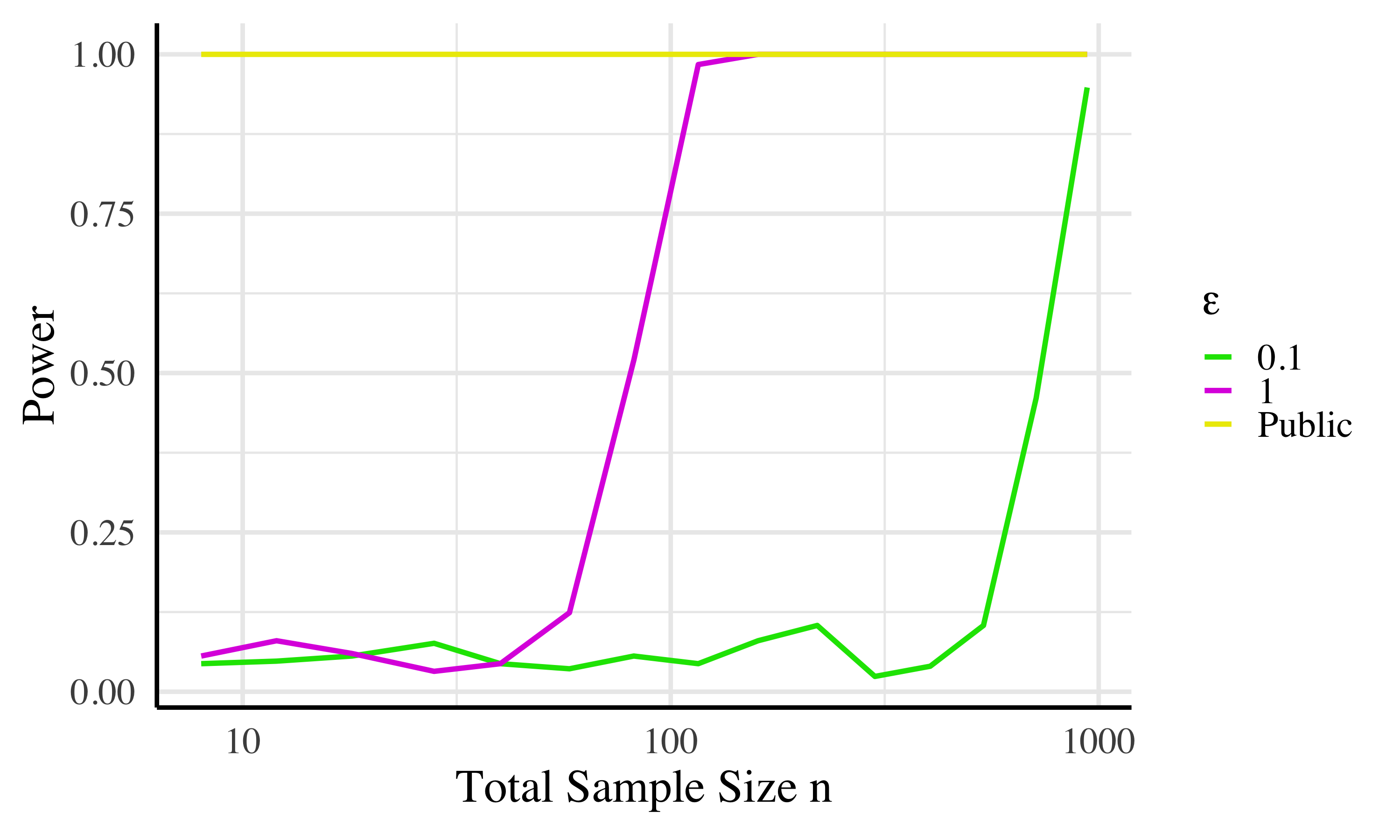}
    \caption{Power of \privMWP at various values of $\epsilon_{tot}$ and $n$. (Proportion of $\epsilon_{tot}$ to $\epsilon_{m} = .65$; $\alpha=.05$; $m$:($n-m$) $= 1$; effect size $\mu_{1} - \mu_{2} \approx 1.5 \sigma$)}\label{fig:real_world_app}
\end{figure}

In Figure \ref{fig:real_world_app} we sample a varying number of observations from the groups and evaluate the power of both our and the public test. The far right of the figure shows the power using almost all of the provided data (while still maintaining equal group sizes). For $\epsilon$ $=$ $1$, 100\% power is achieved well before all data is used, while that for $\epsilon$ $=$ $.1$ is more likely than not to detect the effect by the time all of the data is used.
\paragraph{Uniformity of p-values (cont.)} In addition to the examination of uniformity of p-values when group sizes are equal, we also examine the effect of unequal sample sizes in Figure \ref{fig:qqplot_uneq_mw} and find that as sample sizes become more unequal, our critical values are increasingly conservative.

\begin{figure} [!htb] 
    \centering
    \includegraphics[width=\linewidth, keepaspectratio]{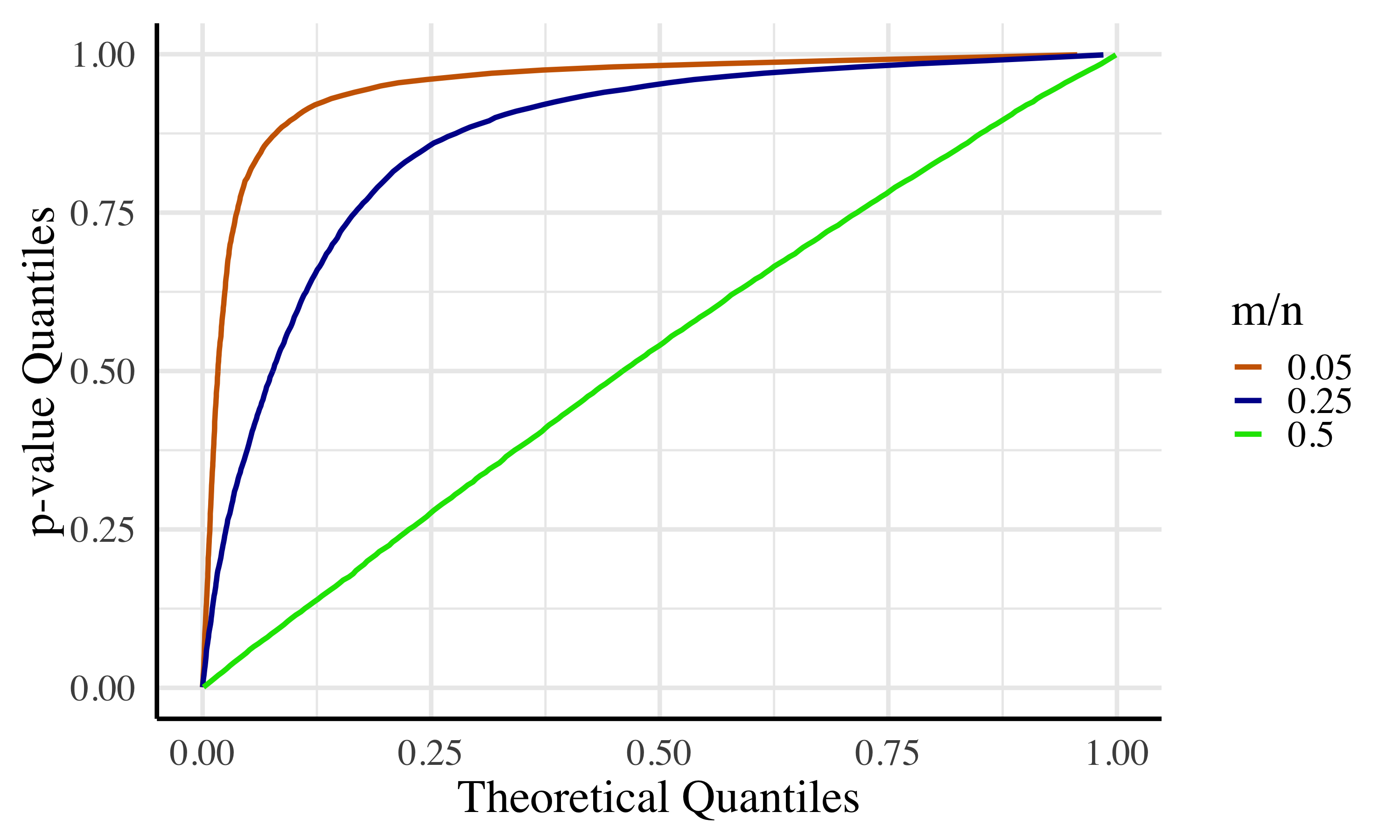}
    \caption{A quantile-quantile plot of \privMWP examining the effect of unequal group sizes. ($\epsilon_{tot} = 1$; proportion of $\epsilon_{tot}$ to $\epsilon_{m} = .65$; $n = 1000$; normally distributed sample data)}\label{fig:qqplot_uneq_mw}
\end{figure}

While our critical values do not exceed $\alpha$ in any case, they are much too conservative when total sample size is too small or group sample sizes are largely different. 

Naturally, one might wonder whether we can assume that group sizes are equal when we generate our reference distribution, as we do with our Kruskal-Wallis, so that we can allot of our privacy budget to estimating the test statistic. In Figure \ref{fig:qqplot_assume_mw}, we address this question with a quantile-quantile plot where we fix total sample size and use the same reference distribution (assuming equal group sizes) on null distributions generated with unequal group sizes. We find that, even with $n$ $=$ $1000$ and $\epsilon_{tot} = 1$, small differences in group sizes cause egregious increases in the type I error rate. We conclude, as a result, that our allotment of some portion of the privacy budget to estimating group sizes is essential.

\begin{figure} [!htb] 
    \centering
    \includegraphics[width=\linewidth, keepaspectratio]{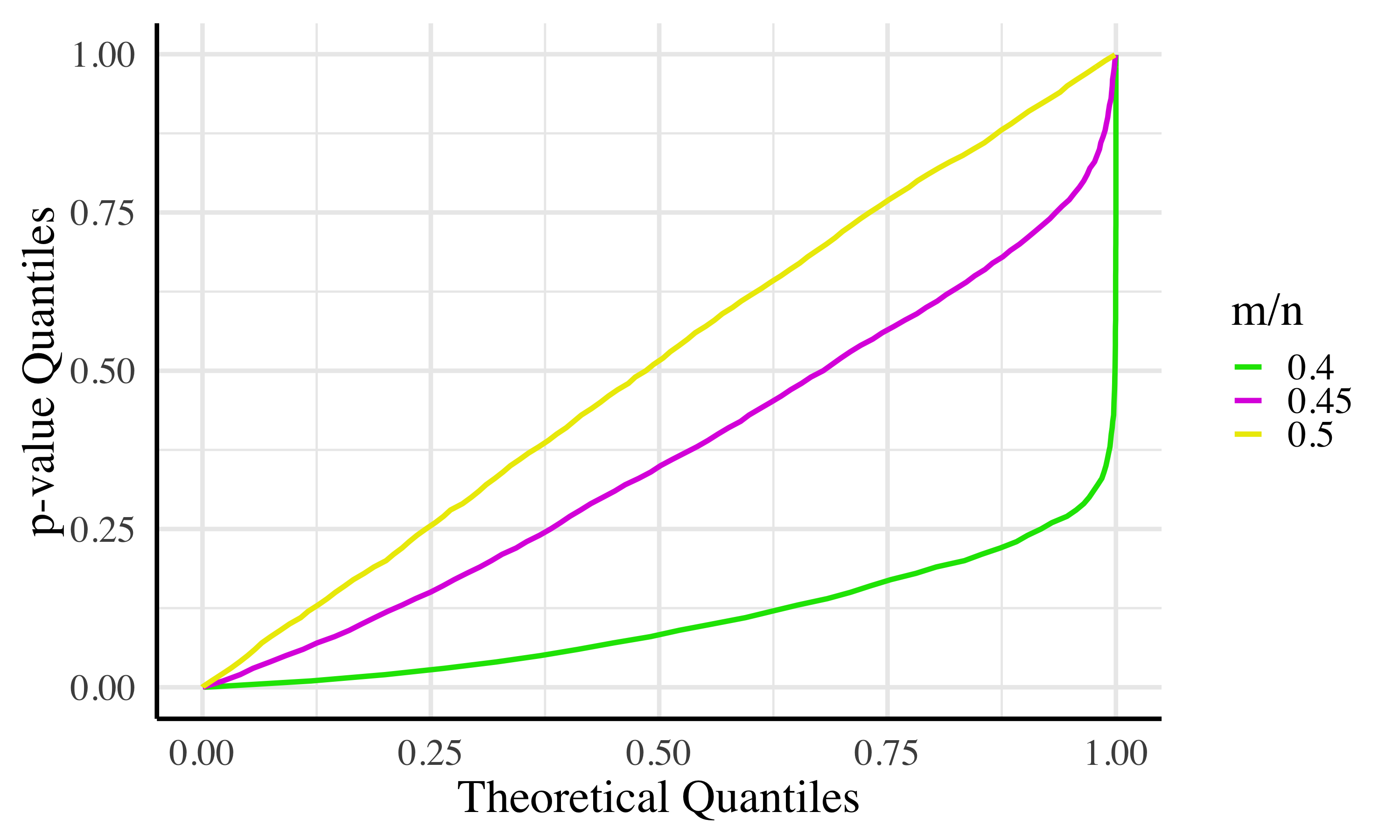}
    \caption{A quantile-quantile plot of a modification of \privMWP where total sample size is fixed but group sizes vary, using the same reference distribution (assuming equal group sizes) for all trials. ($\epsilon_{tot} = 1$; proportion of $\epsilon_{tot}$ to $\epsilon_{m} = .65$; $n = 1000$; normally distributed sample data)}\label{fig:qqplot_assume_mw}
\end{figure}

\paragraph{Varying effect size} We examine the power of our test at varying effect sizes, as shown in Figure \ref{fig:vary_effect}. Two features of this graph display notable characteristics of our test. Initially, note how minimal the difference is between the public test and our test at $\epsilon_{tot} = 1$; at large sample sizes (in this case, $n = 1500$), the power of the public test and that of our test is almost indistinguishable. Further, the $\epsilon_{tot} = .06$ line shows that, once the effect size reaches a difference in $\mu$ of roughly 3$\sigma$, the power of the test seems to level off before reaching $1$. This is due to the rank-based nature of the Mann-Whitney; once all of the ranks of one group are above or below those of the other, increasing effect size has no consequence. Thus, at certain choices of $\epsilon_{tot}$ and $n$, there is effectively an upper bound on the power test that is below $1$, regardless of effect size.

\begin{figure} [!htb] 
    \centering
    \includegraphics[width=\linewidth, keepaspectratio]{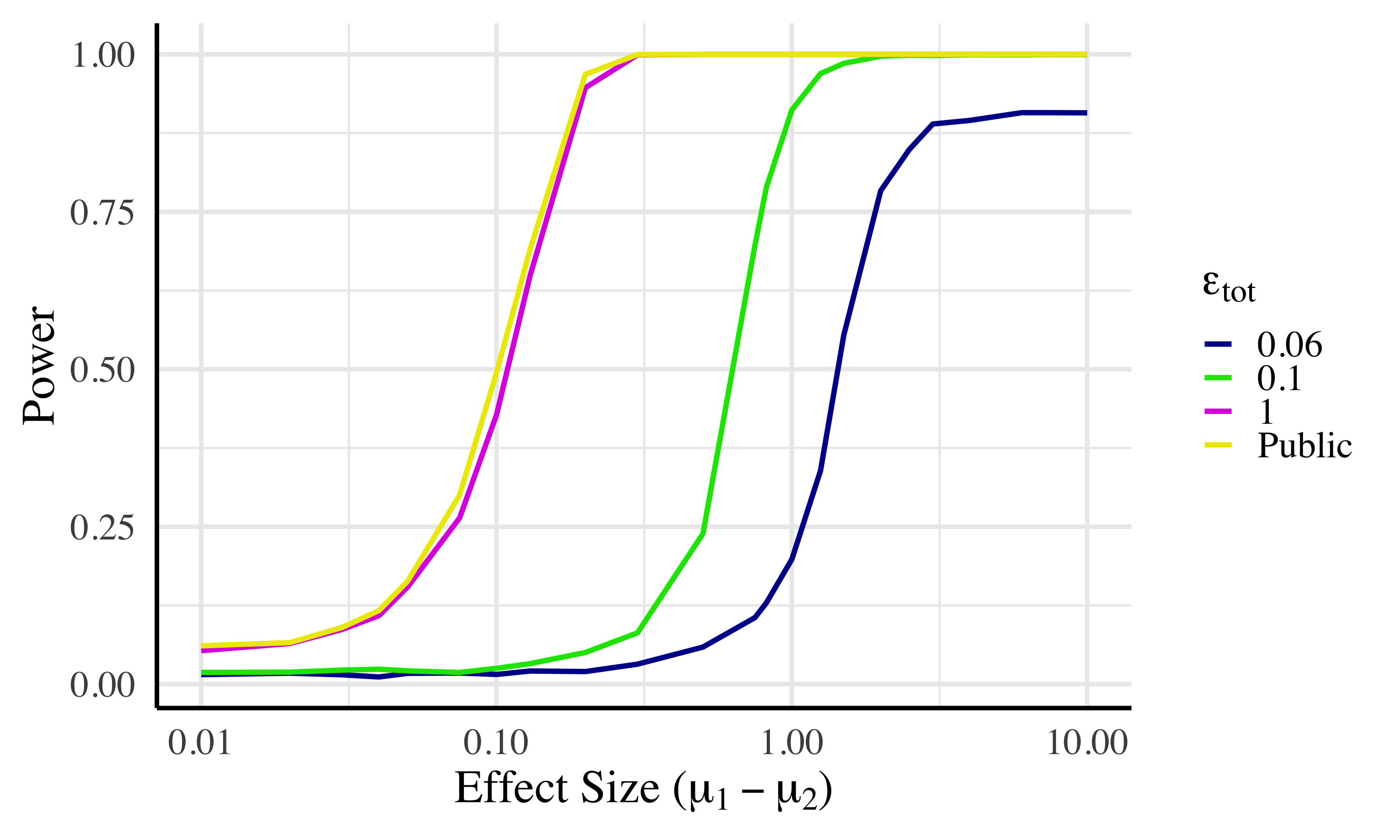}
    \caption{Power of \privMWP at various effect sizes ($\mu_{1} - \mu_{2} = x \sigma$) and privacy parameters $\epsilon$. (Proportion of $\epsilon_{tot}$ to $\epsilon_{m} = .65$; $n = 1500$; $\alpha=.05$; $m$:($n-m$) $= 1$; normally distributed sample data)}\label{fig:vary_effect}
\end{figure}

\paragraph{The effect of unequal group sizes} In Figure \ref{fig:var_n1_tot_eps}, we quantify the effect of unequal group sizes on the power of the test. In the public setting, power reduction only occurs when the sample sizes are meaningfully different. In the private setting, however, when random noise is added to $n_1$ and $n_2$ for the simulation of the reference distribution (and thus critical value computation), smaller total sample sizes often result in substantial differences in size of $m^*$ and $n - m^*$.

\begin{figure} [!htb] 
    \centering
    \includegraphics[width=\linewidth, keepaspectratio]{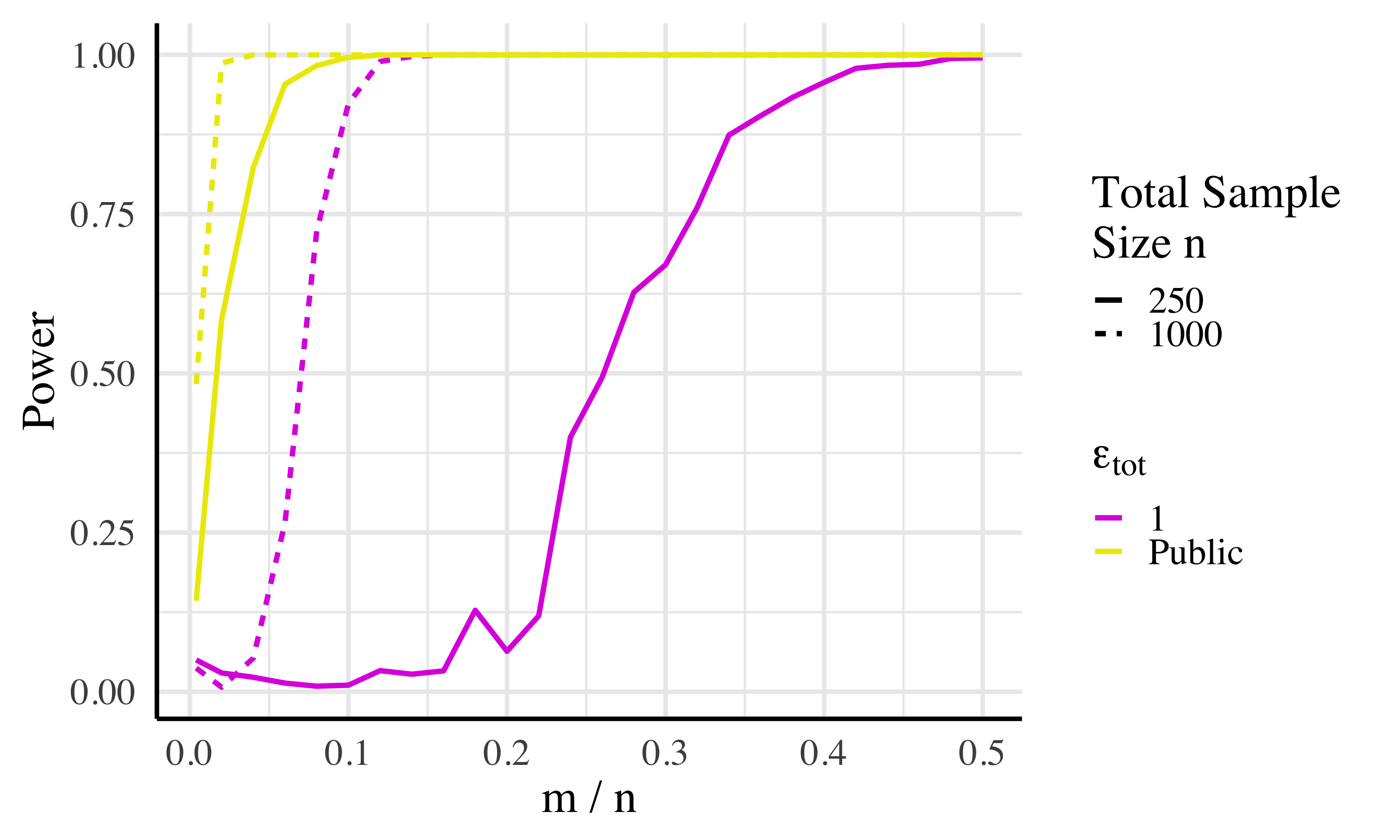}
    \caption{Power of \privMWP at various values of $\epsilon_{tot}$ and sample size $m$. (Effect size: $\mu_1 - \mu_2 = 1 \sigma$; proportion of $\epsilon_{tot}$ to $\epsilon_{m} = .65$; $\alpha=.05$; $n = 500$; normally distributed sample data)}\label{fig:var_n1_tot_eps}
\end{figure}

\paragraph{Comparative effects of unequal group sizes} We continue the discussion given in Section \ref{sec:pwr} on comparing \privMWP to \privKWPA in the 2-group use case, this time quantifying the comparative losses in power to the two tests when group sizes are unequal. The \privMWP test does not assume equal sizes, and must rather allot some of the total privacy budget to estimating one of the group sizes $m$ since the Type I error rate increases when equal group sizes are assumed when generating the reference distribution. However, as we have previously shown, it is acceptable for Type I error for \privKWPA to assume equal group sizes, though this means increases in the Type II error rate (i.e. losses in power) when group sizes are highly unequal. Thus, it is natural to ask whether the \privMWP test becomes more powerful when group sizes are very different since some of the privacy budget is allotted to estimating $m$, and thus, the reference distribution is more precisely calibrated to reduce Type II error rate. Our findings in this setting are summarized in Figures \ref{fig:kw_vs_mw_uneq_n250} and \ref{fig:kw_vs_mw_uneq_n1000}, where we find that \privKWPA persists in achieving higher power than \privMWP even under such parameterizations.

\begin{figure} [!htb] 
    \centering
    \includegraphics[width=\linewidth, keepaspectratio]{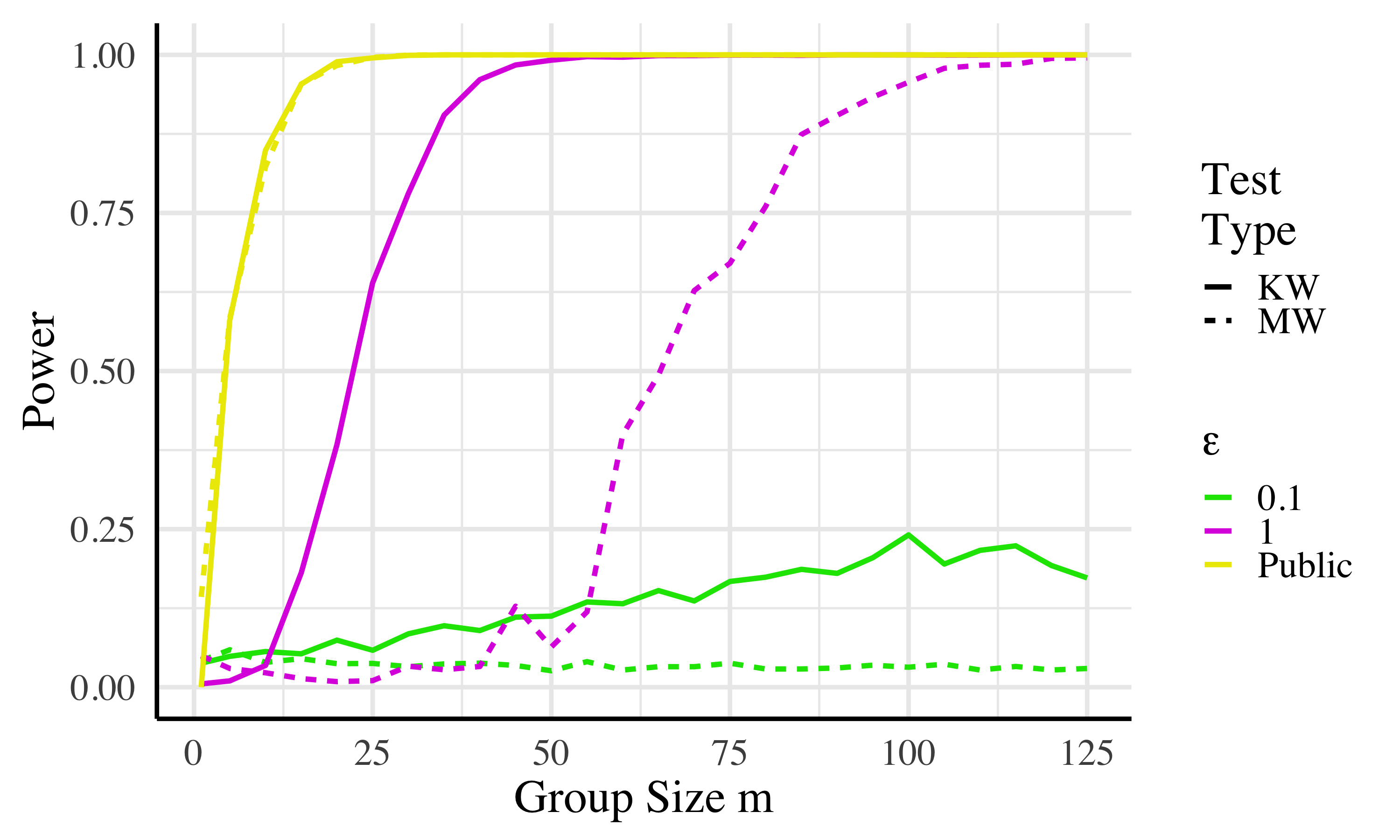}
    \caption{Power of \privMWP and \privKWPA at various values of $\epsilon_{tot}$ and group sample size $m$. (Effect size: $\mu_1 - \mu_2 = 1 \sigma$; proportion of $\epsilon_{tot}$ to $\epsilon_{m} = .65$; $\alpha=.05$; $n = 250$; $g = 2$; normally distributed sample data)}\label{fig:kw_vs_mw_uneq_n250}
\end{figure}

\begin{figure} [!htb] 
    \centering
    \includegraphics[width=\linewidth, keepaspectratio]{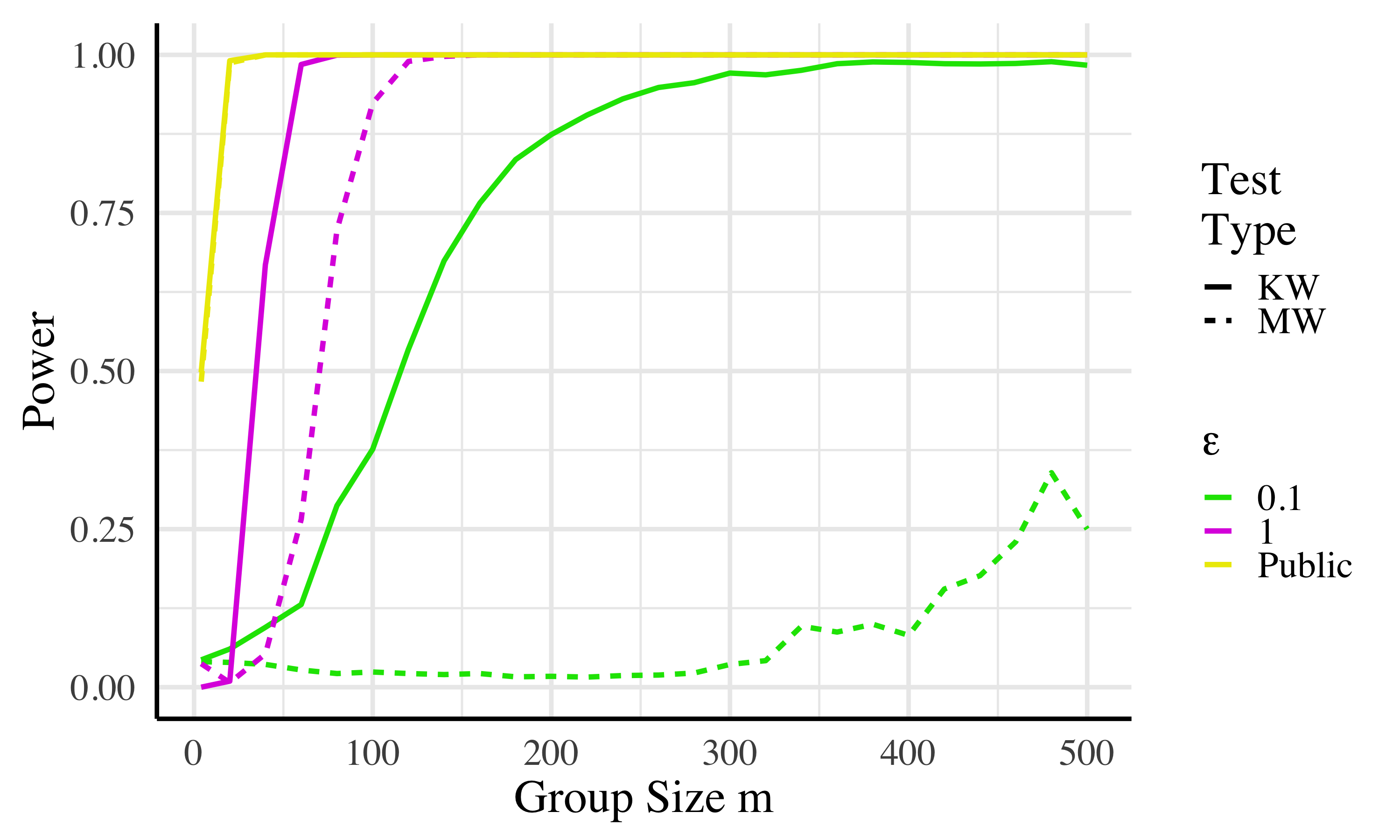}
    \caption{Power of \privMWP and \privKWPA at various values of $\epsilon_{tot}$ and group sample size $m$. (Effect size: $\mu_1 - \mu_2 = 1 \sigma$; proportion of $\epsilon_{tot}$ to $\epsilon_{m} = .65$; $\alpha=.05$; $n = 1000$; $g = 2$; normally distributed sample data)}\label{fig:kw_vs_mw_uneq_n1000}
\end{figure}

\paragraph{Known group size special case} As noted in \ref{sec:mw}, when group sizes are known publicly, the entirety of the privacy budget can be allotted to estimating the test statistic $U$ in the \privMWP algorithm since $m$ need not be estimated. This special use case increases the precision of \privMWP enough to cause the test to reach higher powers than \privKWPA at equivalent parameterizations, as shown in Figure \ref{fig:kw_vs_mw_public_n}.

\begin{figure} [!htb] 
    \centering
    \includegraphics[width=\linewidth, keepaspectratio]{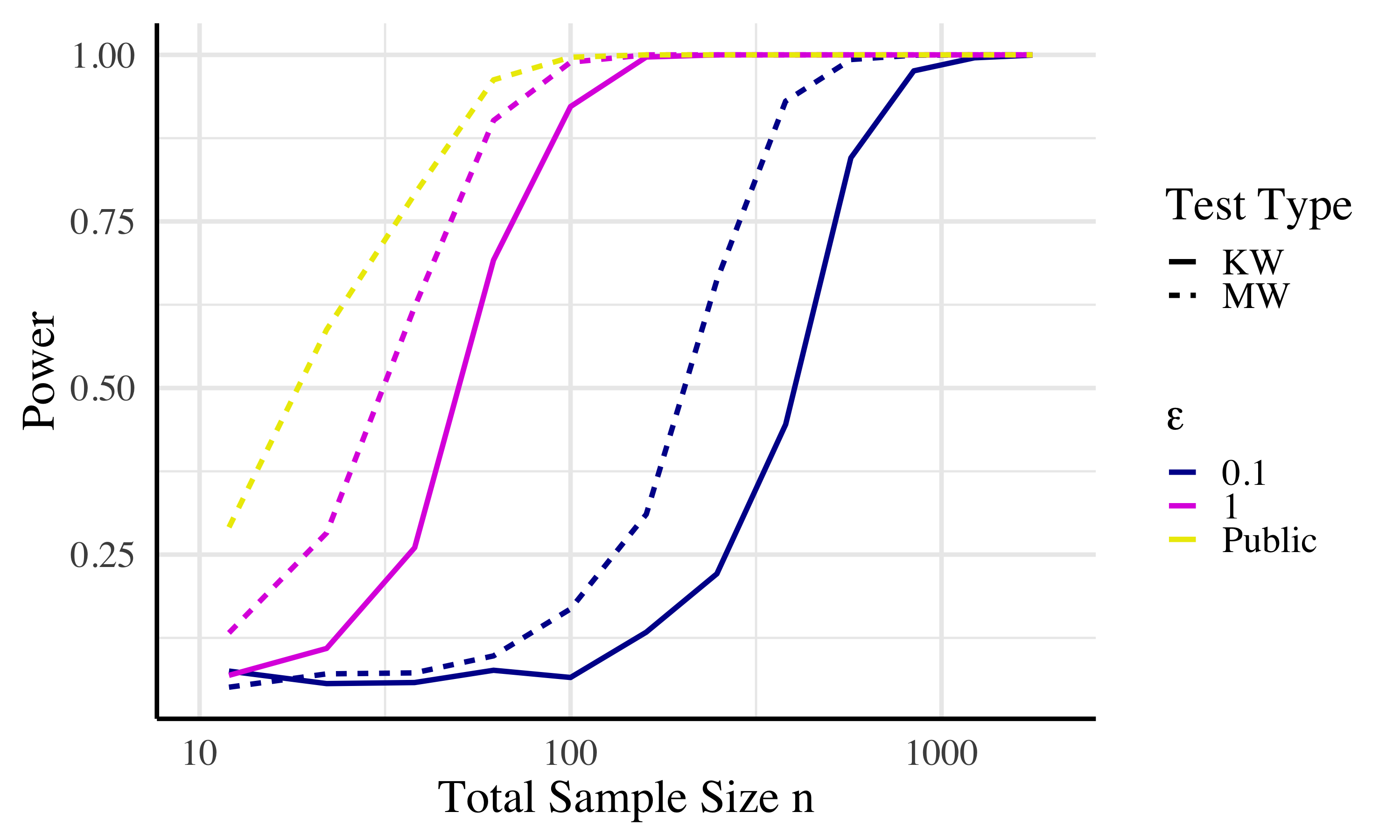}
    \caption{Power of \privMWP and \privKWPA at various values of $\epsilon_{tot}$ and sample size $m$. (Effect size: $\mu_1 - \mu_2 = 1 \sigma$; proportion of $\epsilon_{tot}$ to $\epsilon_{m} = .65$; $\alpha=.05$; $n = 1000$; $g = 2$; normally distributed sample data)}\label{fig:kw_vs_mw_public_n}
\end{figure}

\paragraph{Allotment of Privacy Budget} This section serves to justify our choice of the optimum proportion of $\epsilon_{tot}$ to allot to estimating $m$ and the test statistic $\MW$.

Initially, we fix all parameters other than the proportion of $\epsilon$ to $m$ and total sample size $n$ and calculate power across many values.

\begin{figure} [!htb] 
    \centering
    \includegraphics[width=\linewidth, keepaspectratio]{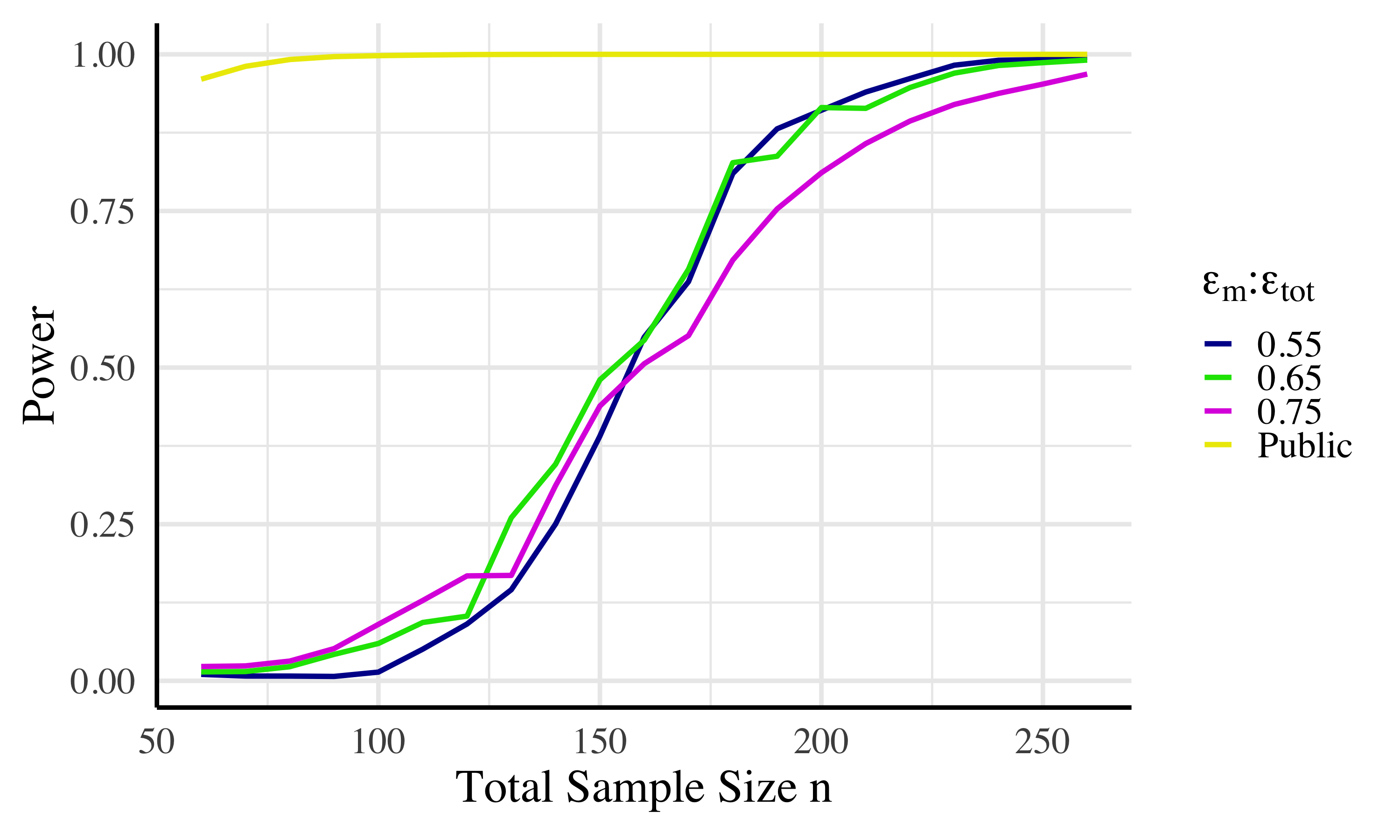}
    \caption{Power of \privMWP at various values of proportion of $\epsilon_{tot}$ to $\epsilon_{m}$ and $n$. (Effect size: $\mu_{1} - \mu_{2} = 1 \sigma$; $\epsilon_{tot} = 1$; $\alpha=.05$; $m$:($n-m$) $= 1$; normally distributed sample data)}\label{fig:var_N_prop_eps}
\end{figure}

As shown in Figure \ref{fig:var_N_prop_eps}, to maximize power, the optimum proportion of epsilon to allot to estimating $m$ is near $.65$.

\clearpage

\subsection{Wilcoxon Signed-Rank}\label{sec:wc_appendix}
\paragraph{Dealing with Ties} In Figure \ref{fig:30_pct_ties}, 30\% of rows have $d_i=0$ and the other 70\% are distributed as before, with the two data points sampled from normal distributions with means one standard deviation apart.

\begin{figure} [!htb] 
    \centering
    \includegraphics[width=\linewidth]{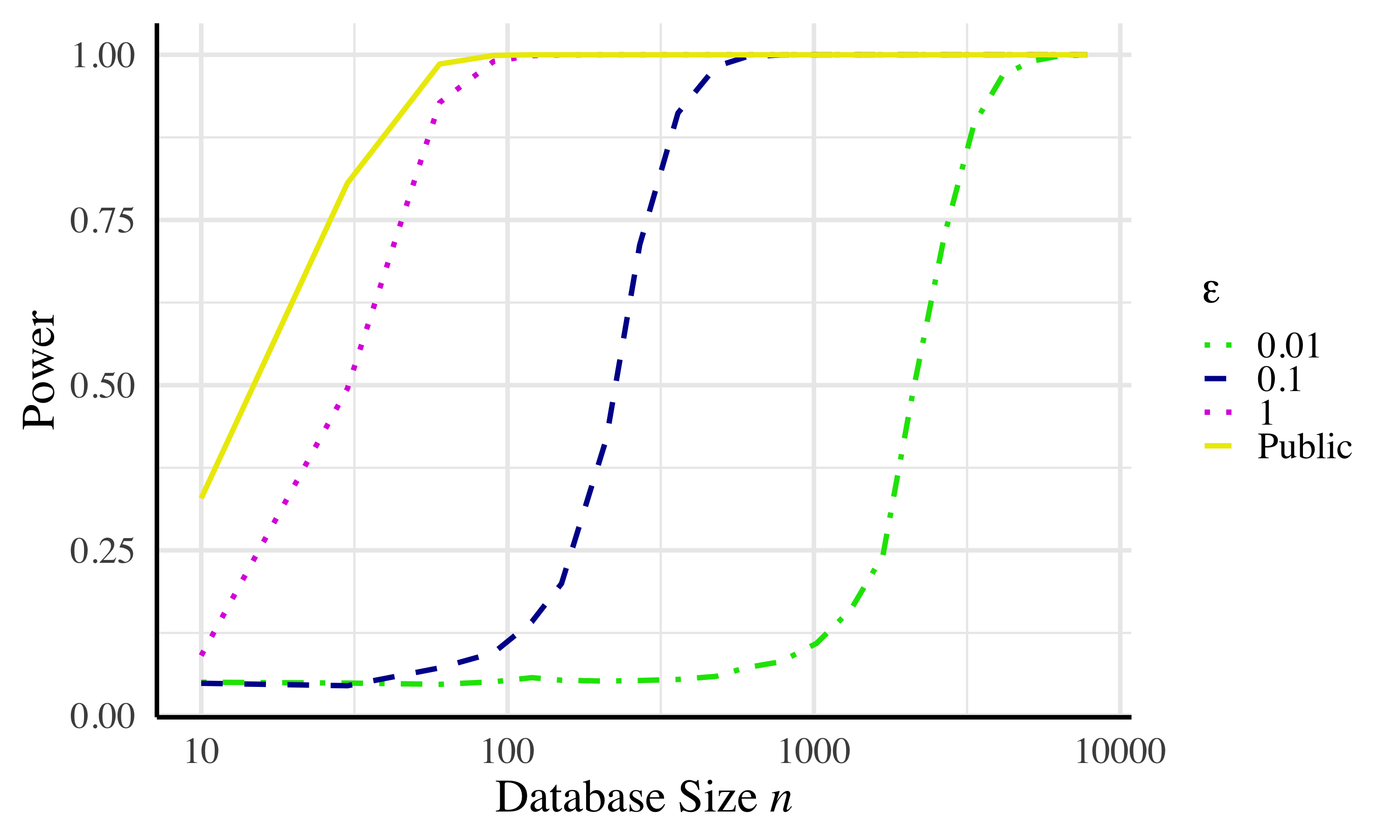}
    \caption{Power of \privWPP at various values of $\epsilon$ and $n$. (Effect size: $\mu_u - \mu_v = 1 \sigma$; $\alpha = .05$; proportion of zeroes in $d_i$ is $.3$; normally distributed sample data)}
    \label{fig:30_pct_ties}
\end{figure}

Figure \ref{fig:comp_no_ties} uses a continuous distribution for the testing data. To show that our test handles ties more effectively than the TC test, we again introduce ties into the data, this time at varying proportion, and measure power. In particular, Figures \ref{fig:ties-utility} and \ref{fig:ties-privacy} compare our algorithm to the high utility and high privacy variants of the TC test, respectively.  Here we first choose the number of rows with $d_i=0$ and then sample the remaining data points as before.  Of course, as the number of rows showing no difference increases, the power of all tests decreases, but we see that in all cases our test retains power longer.

\begin{figure} [!htb] 
    \centering
    \includegraphics[width=\linewidth]{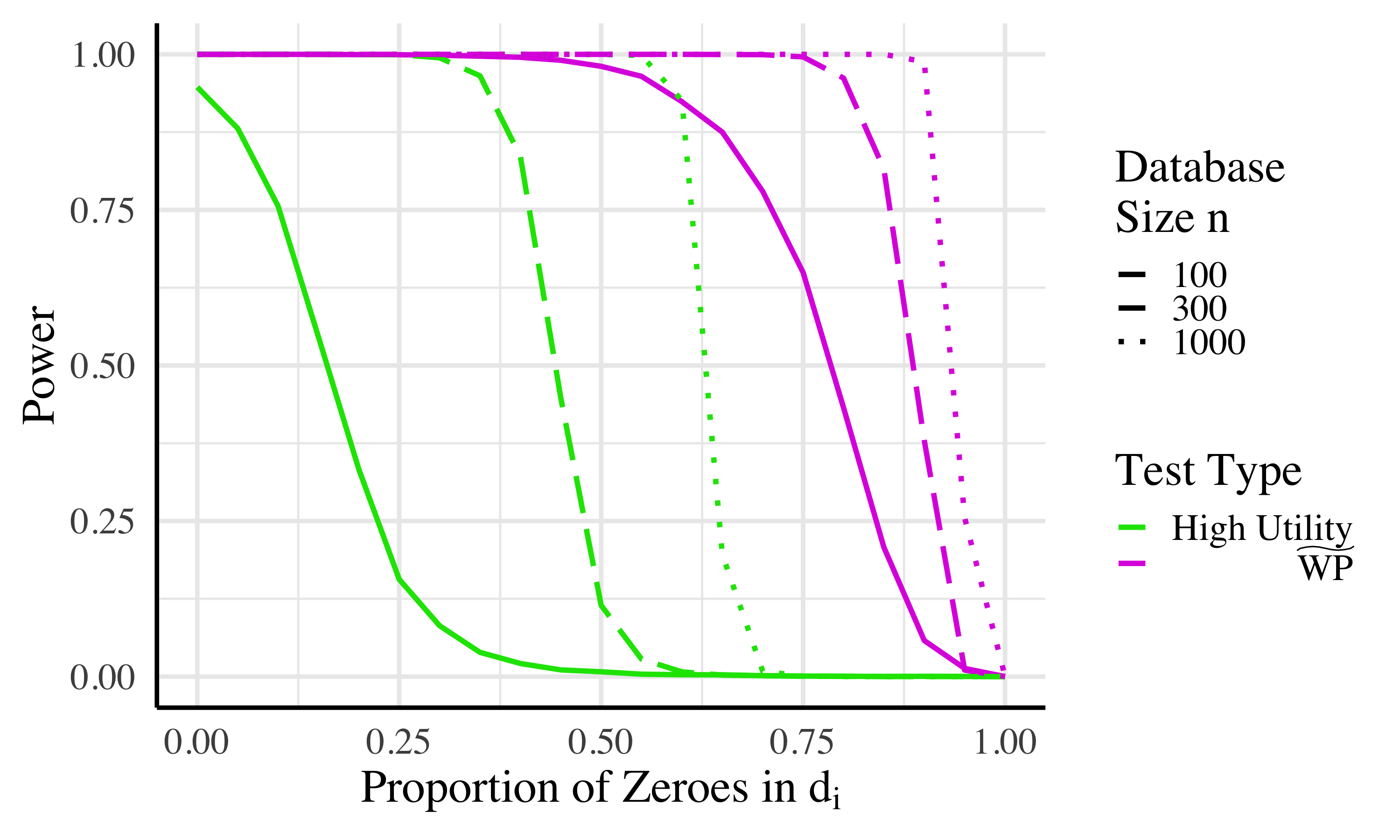}
    \caption{Power comparison of the TC \cite{DPWilcoxon} \textit{High Utility} algorithm and \privWPP at various proportions of tied values and $n$. (Effect size: $\mu_u - \mu_v = 1 \sigma$; $\epsilon = 1$; $\alpha = .05$; normally distributed sample data)}
    \label{fig:ties-utility}
\end{figure}

\begin{figure} [!htb] 
    \centering
    \includegraphics[width=\linewidth]{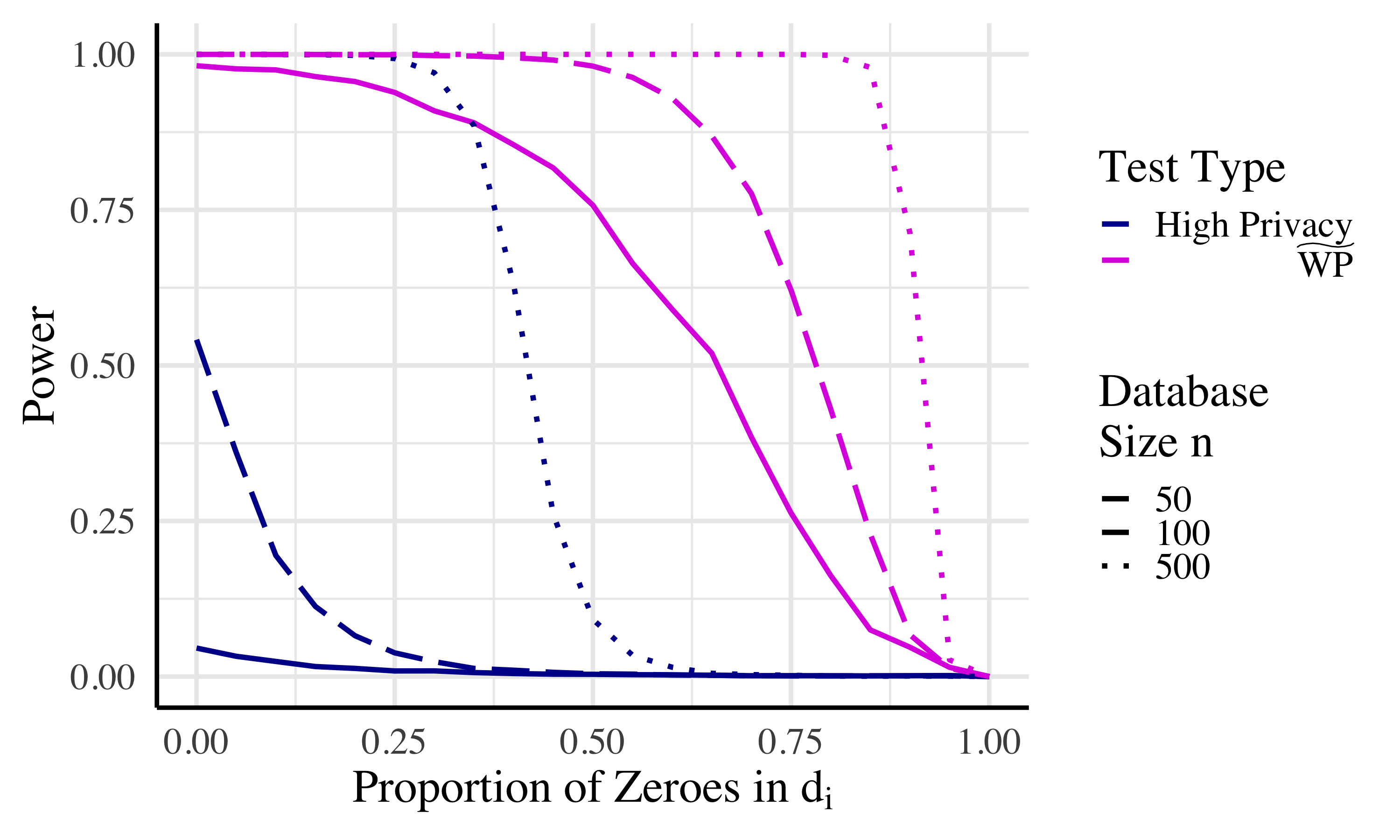}
    \caption{Power comparison of the TC  \textit{High Privacy} algorithm ($k = 15$) and \privWPP at various proportions of tied values and $n$. (Effect size: $\mu_u - \mu_v = 1 \sigma$; $\epsilon = 1$; $\alpha = .05$; normally distributed sample data)}
    \label{fig:ties-privacy}
\end{figure}

\paragraph{Varying Effect Size}\label{sec:wc_vary_effect}

We also examine the power of our test at varying effect sizes. As shown in Figure \ref{fig:varying_effect_size_wc}, at large sample sizes $n$, in this case $2500$, there is essentially no difference between the minimum effect detectable in the private and public setting.  This is because the random variation in the sample overwhelms the relatively small random noise being added for privacy.  We also note, while not shown in the figure, that for small enough choices of $n$ and $\epsilon$ no effect, no matter how large, can be detected.  (Once all $d_i$ values are positive, increasing the effect size further has no effect.)

\begin{figure} [!htb] 
    \centering
    \includegraphics[width=\linewidth]{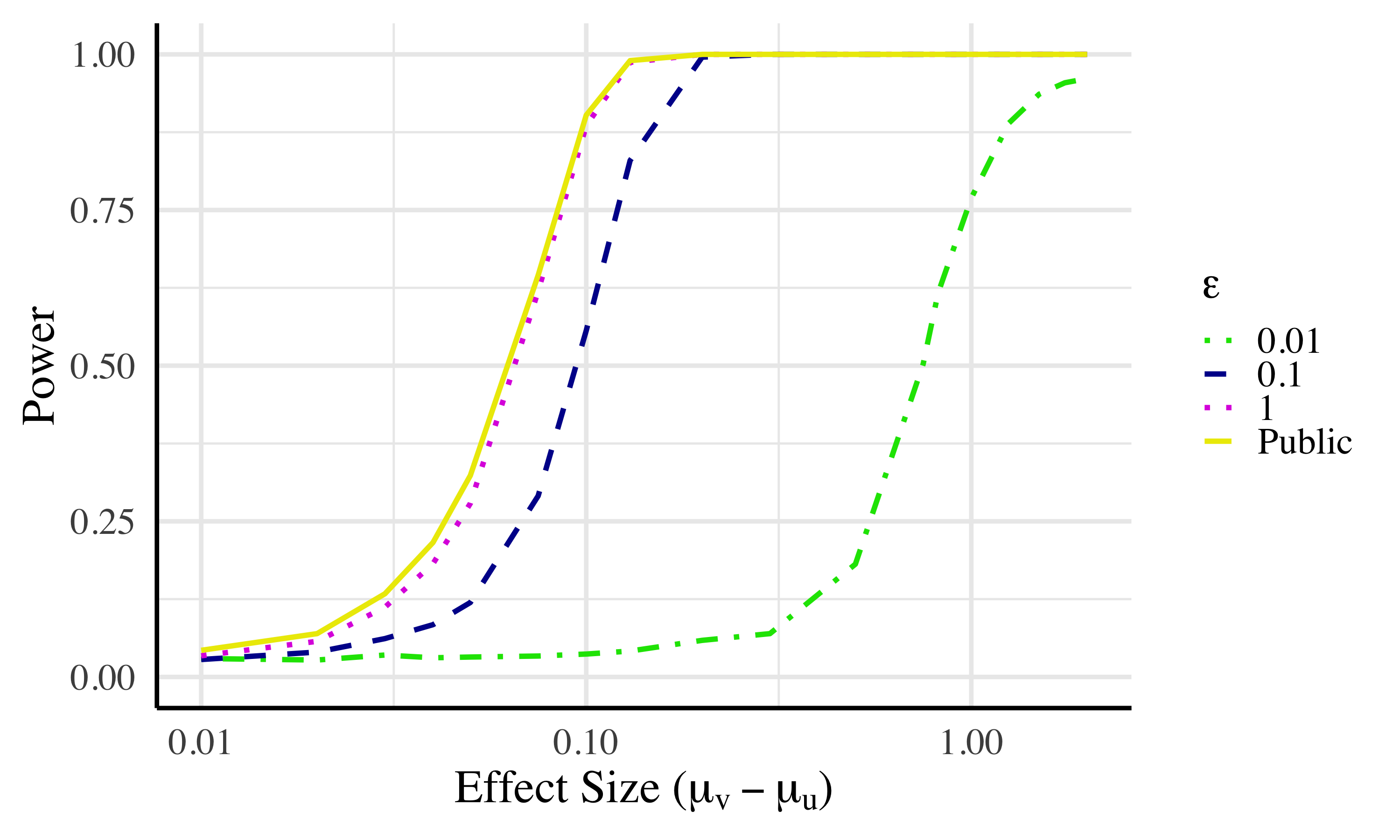}
    \caption{Power of \privWPP at various values of $\epsilon$ and effect size. ($\alpha = .05$; $n = 2500$; normally distributed sample data)}
    \label{fig:varying_effect_size_wc}
\end{figure}

\paragraph{Application to Real-World Data}\label{sec:realdata} We now demonstrate the use of our Wilcoxon algorithm on real-world data.  We use a database of New York City tax ride information released in 2014.  The database contains information on every Yellow Taxi ride in New York City in 2013, and its release resulted in high-profile de-anonymization attacks \cite{NYCData}.  Our main finding is that (at least for some natural analyses that we attempt) a differentially private query interface would have been sufficient, and the release of the data set was unnecessary.  We also again compare our statistical power to that of the TC tests and find it superior.\footnote{This data set is the same one Task and Clifton originally use for evaluating their test.}

This dataset contains several variables of interest for every Yellow Taxi ride in New York City in 2013, of which the following will be most useful:

\begin{itemize}
  \item Hack License: a unique identifier for every taxi driver in the city
  \item Number of Passengers: how many people rode together in the taxi
  \item Trip Time: the duration of the ride in seconds
  \item Trip Distance: the total distance traveled during the ride
\end{itemize}

We subsetted this dataset, initially, to include rides occuring on January 1st and 2nd of 2013. Then, we only kept rides given by a driver who drove on both the 1st and the 2nd.  In one data set, $u_i$ and $v_i$ were the average trip time on the 1st and 2nd, respectively.  The Wilcoxon test can then be used to test whether trip time varied between the two days. We calculated two other data sets similarly for trip distance and number of passengers.

The resulting data sets have 17,066 entries.  We sampled (with replacement) $10^5$ different smaller datasets of size $n=400$.  Finally, we ran each algorithm on each of these resulting data sets and report the proportion in which a significant result (at $\alpha = .05$) was found. The results are summarized in Figure \ref{fig:App}. 

\begin{figure} [!htb] 
    \centering
    \includegraphics[width=\linewidth]{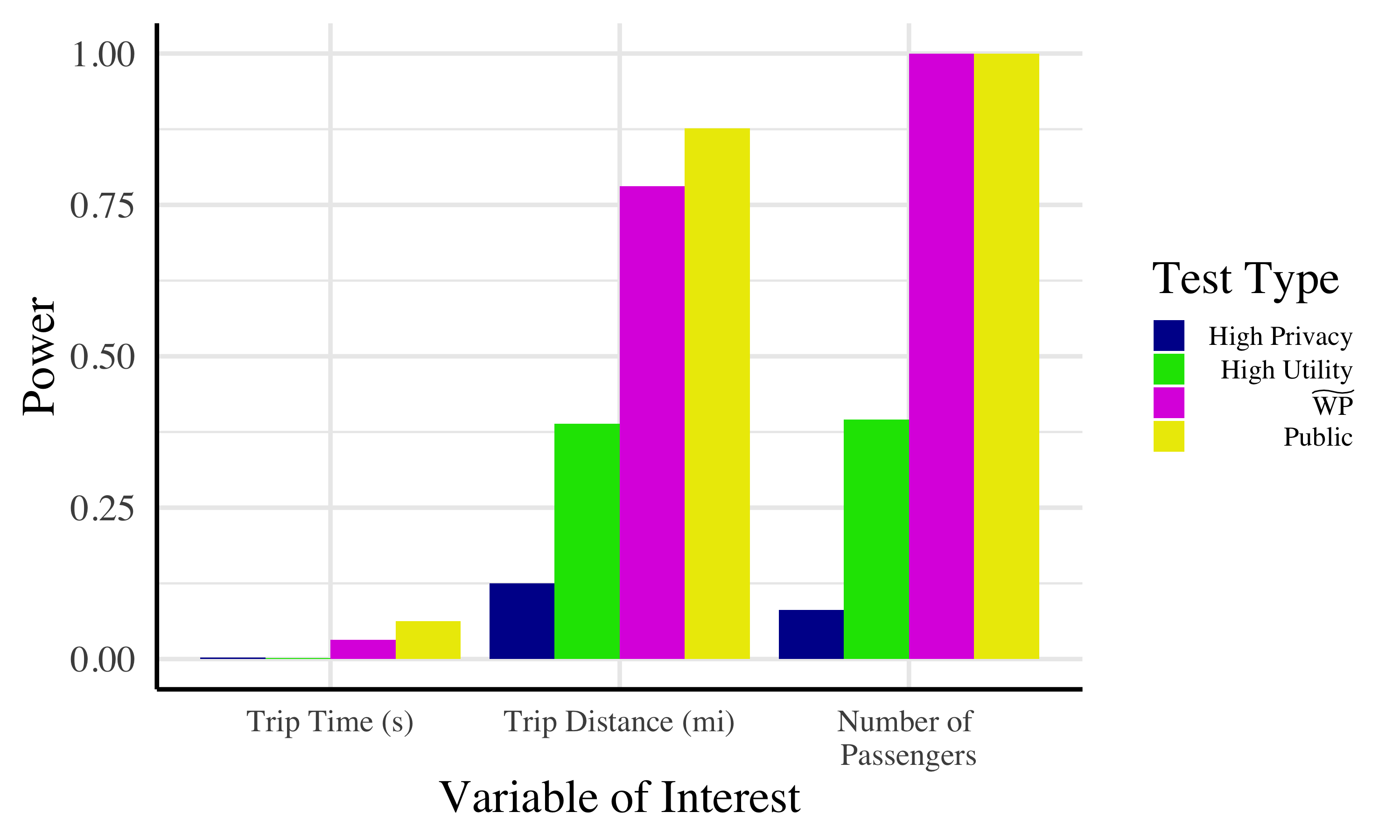}
    \caption{Power of the TC tests, \privWPP, and the public test at sample sizes of 400, sampled with replacement from the NYC taxi data. ($\epsilon = 1$; $\alpha = .05$)}
    \label{fig:App}
\end{figure}

The power of our test follows the public test closely, and we see that for a data set as small as 400, we can achieve results with a private query that are almost as useful as a full public release of the data.  The TC algorithms do not achieve this goal, though the high utility variant is still meaningfully useful.  Of course, there will be choices of $n$ and $\epsilon$ for which the gap between our power and the power of the public test is quite large, but our point here is to argue that for even reasonably small data sets a private query interface can be sufficient for many important tasks.

\paragraph{Power Comparison at $\epsilon=.1$} We compare the power of our next algorithm to the previous best at a smaller choice of $\epsilon$.

\begin{figure} [!htb] 
    \centering
    \includegraphics[width=\linewidth]{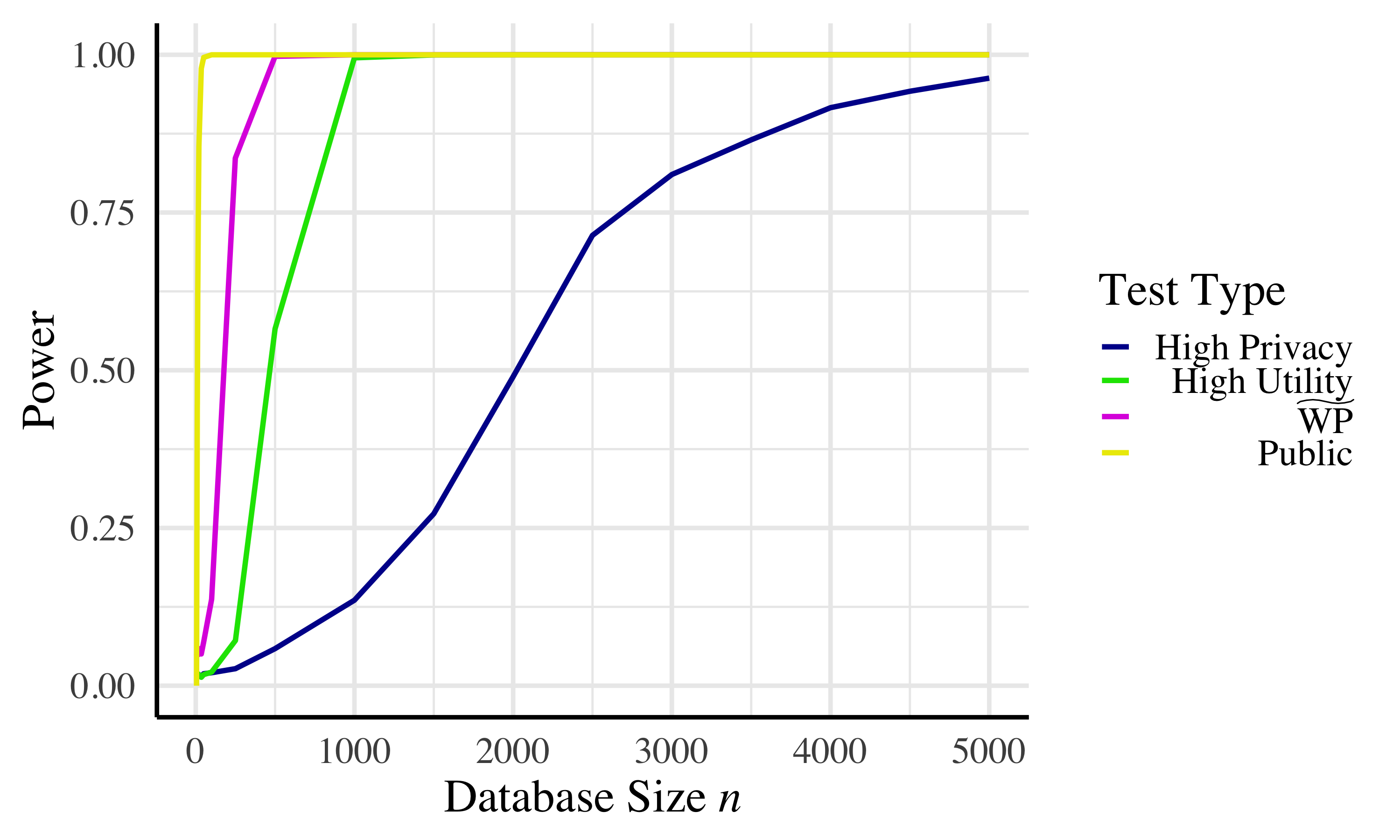}
    \caption{Power comparison of Task and Clifton's algorithms, \privWPP, and the public algorithm at various database sizes $n$. (Effect size: $\mu_u - \mu_v = 1 \sigma$; $\epsilon=.1$; $\alpha=.05$; normally distributed sample data)}
    \label{fig:comp_low_eps}
\end{figure}

\paragraph{Critical Value Tables} We present critical values for several choices of $\epsilon$ and sample size.

\begin{table}[ht]
\centering
\caption{Critical Value Comparison for $n=1000$}\label{tab:n_1000_comp}
\begin{tabular}{llrrr}
  \hline
$\epsilon$ & $\alpha$ & Public & New & TC \\ 
  \hline
1 & 0.1 & 1.282 & 1.296 & 1.763 \\ 
   & 0.05 & 1.645 & 1.665 & 2.174 \\ 
   & 0.025 & 1.960 & 1.984 & 2.594 \\ 
   \hline
0.1 & 0.1 & 1.282 & 2.203 & 5.617 \\ 
   & 0.05 & 1.645 & 2.975 & 6.028 \\ 
   & 0.025 & 1.960 & 3.740 & 6.448 \\ 
   \hline
0.01 & 0.1 & 1.282 & 17.681 & 44.157 \\ 
   & 0.05 & 1.645 & 25.234 & 44.568 \\ 
   & 0.025 & 1.960 & 32.844 & 44.988 \\ 
   \hline
\end{tabular}
\\[10pt]
\caption*{Critical values for \textit{n} = 1000 and several values of $\epsilon$ and $\alpha$.  To allow easy comparison, these values are for a normalized $W$ statistic, i.e., $W$ has been divided by the relevant constant so that it is (before the addition of Laplacian noise) distributed according to a standard normal.}
\end{table}

\begin{table}[ht]
\centering
\caption{New Critical Value Table for $\epsilon$ = $1.0$}
\label{tab:crit_val_eps_1.0}
\begin{tabular}{rrrrr}
  \hline
$n$ & 0.05 & 0.025 & 0.01 & 0.005 \\ 
  \hline
10 & 70 & 83 & 102 & 116 \\ 
  20 & 155 & 183 & 220 & 248 \\ 
  30 & 256 & 299 & 355 & 397 \\ 
  40 & 369 & 429 & 506 & 562 \\ 
  50 & 494 & 572 & 670 & 742 \\ 
  75 & 854 & 984 & 1143 & 1257 \\ 
  100 & 1271 & 1460 & 1690 & 1853 \\ 
  200 & 3402 & 3895 & 4486 & 4900 \\ 
  300 & 6127 & 7012 & 8069 & 8798 \\ 
  400 & 9335 & 10679 & 12276 & 13382 \\ 
  500 & 12978 & 14845 & 17061 & 18592 \\ 
  1000 & 36235 & 41443 & 47637 & 51906 \\ 
   \hline
\end{tabular}
\\[10pt]
\caption*{Critical values at several sample sizes $n$ and two-sided significance levels $\alpha$. To calcuate these values, we run 10 million simulations for each parameter combination and compute the $1$ $-$ $\alpha$th percentile of the absolute value of the distribution.}
\end{table}

\begin{table}[ht]
\centering
\caption{New Critical Value Table for $\epsilon$ = $0.1$}
\label{tab:crit_val_eps_0.1}
\begin{tabular}{rrrrr}
  \hline
$n$ & 0.05 & 0.025 & 0.01 & 0.005 \\ 
  \hline
10 & 600 & 739 & 922 & 1061 \\ 
  20 & 1202 & 1479 & 1846 & 2123 \\ 
  30 & 1806 & 2220 & 2770 & 3185 \\ 
  40 & 2413 & 2968 & 3704 & 4261 \\ 
  50 & 3018 & 3713 & 4628 & 5324 \\ 
  75 & 4541 & 5577 & 6954 & 7989 \\ 
  100 & 6073 & 7461 & 9294 & 10677 \\ 
  200 & 12328 & 15098 & 18767 & 21531 \\ 
  300 & 18733 & 22892 & 28391 & 32519 \\ 
  400 & 25296 & 30837 & 38193 & 43736 \\ 
  500 & 32054 & 38979 & 48128 & 55083 \\ 
  1000 & 68258 & 82120 & 100408 & 114230 \\ 
   \hline
\end{tabular}
\\[10pt]
\caption*{Critical values at several sample sizes $n$ and two-sided significance levels $\alpha$. To calcuate these values, we run 10 million simulations for each parameter combination and compute the $1$ $-$ $\alpha$th percentile of the absolute value of the distribution.}
\end{table}

\begin{table}[ht]
\centering
\caption{New Critical Value Table for $\epsilon$ = $0.01$}
\label{tab:crit_val_eps_.01}
\begin{tabular}{rrrrr}
  \hline
$n$ & 0.05 & 0.025 & 0.01 & 0.005 \\ 
  \hline
10 & 5992 & 7377 & 9209 & 10596 \\ 
  20 & 11971 & 14742 & 18416 & 21196 \\ 
  30 & 17976 & 22137 & 27644 & 31774 \\ 
  40 & 23974 & 29516 & 36877 & 42425 \\ 
  50 & 29964 & 36905 & 46081 & 53034 \\ 
  75 & 44933 & 55371 & 69105 & 79513 \\ 
  100 & 59921 & 73792 & 92066 & 106005 \\ 
  200 & 119902 & 147619 & 184222 & 212010 \\ 
  300 & 179942 & 221477 & 276678 & 317895 \\ 
  400 & 239695 & 295106 & 368374 & 423528 \\ 
  500 & 299627 & 368763 & 460256 & 529522 \\ 
  1000 & 600096 & 738071 & 921529 & 1061150 \\ 
   \hline
\end{tabular}
\\[10pt]
\caption*{Critical values at several sample sizes $n$ and two-sided significance levels $\alpha$. To calcuate these values, we run 10 million simulations for each parameter combination and compute the $1$ $-$ $\alpha$th percentile of the absolute value of the distribution.}
\end{table}

\clearpage

\subsection{T-Test}\label{sec:t_appendix}
\paragraph{Allotment of Privacy Budget} This section serves to briefly justify our choice of the optimum proportion of the privacy budget to allot to estimating $\widehat{\bar{x}}$ and $\widehat{s^2}$.

\begin{figure} [!htb] 
    \centering
    \includegraphics[width=\linewidth, keepaspectratio]{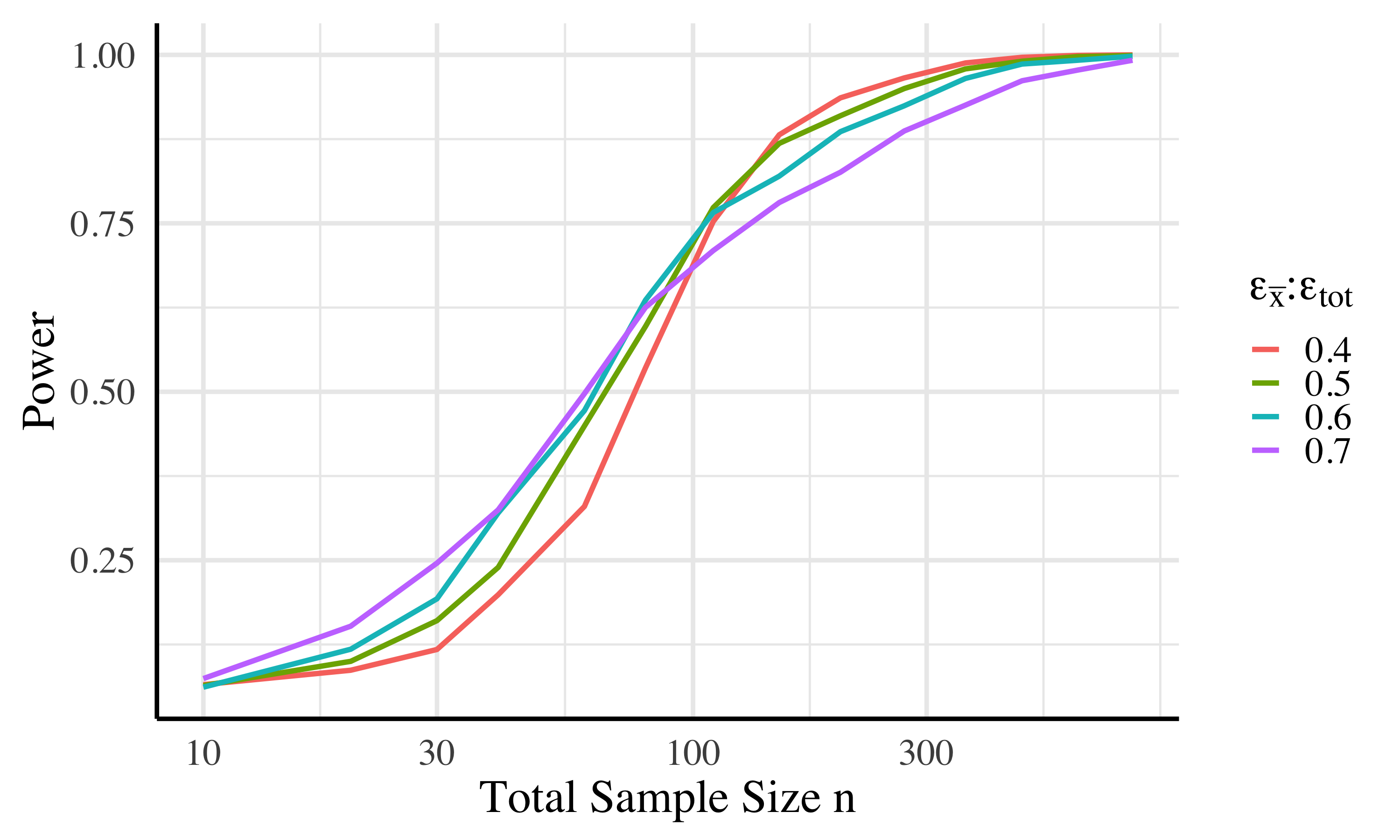}
    \caption{Power at various values of proportion of the total privacy budget allotted to $\widehat{\bar{x}}$ and $\widehat{s^2}$. (Effect size: $\mu_{1} - \mu_{2} = 1 \sigma$; $\epsilon = 1$; $\alpha=.05$}\label{fig:new_t_eps_allotment}
\end{figure}

As evidenced in Figure \ref{fig:new_t_eps_allotment}, unlike \privMWP, the optimum proportion of $\epsilon$ to allot to either statistic is not obvious; certain proportions offer distinct advantages depending on parameterizations. To most objectively choose our allotment of the privacy budget, we refer to the common 80\% power rule used in determining sample size requirements in biological studies and consider the $\epsilon$ allotment that reaches 80\% power most quickly to be optimum. Resultantly, we choose to allot equal proportions of the total privacy budget to estimating $\widehat{\bar{x}}$ and $\widehat{s^2}$.

\end{document}